\newtheorem{lemma}{Lemma}
\newtheorem{proposition}{Proposition}
\newtheorem{corollary}[proposition]{Corollary}
\theoremstyle{definition}
\newtheorem{definition}{Definition}
\theoremstyle{remark}
\def\E{{\mathrm{E}}}
\title{Optimal and Robust Disclosure of Public Information%
\thanks{This paper builds on an earlier version entitled ``Optimal Disclosure of Public Information with Endogenous Acquisition of Private Information,'' which focused on optimal disclosure.
I thank seminar participants for valuable discussions and comments at the University of Tokyo, Shanghai University of Finance and Economics, Otaru University of Commerce, Roy-Adres Seminar Economic Theory, and Econometric Society 2015 World Congress. 
I acknowledge financial support by MEXT, Grant-in-Aid for Scientific Research (15K03348, 18H05217). }} 
\author{Takashi Ui
\\Hitotsubashi University\\\texttt{oui@econ.hit-u.ac.jp}}
\date{March 2022}
\begin{document}

\maketitle

\begin{abstract}
A policymaker discloses public information to interacting agents who also acquire costly private information.  
More precise public information reduces the precision and cost of acquired private information. 
Considering this effect, what disclosure rule should the policymaker adopt? We address this question under two alternative assumptions using a linear-quadratic Gaussian game with arbitrary quadratic material welfare and convex information costs. 
First, the policymaker knows the cost of private information and adopts an optimal disclosure rule to maximize the expected welfare. 
Second, the policymaker is uncertain about the cost and adopts a robust disclosure rule to maximize the worst-case welfare. 
Depending on the elasticity of marginal cost,  
an optimal rule is qualitatively the same as that in the case of either a linear information cost or exogenous private information. 
The worst-case welfare is strictly increasing if and only if full disclosure is optimal under some information costs, which provides a new rationale for central bank transparency.\\
\noindent\textit{JEL classification}: C72, D82, E10. 
\newline 
\noindent\textit{Keywords}: public information; private information; crowding-out effect; linear quadratic Gaussian game; optimal disclosure; robust disclosure; information cost.

\end{abstract}

\newpage

\section{Introduction}

Consider a policymaker (such as a central bank) who discloses public information to interacting agents (such as firms and consumers) who also acquire costly private information. 
The policymaker's concern is social welfare, including the agents' cost of information acquisition.  
When the policymaker provides more precise public information, the agents have less incentive to acquire private information, reducing its precision and cost. 
This effect of public information is referred to as the crowding-out effect \citep{colombofemminis2014}. 
Less private information can be harmful to welfare, but less information cost is beneficial; that is, the welfare implication of the crowding-out effect is unclear. 
Then, what disclosure rule should the policymaker adopt?

We address this question under two alternative assumptions. 
First, the policymaker correctly anticipates the agents' acquisition of private information. 
A disclosure rule is optimal if it maximizes the expected welfare.
Second, the policymaker is uncertain about the precision and cost of acquired private information.  
A disclosure rule is robust if it maximizes the worst-case welfare. 
This article obtains both rules and discusses the difference between them.

Our model is a three-stage game based on a symmetric linear-quadratic Gaussian (LQG) game with a continuum of agents, where a payoff function is quadratic and an information structure is Gaussian.  
In period 1, the policymaker chooses the precision of public information. 
Social welfare (i.e., the policymaker's objective function) is a material benefit minus the agents' information cost, and the material benefit is an arbitrary quadratic function such as the aggregate payoff. 
In period~2, each agent chooses the precision of (conditionally independent) private information. The cost of information is increasing and convex in the precision of private information. 
In period 3, each agent observes private and public signals and chooses an action. 
The last-period subgame is the game studied by \citet{angeletospavan2007}, and the second-period subgame is the game studied by \citet{colombofemminis2014}. 

In our analysis, volatility (the variance of the average action) and dispersion (the variance of individual actions around the average action) play an essential role in two ways. 
First, the welfare (in the equilibrium of the second-period subgame) is represented as a linear combination of the volatility and the dispersion minus the cost of information. 
Next, our key tool is marginal welfare with respect to dispersion (MWD). 
The MWD is the change in the welfare resulting from an increase in the precision of public information that induces a one-unit decrease in the dispersion. 
Indeed, the dispersion decreases with public information because it equals the difference between the variance and the covariance of individual actions, and more precise public information brings them closer. 
In contrast, the volatility increases with public information because it equals the covariance of individual actions, and more precise public information increases the correlation coefficient and thus the covariance. 

The MWD has a useful representation: it equals the weighted average of the MWDs under a linear information cost and under exogenous private information, where the relative weights are one and the elasticity of marginal information cost, respectively. 
The elasticity of marginal cost measures the degree of convexity in the cost function. 
It takes the minimum value of zero when the cost function is linear, in which case the crowding-out effect is largest. 
Therefore, the social value of public information has the same sign as that under the largest crowding-out effect 
if the elasticity is small enough and under no crowding-out effect otherwise. 

This result leads us to identify an optimal disclosure rule, which qualitatively coincides with the case of either the largest crowding-out effect (a linear information cost) or no crowding-out effect (exogenous private information), depending on the elasticity of marginal cost. 
In both cases, the following holds. 
The welfare necessarily increases with public information if the volatility's coefficient is positive and relatively large compared to the dispersion's one, in which case full disclosure is optimal.
On the other hand, the welfare necessarily decreases with public information if the volatility's coefficient is negative and relatively small compared to the dispersion's one, in which case no disclosure is optimal.
This result is attributed to the fact that more precise public information increases the volatility and decreases the dispersion.  
In other cases, partial disclosure can also be optimal. 


We next consider a robust disclosure rule. The policymaker is uncertain about the agents' cost function and evaluates the precision using the worst-case welfare, which is the infimum of the expected welfare over the collection of all convex cost functions. 
The worst-case cost function must be linear because the total cost under a linear cost function is highest among all convex cost functions with the same marginal cost at the same equilibrium precision of private information. 
Based on this observation, we show that the worst-case welfare necessarily increases with public information if and only if 
the volatility's coefficient is positive, in which case full disclosure is robust. 
This is true even when there exists a cost function such that 
the welfare can decrease with public information. 

One example is a Cournot game \citep{vives1988}, where a measure of material benefit is the total profit. Full disclosure is robust, whereas no disclosure can be optimal under a strictly convex cost. 
Another example is a beauty contest game \citep{morrisshin2002}, where a measure of material benefit is the negative of the mean squared error of an action from the state. 
The worst-case welfare necessarily increases with public information, whereas the welfare can decrease under a strictly convex cost or in the case of exogenous private information \citep{morrisshin2002}. 
The detrimental effect of public information in the latter case prompted the debate on central bank transparency and has been challenged by many papers \citep{svensson2006,angeletospavan2004, hellwig2005, cornandheinemann2008, colombofemminis2008}. 
Our result contributes to this debate by proving a new rationale for central bank transparency: more precise public information is beneficial to the worst-case welfare.




This paper is organized as follows. 
Section \ref{the model} introduces the model and discusses equilibria in subgames.  
We identify optimal disclosure in Section \ref{Information disclosure with known information costs} and robust disclosure in Section \ref{Robust disclosure section}. 
Section \ref{Cournot} is devoted to applications to a Cournot game and a beauty contest game. 
The last section concludes the paper. 
All proofs are relegated to the appendices. 

\subsection{Related literature}\label{Related literature}

This paper is related to two strands of the literature on Bayesian persuasion and information design.\footnote{See survey papers by \citet{morrisbergemann2017} and \citet{kamenica2019} among others.} 
The first incorporates costly information acquisition by a receiver into the Bayesian persuasion framework \citep{kamenicagentzkow2011}. 
\citet{lipnowskietal2020a} and \citet{bloedelsegal2020} consider a rationally inattentive receiver who can learn less information than what a sender provides, with substantially different modeling assumptions.\footnote{In \citet{bloedelsegal2020}, a receiver learns about the state by paying costly attention to the sender's signals. In \citet{lipnowskietal2020a}, a receiver acquires information about the state itself, but it is less than the information provided by the sender in the sense of Blackwell, followed by \citet{wei2021} and \citet{lipnowskietal2020a}.}
On the other hand, 
\citet{nizzottoetal2020} and \citet{matyskovamontes2021} consider a receiver who can acquire additional information after receiving a sender's signal, which is closer in spirit to our model. 
One of their findings is that additional information acquisition can be detrimental to the sender and the receiver. 
In the aforementioned papers, 
the sender's payoff is independent of the receiver's information cost.
In our paper, the sender's concern is social welfare, including the information cost.

The second strand of the literature incorporates the sender's ambiguity about the receiver's sources of information. 
In \citet{kosterina2021}, the sender and receiver have different priors, and the receiver's prior is unknown to the sender. 
In \citet{huweng2021} and \citet{dworczakpavan2020}, they have a common prior, but the receiver may exogenously observe additional private information that is unknown to the sender. 
\citet{huweng2021} characterize the worst-case optimal signals under various degrees of ambiguity. 
\citet{dworczakpavan2020} focus on the case of full ambiguity and introduce a novel lexicographic approach to a robust solution of Bayesian persuasion. 
The sender first identifies all signals that are worst-case optimal. 
From these signals, the sender chooses the one that maximizes the sender's payoff under some conjecture (e.g., there is no private information). 
In the papers mentioned above, the receiver's sources of information are assumed to be exogenous. 
In our paper, the sender faces ambiguity about the receivers' costly information acquisition. Thus, not only the precision but also the cost of private information is unknown to the sender. 

While the models in the above papers are based on the Bayesian persuasion framework, our model is an extension of a symmetric LQG game\footnote{Its origin goes back to the seminal paper by \citet{radner1962}. See \citet{vives2008}.} with a continuum of agents  \citep{angeletospavan2007}. \citet{morrisbergemann2013} characterize the set of all Bayes correlated equilibria in this game and advocate a problem of finding optimal ones.\footnote{\citet{ui2020} considers information design in general LQG games by formulating it as semidefinite programming.}  
\citet{uiyoshizawa2015} solve the problem by adopting a quadratic objective function. 
They also study optimal disclosure of public information in the case of exogenous private information. 
On the other hand, the present paper considers the case of endogenous private information by incorporating the information provision stage into the model of \citet{colombofemminis2014}, which integrates the symmetric LQG game and information acquisition with a convex information cost. 
A Cournot game with a linear cost \citep{lietal1987,vives1988} and a beauty contest game with a linear or convex cost \citep{colombofemminis2008, ui2014} are special cases of the model in \citet{colombofemminis2014}. 
\citet{colombofemminis2014} compare the equilibrium precision of private information with the socially optimal one, where the aggregate net payoff is a measure of welfare. 
They show that the social value of public information is less than that in the case of exogenous private information if and only if the equilibrium precision of private information is inefficiently low. 
They also provide a sufficient condition for the social value to be positive. 
In contrast, the present study gives a necessary and sufficient condition and obtains optimal and robust disclosure of public information.\footnote{Other models of information acquisition in symmetric LQG games are studied by \citet{mackowiakwiederholt2009}, \citet{hellwigveldkamp2009}, \citet{myattwallace2012,myattwallace2015}, \citet{denti2020}, \citet{rigos2020}, and \citet{hebertlao2021}, among others.}\footnote{\citet{myattwallace2019} and \citet{leister2020} study information acquisition in network LQG games.}

\section{The model}\label{the model}

There are a policymaker and a continuum of agents indexed by $i\in [0,1]$. 
The policymaker knows a random variable $\theta$ parametrizing the underlying state and provides agents with public information about $\theta$. 
At the same time, each agent endogenously acquires private information about $\theta$. 
More specifically, we consider the following three-period setting. In period 1, the policymaker chooses the precision of public information. 
In period 2, each agent chooses the precision of private information given that of public information. 
In period 3, each agent observes private and public signals and chooses an action in a symmetric linear quadratic Gaussian (LQG) game.

Let $y=\theta+\varepsilon_y$ and $x_i=\theta+\varepsilon_i$ 
be public and private signals of agent $i$, respectively, where $\varepsilon_y$, $\varepsilon_i$, and $\theta$ are independently and normally distributed with 
\[
\E[ \theta]=\bar\theta,\ 
\E[\varepsilon_y] =
\E [\varepsilon_i]=0,\ \mathrm{var}[ \theta]=\tau_\theta^{-1},\  
\mathrm{var}[\varepsilon_y]=\tau_y^{-1},\ 
\mathrm{var}[\varepsilon_i]=\tau_i^{-1}. 
\]
We refer to $\tau_y$ and $\tau_i$ as the precision of public information and that of private information, respectively. 
The policymaker chooses $\tau_y$ in period 1 at no cost. 
 Agent $i$ chooses $\tau_i$ in period 2 at a cost of $C(\tau_i)$. 
We assume that $C(\tau_i)$ is a strictly increasing convex function: $C(0)=0$, $C'(\tau_i)>0$, and $C''(\tau_i)\geq 0$.  
We denote the elasticity of marginal cost by $\rho(\tau_i)\equiv\tau_iC''(\tau_i)/C'(\tau_i)\geq 0$, which will play an essential role in our analysis.  
 


In period 3, agent $i$ chooses a real number $a_i \in \mathbb{R}$ as his action. 
We write $a=(a_i)_{i\in [0,1]}$ and $a_{-i}=(a_j)_{j\neq i}$. 
Agent $i$'s payoff to $a$ is symmetric and quadratic in $a$ and $\theta$: 
\begin{align}
u_i({a}, \theta)=&-a_i^2+2\alpha a_i \int_0^1 a_jdj+2\beta \theta a_i+h(a_{-i},\theta), \label{payoff function cont}
\end{align}
where $\alpha, \beta \in \mathbb{R}$ are constant and $h(a_{-i},\theta)$ is a measurable function. 
The best response is the interim expected value of a linear combination of the average action and the state:
\begin{equation}
\E\Big[\alpha \int a_jdj+\beta \theta\Big| x_i,y\Big].\label{linear best response}	
\end{equation}
This game exhibits strategic complementarity if $\alpha>0$ and strategic substitutability if $\alpha<0$. 
We assume $\alpha<1$, which guarantees the existence and uniqueness of a symmetric equilibrium in the last-period subgame when $\tau_i=\tau_j$ for all $i\neq j$. We also assume $\beta>0$ without loss of generality. 
Agent $i$ chooses $\tau_i$ and $a_i$ in periods 2 and 3 to maximize the expected value of his payoff function $u_i(a,\theta)-C(\tau_i)$. 



A welfare function is given by a material benefit minus the aggregate information cost 
\begin{equation}
v(a,\theta)-\int C(\tau_j)dj, \notag\label{bp's payoff1}
\end{equation}
where $v(a,\theta)$ is symmetric and quadratic in $a$ and $\theta$; that is,  
\begin{equation}
v(a,\theta)=c_1\int_0^1 a_j^2 dj+c_2\left(\int_0^1 a_j dj\right)^2+c_3\theta\int_0^1 a_j dj+c_4\int_0^1 a_j dj+c_5, \label{v function}
\end{equation}
where $c_1, c_2, c_3, c_4, c_5\in \mathbb{R}$ are constant.  
A typical case is the aggregate payoff 
(when $h(a_{-i},\theta)$ is a quadratic function), as considered by \citet{colombofemminis2014}. 
The policymaker chooses $\tau_y$ to maximize the expected welfare in period~1.

\subsection*{The subgames}

The last-period subgame has the following symmetric equilibrium \citep{angeletospavan2007}.\footnote{See \citet{uiyoshizawa2012} for the relationship between this result and the result of \citet{radner1962}.}

\begin{lemma}\label{AP lemma}
When $\tau_i=\tau_x$ for all $i$, the last-period subgame has a unique symmetric equilibrium. 
Agent $i$'s equilibrium strategy is 
$
\sigma_i(x_i,y)=b_x (x_i-\bar\theta)+b_y (y-\bar\theta)+{\beta\bar\theta}/({1-\alpha})$,
where
\begin{equation}
b_x=\frac{\beta}{(1-\alpha ) \tau_x+\tau_y+\tau_{\theta }}   \cdot\tau_x,\ 
b_y=\frac{\beta}{(1-\alpha ) \tau_x+\tau_y+\tau_{\theta }}\cdot\frac{  \tau_y}{  1-\alpha }.
\notag\label{eq coefficient}
\end{equation}
\end{lemma}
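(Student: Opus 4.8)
The plan is to proceed by the standard guess-and-verify method for symmetric LQG games, exploiting the fact that the best response \eqref{linear best response} is an interim expectation of a linear combination of the average action and the state. First I would conjecture a symmetric affine strategy $\sigma(x_i,y)=b_x(x_i-\bar\theta)+b_y(y-\bar\theta)+b_0$ with coefficients $b_x,b_y,b_0$ to be determined, and compute the induced average action. Because the idiosyncratic errors $\varepsilon_j$ are conditionally independent across agents given $\theta$, an exact law of large numbers gives $\int_0^1(x_j-\bar\theta)\,dj=\theta-\bar\theta$ almost surely, so that $\int_0^1 a_j\,dj=b_x(\theta-\bar\theta)+b_y(y-\bar\theta)+b_0$ is a deterministic affine function of $(\theta,y)$.

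Next I would evaluate the best response by Gaussian updating. Since $\theta$, $x_i$, and $y$ are jointly normal with prior precision $\tau_\theta$ and signal precisions $\tau_x$ and $\tau_y$, the posterior mean is the precision-weighted average
\[
\E[\theta\mid x_i,y]=\frac{\tau_\theta\bar\theta+\tau_x x_i+\tau_y y}{\tau_\theta+\tau_x+\tau_y}.
\]
Substituting the average action and this posterior mean into \eqref{linear best response} expresses agent $i$'s best response as an affine function of $(x_i,y)$. Imposing the fixed-point (equilibrium) requirement that it coincide with the conjectured $\sigma$, and matching the coefficients on $x_i-\bar\theta$, on $y-\bar\theta$, and the constant, yields three scalar equations in $(b_x,b_y,b_0)$. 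The constant equation decouples and gives $b_0=\beta\bar\theta/(1-\alpha)$; the $x_i$-equation is linear in $b_x$ alone and delivers the stated $b_x$ after collecting the term $(1-\alpha)\tau_x$; finally, substituting $b_x$ into the $y$-equation and dividing by $1-\alpha$ gives $b_y$. This is routine algebra and I would not belabor it.

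The remaining, and more delicate, task is uniqueness. The difference $d$ of any two symmetric equilibrium strategies satisfies the homogeneous equation $d=\alpha\,Pd$, where $P$ first averages a function of $(x_j,y)$ over the idiosyncratic noise to obtain a function of $(\theta,y)$ and then takes the conditional expectation given $(x_i,y)$. Each of these operations is an orthogonal projection (a conditional expectation) on the relevant $L^2$ space, so their composition $P$ is a positive semidefinite self-adjoint operator whose spectrum lies in $[0,1]$. Hence the eigenvalues of $\alpha P$ lie between $0$ and $\alpha$, which is strictly below $1$, so $I-\alpha P$ is invertible and $d=0$; this is exactly where the maintained assumption $\alpha<1$ is used, and it simultaneously confirms that every symmetric equilibrium is affine. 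The main obstacle I anticipate is therefore not the coefficient matching but the justification of the two aggregation-and-projection steps: establishing the exact law of large numbers for the continuum of conditionally independent signals, and verifying the spectral bound that rules out non-affine equilibria. For both I would appeal to the construction in \citet{angeletospavan2007}.
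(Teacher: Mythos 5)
The paper does not actually prove Lemma \ref{AP lemma}: it is imported from \citet{angeletospavan2007} (with a footnote pointing to Ui and Yoshizawa's note relating it to Radner's team theorem), and the appendix contains no argument for it. Your proposal thus differs in route only in that you supply a self-contained proof where the paper supplies a citation, and the proof you give is correct and is, in substance, the standard one underlying the cited results. The coefficient matching checks out: with $\tau\equiv\tau_\theta+\tau_x+\tau_y$, the fixed point gives $b_x=(\alpha b_x+\beta)\tau_x/\tau$, hence $b_x=\beta\tau_x/((1-\alpha)\tau_x+\tau_y+\tau_\theta)$; since then $\alpha b_x+\beta=\beta\tau/((1-\alpha)\tau_x+\tau_y+\tau_\theta)$, the $y$-equation $(1-\alpha)b_y=(\alpha b_x+\beta)\tau_y/\tau$ delivers exactly the stated $b_y$, and the constant equation gives $b_0=\beta\bar\theta/(1-\alpha)$. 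Your uniqueness argument is sound, with two points worth tightening. First, the composition $E_1E_2$ of the two conditional expectations is self-adjoint only once restricted to $L^2(x_i,y)$, the range of $E_1$, where $\langle E_1E_2\,d,e\rangle=\langle E_2 d,E_2 e\rangle$; on that domain positive semidefiniteness and $\|P\|\le 1$ hold exactly as you claim, and the norm $1$ is attained by functions of $y$ alone, which is why $\alpha<1$ is the sharp condition. Second, your phrase ``eigenvalues of $\alpha P$ lie between $0$ and $\alpha$'' tacitly assumes $\alpha\ge 0$, whereas the paper allows $\alpha<0$ (strategic substitutes; the Cournot application has $\alpha=-\delta/2$); the argument survives unchanged, since $P\succeq 0$ gives $I-\alpha P\succeq I$ for $\alpha\le 0$ and $I-\alpha P\succeq(1-\alpha)I$ for $0\le\alpha<1$, so $I-\alpha P$ is invertible and $d=0$ in every case. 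What your route buys over the paper's bare citation is an explicit verification that uniqueness holds among all square-integrable symmetric strategies, not merely among affine ones, at the cost of the continuum-LLN and measurability caveats that you correctly flag and defer to \citet{angeletospavan2007}.
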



In a symmetric equilibrium of the second-period subgame, 
the precision of private information is determined by the following first-order condition \citep{colombofemminis2014}.\footnote{Using the envelope theorem, we can ignore the change in $br_i(\tau_i)$ in evaluating the left-hand side of \eqref{marginal benefit}. Thus, the marginal benefit is reduced to the partial derivative of $\E_{\tau_i}[u_i(\sigma, \theta)]$ with respect to $\tau_i$ holding fixed the equilibrium strategy profile $\sigma$ because $br_i(\tau_x)=\sigma_i$.
When we regard $u_i(\sigma, \theta)$ as a quadratic function of $\varepsilon_i$, the expected value of the linear term equals zero. Thus, the marginal benefit equals 
$\frac{d}{d \tau_i}\E_{\tau_i}[u_i(\sigma, \theta)]\Big|_{\tau_i=\tau_x}
=-\frac{d}{d \tau_i}\E_{\tau_i}[\sigma_i^2]\Big|_{\tau_i=\tau_x}
={b_x^2}/{\tau_x^2}$.
} 
  
\begin{lemma}\label{CF lemma}
Assume that all the opponents choose the same precision $\tau_j=\tau_x$ for all $j\neq i$ in the second-period subgame and follow the unique symmetric equilibrium strategy $\sigma_j$ in the last-period subgame.  
Agent $i$'s marginal benefit of choosing $\tau_i$ evaluated at $\tau_i=\tau_x$ is 
\begin{equation}
\left.\frac{d}{d \tau_i}\E_{\tau_i}[u_i((br_i(\tau_i),\sigma_{-i}), \theta)]\right|_{\tau_i=\tau_x}
=\frac{b_x^2}{\tau_x^2}=\frac{\beta^2 }{\left((1-\alpha)\tau_x+\tau_y+\tau_\theta\right)^2},
\label{marginal benefit}
\end{equation}
where $br_i(\tau_i)\equiv\E_{\tau_i}[\alpha\int \sigma_jdj+\beta\theta|x_i,y]$ is the best response to $\sigma_{-i}=(\sigma_{j})_{j\neq i}$ and $\E_{\tau_i}$ is the expectation operator when player $i$'s precision is $\tau_i$. 
\end{lemma}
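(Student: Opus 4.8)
The plan is to compute the marginal benefit by the envelope theorem, reducing it to a partial derivative in which the agent's action rule is held fixed. Since $br_i(\tau_i)$ is, for each $\tau_i$, the maximizer of $\E_{\tau_i}[u_i((\,\cdot\,,\sigma_{-i}),\theta)]$ over action rules, the envelope theorem lets me drop the indirect effect of $\tau_i$ through $br_i$ and write
\begin{equation}
\left.\frac{d}{d\tau_i}\E_{\tau_i}[u_i((br_i(\tau_i),\sigma_{-i}),\theta)]\right|_{\tau_i=\tau_x}
=\left.\frac{\partial}{\partial\tau_i}\E_{\tau_i}[u_i((\sigma_i,\sigma_{-i}),\theta)]\right|_{\tau_i=\tau_x},\notag
\end{equation}
where on the right the strategy profile $\sigma$ is frozen at its equilibrium form from Lemma \ref{AP lemma} (so the coefficients $b_x,b_y$ are constants evaluated at $\tau_x$) and only the variance $\tau_i^{-1}$ of $\varepsilon_i$ varies. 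This is legitimate because $br_i(\tau_x)=\sigma_i$, so the frozen and optimized rules agree at the point of evaluation.

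Next I would exploit that agent $i$ is infinitesimal. Writing $\sigma_i=b_x\varepsilon_i+Z_i$, where $Z_i=b_x(\theta-\bar\theta)+b_y(y-\bar\theta)+\beta\bar\theta/(1-\alpha)$ collects the terms independent of $\varepsilon_i$, and noting that the aggregate $\int_0^1\sigma_j\,dj$ is unaffected by $\tau_i$ (by the law of large numbers $\int x_j\,dj=\theta$, so the average action does not depend on $\varepsilon_i$), I would regard $u_i(\sigma,\theta)$ as a quadratic in $\varepsilon_i$. Inspecting \eqref{payoff function cont}, only $-\sigma_i^2$ contributes a genuine $\varepsilon_i^2$ term, namely $-b_x^2\varepsilon_i^2$; the terms $2\alpha\sigma_i\int\sigma_j\,dj$, $2\beta\theta\sigma_i$, and the cross term of $-\sigma_i^2$ are linear in $\varepsilon_i$, while $h(\sigma_{-i},\theta)$ is constant in $\varepsilon_i$. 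Since $\varepsilon_i$ has mean zero and is independent of $(\theta,y)$, on which every linear coefficient depends, each linear term has vanishing expectation, so
\begin{equation}
\E_{\tau_i}[u_i(\sigma,\theta)]=-b_x^2\,\E_{\tau_i}[\varepsilon_i^2]+(\text{terms independent of }\tau_i)=-\frac{b_x^2}{\tau_i}+(\text{const}).\notag
\end{equation}

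Differentiating gives $\partial_{\tau_i}\E_{\tau_i}[u_i(\sigma,\theta)]=b_x^2/\tau_i^2$, which at $\tau_i=\tau_x$ equals $b_x^2/\tau_x^2$; substituting $b_x=\beta\tau_x/((1-\alpha)\tau_x+\tau_y+\tau_\theta)$ from Lemma \ref{AP lemma} yields $\beta^2/((1-\alpha)\tau_x+\tau_y+\tau_\theta)^2$, as claimed. I expect the main obstacle to be the rigorous justification of the envelope step rather than the algebra: one must confirm that the objective is differentiable in the action rule, that the first-order condition characterizing $br_i$ holds, and—crucially in the continuum model—that agent $i$'s deviation in $\tau_i$ leaves the aggregate action distribution, and hence both $\sigma_{-i}$ and $\int\sigma_j\,dj$, unchanged. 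Once these are in place, the reduction to the single $\varepsilon_i^2$ term and the differentiation are routine.
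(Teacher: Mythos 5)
Your proposal is correct and follows essentially the same route as the paper, which proves this lemma in a footnote: invoke the envelope theorem (using $br_i(\tau_x)=\sigma_i$) to freeze the strategy profile, view $u_i(\sigma,\theta)$ as quadratic in $\varepsilon_i$ whose linear terms have zero expectation, and reduce the marginal benefit to $-\frac{d}{d\tau_i}\E_{\tau_i}[\sigma_i^2]=b_x^2/\tau_x^2$. Your write-up merely makes explicit the decomposition $\sigma_i=b_x\varepsilon_i+Z_i$ and the law-of-large-numbers fact that the aggregate action is unaffected by agent $i$'s deviation, which the paper leaves implicit.
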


By Lemma \ref{CF lemma}, 
the first-order condition for the precision in a symmetric equilibrium is  
\begin{equation}
\frac{\beta^2 }{\left((1-\alpha)\tau_x+\tau_y+\tau_\theta\right)^2}=
C'(\tau_x). \label{FOC1}
\end{equation}
Note that the marginal benefit is strictly decreasing in $\tau_x$ and $\tau_y$, whereas the marginal cost is increasing in $\tau_x$.
Thus, the equilibrium precision is the unique value of $\tau_x$ solving \eqref{FOC1} if $C'(0)< \beta^2/(\tau_y+\tau_\theta)^2$ and zero if $C'(0)\geq \beta^2/(\tau_y+\tau_\theta)^2$. 
We denote the equilibrium precision by $\phi(\tau_y)$.

If $C$ is linear with a marginal cost $c>0$, i.e., $C(\tau_x)=c\tau_x$, then 
$\phi(\tau_y)$ is given by\footnote{This expression is obtained by \citet{lietal1987} and \citet{vives1988} for Cournot games and \citet{colombofemminis2008} for beauty contest games.} 
\begin{equation}
\phi^0_c(\tau_y)\equiv
\begin{cases}
 \left({\beta}/{\sqrt{c}}-\tau_y-\tau_\theta\right)/(1-\alpha) &\text{ if }c< \beta^2/(\tau_y+\tau_\theta)^2,\\
0 &\text{ if }c\geq \beta^2/(\tau_y+\tau_\theta)^2.
\end{cases}\label{linear cost}
\end{equation}
If $C$ is nonlinear, $\phi(\tau_y)$ does not have a closed-form expression, so we will rely on its inverse $\phi^{-1}(\tau_x)$ in our analysis: for $\tau_x>0$,
\begin{equation}
\phi^{-1}(\tau_x)=-(1-\alpha)\tau_x-\tau_\theta+{\beta}/{\sqrt{C'(\tau_x)}}.\label{phiinv}
\end{equation}

Because $\phi'=(d\phi^{-1}/d\tau_x)^{-1}<0$, an increase in the precision of public information results in a decrease in the precision of private information, as shown by \citet{colombofemminis2014}. 
This effect is referred to as the crowding-out effect of public information on private information,\footnote{A similar effect is also found in \citet{colombofemminis2008}, \citet{wong2008}, \citet{hellwigveldkamp2009}, and \citet{myattwallace2012}, among others. On the other hand, \citet{caidong2021} discuss the crowd-in effect of public information. } which is illustrated in Figure \ref{figure}.    
It is known that the crowding-out effect is largest when $C$ is linear, as stated in the following lemma \citep{ui2014}. 
\begin{lemma}\label{mitigate lemma}
Let $\phi(\tau_y)$ be the equilibrium precision under a strictly convex cost function. 
If $\phi(\tau_y)=\phi^0_{c}(\tau_y)>0$, then
$d\phi^0_c(\tau_y)/d\tau_y<d\phi(\tau_y)/d\tau_y<0$, where $\phi^0_{c}(\tau_y)$ is the equilibrium precision under a linear cost function with a marginal cost $c>0$. 
\end{lemma}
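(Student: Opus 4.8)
The plan is to work with the inverse map $\phi^{-1}$ in \eqref{phiinv} and to exploit the identity $\phi'=(d\phi^{-1}/d\tau_x)^{-1}$, since $\phi^{-1}$ has a clean closed form while $\phi$ does not. First I would differentiate \eqref{phiinv} for the strictly convex cost $C$. Writing $\phi^{-1}(\tau_x)=-(1-\alpha)\tau_x-\tau_\theta+\beta\,(C'(\tau_x))^{-1/2}$, the chain rule gives
\[
\frac{d\phi^{-1}}{d\tau_x}=-(1-\alpha)-\frac{\beta\,C''(\tau_x)}{2\,(C'(\tau_x))^{3/2}}.
\]
For the linear cost $C(\tau_x)=c\tau_x$ one has $C''\equiv 0$, so the analogous derivative of $(\phi^0_c)^{-1}$ is simply $-(1-\alpha)$.

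The key observation is that at the point where the two equilibria coincide the marginal costs must agree. Indeed, set $\tau_x^\ast\equiv\phi(\tau_y)=\phi^0_c(\tau_y)>0$. Both precisions solve the first-order condition \eqref{FOC1} at the same $\tau_y$ and the same $\tau_x^\ast$, so the common left-hand side $\beta^2/((1-\alpha)\tau_x^\ast+\tau_y+\tau_\theta)^2$ equals $C'(\tau_x^\ast)$ for the convex cost and equals $c$ for the linear cost; hence $C'(\tau_x^\ast)=c$. Substituting this into the two derivative expressions evaluated at $\tau_x^\ast$, and using $C''(\tau_x^\ast)>0$ (strict convexity) together with $\beta>0$ and $c>0$, I obtain
\[
\frac{d\phi^{-1}}{d\tau_x}\bigg|_{\tau_x^\ast}=-(1-\alpha)-\frac{\beta\,C''(\tau_x^\ast)}{2\,c^{3/2}}<-(1-\alpha)=\frac{d(\phi^0_c)^{-1}}{d\tau_x}\bigg|_{\tau_x^\ast}<0,
\]
where the last inequality uses $\alpha<1$.

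It remains to pass back to the derivatives of $\phi$ and $\phi^0_c$ by reciprocation, and this sign bookkeeping is the one place that requires care and is the main (if modest) obstacle. Both inverse derivatives above are negative, and the convex one is strictly more negative; since $x\mapsto 1/x$ is order-reversing on the negative half-line, taking reciprocals yields
\[
\frac{d\phi^0_c}{d\tau_y}=\bigg(\frac{d(\phi^0_c)^{-1}}{d\tau_x}\bigg|_{\tau_x^\ast}\bigg)^{-1}<\bigg(\frac{d\phi^{-1}}{d\tau_x}\bigg|_{\tau_x^\ast}\bigg)^{-1}=\frac{d\phi}{d\tau_y}<0,
\]
which is exactly the claimed chain of inequalities. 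The whole argument hinges on the coincidence $C'(\tau_x^\ast)=c$ forced by \eqref{FOC1}; without it one could not compare the two inverse derivatives at a common base point.
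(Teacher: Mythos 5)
Your proof is correct. Note, however, that the paper itself contains no proof of this lemma to compare against: it imports the result from \citet{ui2014}, which is why no argument appears in the appendices. Your argument is exactly the one the paper's own machinery suggests: you differentiate the closed form \eqref{phiinv}, use the identity $\phi'=(d\phi^{-1}/d\tau_x)^{-1}$ that the paper states immediately after \eqref{phiinv}, and---this is the right key step---observe that the coincidence hypothesis $\phi(\tau_y)=\phi^0_c(\tau_y)=\tau_x^\ast>0$ forces $C'(\tau_x^\ast)=c$ because both precisions satisfy the common first-order condition \eqref{FOC1} at the same $(\tau_x^\ast,\tau_y)$; the slope of $(\phi^0_c)^{-1}$ you use is consistent with \eqref{linear cost}, and your reciprocation step correctly accounts for the order reversal of $x\mapsto 1/x$ on the negative half-line. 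The only point worth flagging is that the strict inequality requires $C''(\tau_x^\ast)>0$ at the coincidence point itself: a twice-differentiable strictly convex function can have $C''=0$ at isolated points, where your comparison would degenerate to equality. This is harmless under the paper's implicit convention (it treats strictly convex costs as those with elasticity $\rho=\tau_x C''/C'>0$), but stating that you take strict convexity to mean $C''>0$ pointwise would close the loophole.
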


\begin{figure}
\centering
\includegraphics[width=7cm, bb=0 0 450 279]{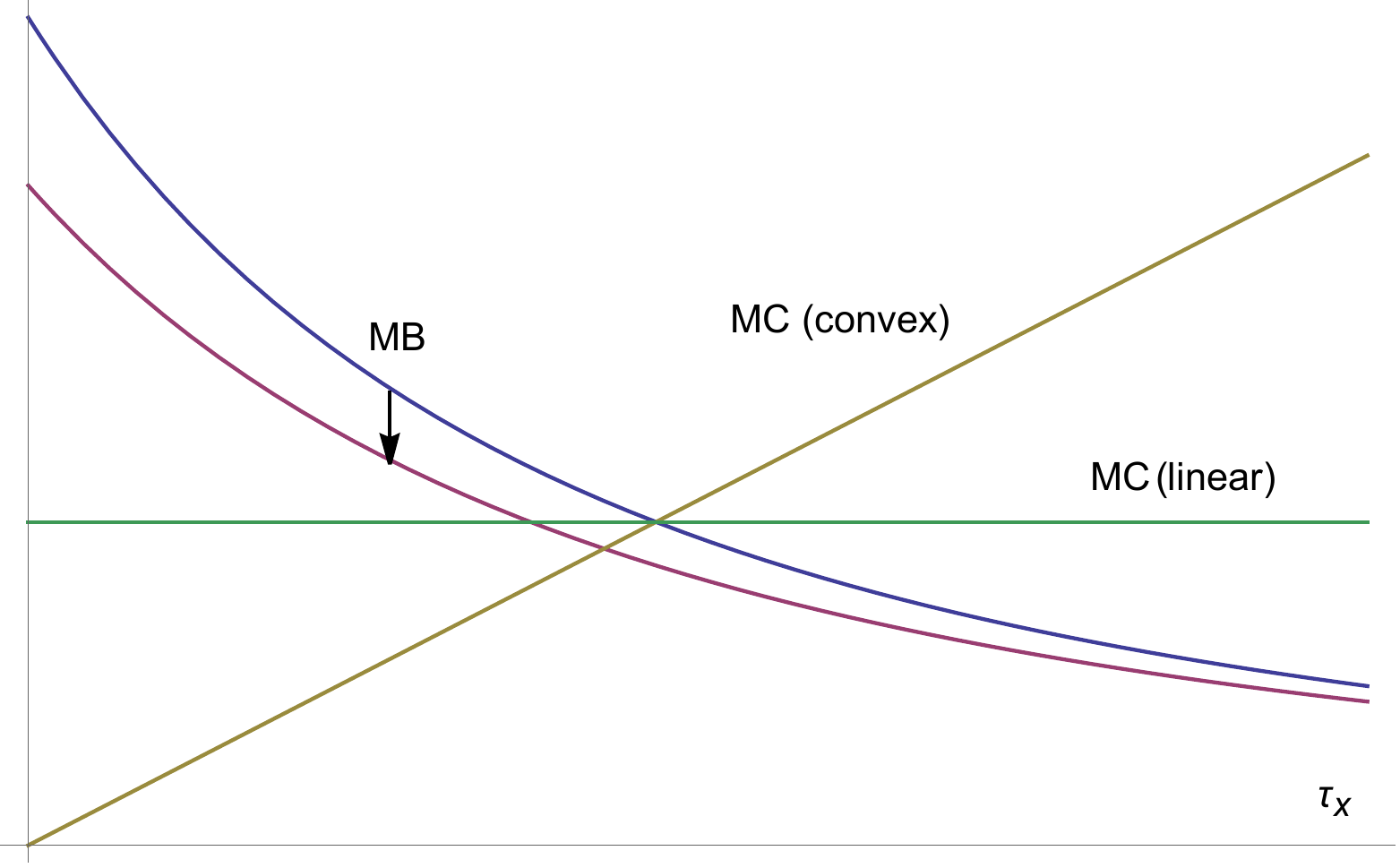}
\caption{The marginal benefit curve and the marginal cost curve. The horizontal axis is the $\tau_x$-axis. 
An increase in the precision of public information shifts the marginal benefit curve down, 
by which the equilibrium precision decreases.}
\label{figure}
\end{figure}

\subsection*{The whole game}

Given the equilibrium of the second-period subgame, 
the policymaker chooses the precision of public information to maximize the expected welfare in period 1, which is calculated as 
\begin{align}
\E[v(\sigma,\theta)]-C(\tau_x)=c_1\mathrm{var}[\sigma_i]+c_2\mathrm{cov}[\sigma_i,\sigma_j]+c_3\mathrm{cov}[\sigma_i,\theta]-C(\tau_x) +\text{const.}\label{expected welfare 1}
\end{align}
by \eqref{v function} and Lemma \ref{AP lemma}, where $\tau_x=\phi(\tau_y)$.

We rewrite \eqref{expected welfare 1} using volatility and dispersion of actions. 
The volatility is the variance of the average action $\int \sigma_jdj$, 
and the dispersion is the variance of the idiosyncratic difference $\sigma_i-\int \sigma_jdj$. 
As shown by \citet{morrisbergemann2013}, the volatility equals the covariance of actions, $\mathrm{cov}[\sigma_i,\sigma_j]$, and the dispersion equals the difference between the variance and the covariance of actions, $\mathrm{var}[{\sigma_i}]-\mathrm{cov}[\sigma_i,\sigma_j]$.

Note that the first three terms in \eqref{expected welfare 1} are linearly dependent because 
$\mathrm{var}[\sigma_i]=\alpha \mathrm{cov}[\sigma_i,\sigma_j]+\beta \mathrm{cov}[\sigma_i,\theta]$ by \eqref{linear best response}.\footnote{We can obtain this by mulitplying both sides of \eqref{linear best response} by $a_i$ and taking the expectation.}  
Hence, we can rewrite \eqref{expected welfare 1} as an affine function of  $\mathrm{var}[\sigma_i]$ and $\mathrm{cov}[\sigma_i,\sigma_j]$, which is also an affine function of the volatility and dispersion \citep{uiyoshizawa2015}. 
\begin{lemma}\label{lemma 2}
When the precision of private and public information is $(\tau_x,\tau_y)$ and the agents follow the unique symmetric equilibrium in the last-period subgame,  
the expected welfare equals 
\begin{align}
W(\tau_x,\tau_y)
\equiv \zeta D(\tau_x,\tau_y)+\eta V(\tau_x,\tau_y)-C(\tau_x) \label{welfare simple}
\end{align}
plus a constant, 
where $\zeta=c_1+c_3/\beta$, $\eta=c_1+c_2+(1-\alpha)c_3/\beta$, and 
\begin{align}
V(\tau_x,\tau_y)& 
=\frac{b_y^2}{\tau_y}+\frac{(b_x+b_y)^2}{\tau_\theta}
=\frac{\beta^2   \left((1-\alpha )^2 \tau_x^2+2 (1-\alpha ) \tau_x \tau_y+\tau_y \left(\tau_{\theta }+\tau_y\right)\right)}{(1-\alpha )^2 \tau_{\theta } \left((1-\alpha ) \tau_x+\tau_y+\tau_{\theta }\right)^2},
\label{CV}\\
D(\tau_x,\tau_y)&
=\frac{b_x^2}{\tau_x}=\frac{\beta^2\tau_x}{\left((1-\alpha ) \tau_x+\tau_y+\tau_{\theta }\right)^2} 
\label{IV}
\end{align}
are the volatility and the dispersion, respectively.
\end{lemma}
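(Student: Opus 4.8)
The plan is to start from the expression \eqref{expected welfare 1}, which already writes the expected welfare as an affine combination of $\mathrm{var}[\sigma_i]$, $\mathrm{cov}[\sigma_i,\sigma_j]$, and $\mathrm{cov}[\sigma_i,\theta]$ minus $C(\tau_x)$. The task splits naturally into two parts: re-expressing this affine combination purely in terms of the dispersion $D=\mathrm{var}[\sigma_i]-\mathrm{cov}[\sigma_i,\sigma_j]$ and the volatility $V=\mathrm{cov}[\sigma_i,\sigma_j]$, and then computing $D$ and $V$ in closed form from the equilibrium strategy of Lemma \ref{AP lemma}.

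For the first part, I would invert the definitions to write $\mathrm{var}[\sigma_i]=D+V$ and $\mathrm{cov}[\sigma_i,\sigma_j]=V$. The essential step is to eliminate the third coefficient using the best-response identity $\mathrm{var}[\sigma_i]=\alpha\,\mathrm{cov}[\sigma_i,\sigma_j]+\beta\,\mathrm{cov}[\sigma_i,\theta]$ derived from \eqref{linear best response}, which yields $\mathrm{cov}[\sigma_i,\theta]=(D+(1-\alpha)V)/\beta$. Substituting all three into \eqref{expected welfare 1} and collecting the coefficients of $D$ and $V$ then produces exactly $\zeta=c_1+c_3/\beta$ and $\eta=c_1+c_2+(1-\alpha)c_3/\beta$, giving the claimed form $\zeta D+\eta V-C(\tau_x)$ up to a constant.

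For the second part, I would write the equilibrium action from Lemma \ref{AP lemma} as $\sigma_i=(b_x+b_y)(\theta-\bar\theta)+b_x\varepsilon_i+b_y\varepsilon_y$ plus a constant, and exploit the mutual independence of $\theta$, $\varepsilon_y$, and the idiosyncratic $\varepsilon_i$ across agents. A direct variance computation gives $\mathrm{var}[\sigma_i]=(b_x+b_y)^2/\tau_\theta+b_x^2/\tau_x+b_y^2/\tau_y$, whereas for $i\neq j$ only the shared components $\theta$ and $\varepsilon_y$ contribute, so $\mathrm{cov}[\sigma_i,\sigma_j]=(b_x+b_y)^2/\tau_\theta+b_y^2/\tau_y$. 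Subtracting gives $D=b_x^2/\tau_x$ and identifies $V=(b_x+b_y)^2/\tau_\theta+b_y^2/\tau_y$, matching the first equalities in \eqref{CV} and \eqref{IV}. Substituting the explicit $b_x,b_y$ and simplifying over the common denominator $((1-\alpha)\tau_x+\tau_y+\tau_\theta)^2$ then delivers the closed forms.

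I do not anticipate a genuine obstacle: the argument is a linear change of variables combined with a Gaussian variance decomposition. The only points requiring care are bookkeeping ones—correctly tracking which signal components are common across agents so that the covariance omits the idiosyncratic term $b_x^2/\tau_x$, and verifying that it is precisely the best-response identity that collapses the three original coefficients onto the two-dimensional $(D,V)$ basis, which is what makes $\zeta$ and $\eta$ well defined independently of $\mathrm{cov}[\sigma_i,\theta]$.
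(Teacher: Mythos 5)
Your proposal is correct and follows essentially the same route as the paper: the text preceding the lemma eliminates $\mathrm{cov}[\sigma_i,\theta]$ via the best-response identity $\mathrm{var}[\sigma_i]=\alpha\,\mathrm{cov}[\sigma_i,\sigma_j]+\beta\,\mathrm{cov}[\sigma_i,\theta]$ and then passes to the volatility--dispersion basis (citing Bergemann and Morris (2013) for $V=\mathrm{cov}[\sigma_i,\sigma_j]$ and $D=\mathrm{var}[\sigma_i]-\mathrm{cov}[\sigma_i,\sigma_j]$), exactly as you do. Your explicit Gaussian decomposition $\sigma_i=(b_x+b_y)(\theta-\bar\theta)+b_x\varepsilon_i+b_y\varepsilon_y$ plus constant simply fills in the closed-form computation the paper delegates to Ui and Yoshizawa (2015), and your bookkeeping of which components are common across agents is accurate.
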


The policymaker maximizes $W(\phi(\tau_y),\tau_y)$ with respect to $\tau_y$ in equilibrium. 
Thus, optimal disclosure is determined by the cost function $C$ and the coefficients of volatility and dispersion $\eta$ and $\zeta$ (i.e., constants independent of $\tau_x$ and $\tau_y$). 

Using a similar representation, 
\citet{uiyoshizawa2015} 
characterize optimal disclosure under exogenous private information in terms of the ratio of $\eta$ to $\zeta$.   
As will be demonstrated in the next section, the representation \eqref{welfare simple} plays a more critical role in the case of endogenous private information.

\section{Information disclosure with known information costs}\label{Information disclosure with known information costs}

In Section \ref{The welfare effect of public information},  
we evaluate the sign of the derivative of the expected welfare
\begin{equation}
\frac{d W(\phi(\tau_y),\tau_y)}{d \tau_y}=\zeta \frac{d D(\phi(\tau_y),\tau_y)}{d \tau_y}+\eta \frac{d V(\phi(\tau_y),\tau_y)}{d \tau_y}-\frac{d C(\phi(\tau_y))}{d\tau_y} 
\label{dwdtauy}
\end{equation}
in terms of $\zeta$, $\eta$, and the elasticity of marginal cost $\rho$. 
This derivative is referred to as the social value of public information. 
In Section \ref{The optimal disclosure of public information},  
we obtain the optimal precision of public information that maximizes $W(\phi(\tau_y),\tau_y)$ assuming an isoelastic cost function.

\subsection{The social value of public information}\label{The welfare effect of public information}

Because $\phi$ does not have a closed-form expression, 
 \eqref{dwdtauy} is not easy to calculate. 
Thus, we divide \eqref{dwdtauy} by $|dD/d\tau_y|$ and evaluate its sign instead. 
This value is denoted by $MW_D(\tau_y)$ and referred to as the marginal welfare with respect to dispersion (MWD): 
\[
MW_D(\tau_y)\equiv
\frac{d{W(\phi(\tau_y),\tau_y)}}{d\tau_y}/
\bigg|\frac{d D(\phi(\tau_y),\tau_y)}{d \tau_y}\bigg|.
\]
Note that $MW_D(\tau_y)$ and $dW(\phi(\tau_y),\tau_y)/d\tau_y$ have the same sign.   

To understand what $MW_D(\tau_y)$ measures, observe that more precise public information decreases the dispersion. 
This is because the dispersion equals the difference between the variance and the covariance of actions, and more precise public information brings the variance and the covariance closer. 
Therefore, when the precision of public information increases by $|dD/d\tau_y|^{-1}$, the dispersion decreases by one, and the welfare increases by $MW_D(\tau_y)$.
In fact, we have
\begin{gather}
	D(\tau_x,\tau_y)=\tau_x C'(\tau_x), \label{FOC2}\\
\frac{d D(\phi(\tau_y),\tau_y)}{d \tau_y}
=(1+ \rho)C'(\tau_x)\phi'(\tau_y)<0\label{dD}
\end{gather}
for $\tau_x=\phi(\tau_y)>0$  
by \eqref{FOC1} and \eqref{IV}, 
where $\rho=\rho(\tau_x)$.

If either $C$ is linear or private information is exogenous, the MWD is directly calculated.
\begin{lemma}\label{MWD special cases}
Suppose that $\phi(\tau_y)>0$. 
When $C$ is linear, the MWD is
a constant 
\begin{equation}
{MW_D}^0\equiv \eta/(1-\alpha) -\zeta+1.\notag\label{MWD0}
\end{equation}
When $\tau_x=\phi(\tau_y)$ is fixed, the MWD is
\begin{equation}
{MW_D}^*(\tau_y)\equiv 
-\frac{\partial W(\tau_x,\tau_y)}{\partial \tau_y}/\frac{\partial D(\tau_x,\tau_y)}{\partial \tau_y}\bigg|_{\tau_x=\phi(\tau_y)}
=\eta \frac{3 (1-\alpha) \phi(\tau_y)+\tau_y+\tau_\theta}{2 (1-\alpha)^2 \phi(\tau_y)}-\zeta.
\label{MWD*}
\end{equation}
\end{lemma}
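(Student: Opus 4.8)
The plan is to evaluate both expressions directly from the definition of the MWD. Writing $W = \zeta D + \eta V - C$ and using $dD/d\tau_y < 0$ from \eqref{dD}, one has
\[
MW_D = \frac{\zeta D' + \eta V' - C_\tau}{-D'} = -\zeta - \eta\,\frac{V'}{D'} + \frac{C_\tau}{D'},
\]
where a prime denotes the total derivative along $\tau_x = \phi(\tau_y)$ in the linear case and the partial derivative in $\tau_y$ (with $\tau_x$ held fixed) in the exogenous case, and $C_\tau$ is the induced derivative of the cost. In the exogenous case $C_\tau = 0$ because $\tau_x$ is held fixed, so the entire computation reduces to the two ratios $C_\tau/D'$ and $V'/D'$.

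For the linear case the key simplification is that $m \equiv (1-\alpha)\tau_x + \tau_y + \tau_\theta$ is \emph{constant} along the equilibrium path: the first-order condition \eqref{FOC1} gives $m = \beta/\sqrt{C'(\tau_x)} = \beta/\sqrt{c}$, independent of $\tau_y$. Differentiating \eqref{phiinv} with $C' \equiv c$ yields $\phi'(\tau_y) = -1/(1-\alpha)$. Because $m$ is frozen, \eqref{IV} gives $D = c\tau_x$, hence $D' = c\,\phi'$ and, with $\rho = 0$ in \eqref{dD}, also $C_\tau = c\,\phi' = D'$, so $C_\tau/D' = 1$. In the representation $V = b_y^2/\tau_y + (b_x+b_y)^2/\tau_\theta$, the sum $b_x + b_y = \beta(m-\tau_\theta)/((1-\alpha)m)$ is constant while $b_y^2/\tau_y = \beta^2 \tau_y/((1-\alpha)^2 m^2)$ is linear in $\tau_y$; hence $V' = c/(1-\alpha)^2$ and $V'/D' = -1/(1-\alpha)$. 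Substituting gives $MW_D^0 = \eta/(1-\alpha) - \zeta + 1$.

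For the exogenous case I would differentiate the closed forms \eqref{CV} and \eqref{IV} in $\tau_y$ directly. The dispersion is immediate: $\partial D/\partial\tau_y = -2\beta^2\tau_x/m^3$. For the volatility, write $V = K N/m^2$ with $K = \beta^2/((1-\alpha)^2\tau_\theta)$ and $N$ the numerator in \eqref{CV}; the quotient rule gives $\partial V/\partial\tau_y = K\bigl((\partial N/\partial\tau_y)\,m - 2N\bigr)/m^3$. Taking the ratio cancels the common factor $\beta^2/m^3$, and setting $\tau_x = \phi(\tau_y)$ produces \eqref{MWD*} once $V'/D'$ is substituted into $MW_D^* = -\zeta - \eta\,(V'/D')$.

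The only genuine obstacle is the polynomial simplification of $(\partial N/\partial\tau_y)\,m - 2N$ in the exogenous case: after expanding, the $\tau_x^2$ and $\tau_x\tau_y$ terms must cancel so that the expression collapses to $\tau_\theta\bigl(3(1-\alpha)\tau_x + \tau_y + \tau_\theta\bigr)$, which is exactly what yields the factor appearing in \eqref{MWD*}. Everything else is bookkeeping or follows from the observation that a linear cost freezes $m$ along the equilibrium path.
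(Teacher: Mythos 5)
Your proposal is correct and follows essentially the same route as the paper's proof: both cases reduce to directly computing $dV/d\tau_y$, $dD/d\tau_y$, and $dC/d\tau_y$ along the relevant path and taking ratios, and your key polynomial cancellation in the exogenous case, $(\partial N/\partial\tau_y)\,m-2N=\tau_\theta\bigl(3(1-\alpha)\tau_x+\tau_y+\tau_\theta\bigr)$, checks out and reproduces \eqref{MRTex1}. Your frozen-$m$ observation in the linear case ($m=\beta/\sqrt{c}$ constant by \eqref{FOC1}, so $D=c\tau_x$, $b_x+b_y$ constant, and $V$ affine in $\tau_y$) is merely a tidier bookkeeping of the paper's computation, which instead plugs the closed form $\phi^0_c$ from \eqref{linear cost} into \eqref{CV} and \eqref{IV} and arrives at the same derivatives $dV/d\tau_y=c/(1-\alpha)^2$ and $dD/d\tau_y=dC/d\tau_y=-c/(1-\alpha)$.
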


Moreover, when $C$ is nonlinear, the MWD is represented as the weighted average of those in the above special cases, ${MW_D}^0$ and ${MW_D}^*(\tau_y)$, 
where the relative weights are one and the elasticity of marginal cost, respectively.
Because the crowding-out effect is largest when $C$ is linear (see Lemma \ref{mitigate lemma}),  
the MWD equals the weighted average of those under the largest crowding-out effect and no crowding-out effect. 

\begin{proposition}\label{main proposition 1}
Suppose that $\phi(\tau_y)>0$. 
Then, 
\begin{align}
MW_D(\tau_y)
&=\frac{{MW_D}^0+\rho {MW_D}^*(\tau_y)}{1+\rho}, \notag\label{MWD calc}
\end{align}
where $\rho=\rho(\phi(\tau_y))$.
\end{proposition}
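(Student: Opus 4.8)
The plan is to expand the total derivative \eqref{dwdtauy} by the chain rule along $\tau_x=\phi(\tau_y)$, writing $m\equiv(1-\alpha)\tau_x+\tau_y+\tau_\theta$ as a shorthand, and then to divide by $|dD/d\tau_y|$. Since $W=\zeta D+\eta V-C(\tau_x)$, we have $dW/d\tau_y=\zeta\,(dD/d\tau_y)+\eta\,(dV/d\tau_y)-C'(\tau_x)\phi'(\tau_y)$. Substituting \eqref{dD}, i.e.\ $dD/d\tau_y=(1+\rho)C'(\tau_x)\phi'(\tau_y)$, and noting $|dD/d\tau_y|=-(1+\rho)C'(\tau_x)\phi'(\tau_y)$ since $\phi'<0$, the $\zeta D$ and $C$ contributions collapse and yield
\[
MW_D(\tau_y)=-\zeta+\frac{1}{1+\rho}+\frac{\eta}{1+\rho}\cdot\frac{dV/d\tau_y}{-C'(\tau_x)\phi'(\tau_y)} .
\]
Expanding the target $\left({MW_D}^0+\rho\,{MW_D}^*(\tau_y)\right)/(1+\rho)$ with the explicit forms from Lemma \ref{MWD special cases}, the $-\zeta$ and $1/(1+\rho)$ terms match automatically, so everything reduces to the single scalar identity
\[
\frac{dV/d\tau_y}{-C'(\tau_x)\phi'(\tau_y)}=\frac{1}{1-\alpha}+\rho K,\qquad K=\frac{3(1-\alpha)\phi(\tau_y)+\tau_y+\tau_\theta}{2(1-\alpha)^2\phi(\tau_y)} .
\]

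To prove this identity I would first differentiate the first-order condition \eqref{FOC1}, $\beta^2/m^2=C'(\tau_x)$, with respect to $\tau_y$; using $C''(\tau_x)/C'(\tau_x)=\rho/\tau_x$ this gives $\phi'=-2\tau_x/\!\left(m\rho+2(1-\alpha)\tau_x\right)$, hence the clean split $1/(-\phi')=(1-\alpha)+m\rho/(2\tau_x)$. Writing the total derivative $dV/d\tau_y=(\partial V/\partial\tau_x)\phi'+\partial V/\partial\tau_y$ and dividing by $-C'\phi'$ then produces
\[
\frac{dV/d\tau_y}{-C'\phi'}=\left[(1-\alpha)\frac{\partial V/\partial\tau_y}{C'}-\frac{\partial V/\partial\tau_x}{C'}\right]+\rho\cdot\frac{m}{2\tau_x}\cdot\frac{\partial V/\partial\tau_y}{C'} ,
\]
which separates a $\rho$-free bracket from a $\rho$-linear term. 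For the $\rho$-linear coefficient I would combine $\partial D/\partial\tau_y=-2\beta^2\tau_x/m^3$ from \eqref{IV} with $C'=\beta^2/m^2$ to get $\partial D/\partial\tau_y=-2\tau_x C'/m$, so that $(m/2\tau_x)(\partial V/\partial\tau_y)/C'=-(\partial V/\partial\tau_y)/(\partial D/\partial\tau_y)$, which is exactly $K$ (the coefficient of $\eta$ in ${MW_D}^*(\tau_y)$, read off directly from Lemma \ref{MWD special cases}).

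The $\rho$-free bracket is the step I expect to be the main obstacle. The cleanest route avoids differentiating the cumbersome expression \eqref{CV}: the bracket $(1-\alpha)(\partial V/\partial\tau_y)/C'-(\partial V/\partial\tau_x)/C'$ is a function of $(\tau_x,\tau_y)$ alone once $C'=\beta^2/m^2$ is substituted, and every point with $\tau_x>0$ is the equilibrium of the linear cost $c=\beta^2/m^2$. For a linear cost $\rho=0$, so the first display forces $MW_D(\tau_y)=-\zeta+1+\eta\cdot(\text{bracket})$, which by Lemma \ref{MWD special cases} equals ${MW_D}^0=\eta/(1-\alpha)-\zeta+1$; matching the coefficient of $\eta$ gives bracket $=1/(1-\alpha)$ at that point, hence everywhere. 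Substituting both parts of the scalar identity back into the expression for $MW_D(\tau_y)$ and comparing with the expanded target then completes the proof. Alternatively, one can verify $(1-\alpha)\partial V/\partial\tau_y-\partial V/\partial\tau_x=C'/(1-\alpha)$ by direct substitution of the partials of \eqref{CV}, but the benchmark argument above is shorter and reuses Lemma \ref{MWD special cases}.
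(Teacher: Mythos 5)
Your proof is correct, and the reduction you perform at the outset---dividing $dW/d\tau_y=\zeta\,dD/d\tau_y+\eta\,dV/d\tau_y-C'\phi'$ by $|dD/d\tau_y|=-(1+\rho)C'\phi'$ so that the $\zeta$ and cost terms collapse to $-\zeta+1/(1+\rho)$, leaving only the scalar identity $\frac{dV/d\tau_y}{-C'\phi'}=\frac{1}{1-\alpha}+\rho\,{MV_D}^*(\tau_y)$---is exactly the frame of the paper's proof, which establishes Corollary \ref{main corollary 1} (equivalent to the proposition) via \eqref{MRTgeneral} and \eqref{MWD calc2}. Where you genuinely diverge is in how the MVD identity is computed. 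The paper substitutes $\tau_y=\phi^{-1}(\tau_x)$ from \eqref{phiinv} directly into \eqref{CV}, obtaining the compact closed form $V(\tau_x,\phi^{-1}(\tau_x))=\bigl(\beta^2/\tau_\theta-(1-\alpha)\tau_xC'(\tau_x)-\beta\sqrt{C'(\tau_x)}\bigr)/(1-\alpha)^2$, and differentiates once; the $1$-versus-$\rho$ weighting then drops out mechanically. You instead keep the partials of $V$, differentiate the first-order condition \eqref{FOC1} implicitly to get $1/(-\phi')=(1-\alpha)+m\rho/(2\tau_x)$, and thereby split the expression structurally into a $\rho$-free bracket and a $\rho$-linear term; you identify the latter with ${MV_D}^*$ via $\partial D/\partial\tau_y=-2\tau_xC'/m$ (which matches \eqref{MRTex1}), and pin down the former by benchmarking against the linear-cost case of Lemma \ref{MWD special cases}. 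Your route makes the origin of the weights transparent---the $(1-\alpha)$ slope of $\phi^{-1}$ generates the linear-cost benchmark, the curvature term $m\rho/(2\tau_x)$ generates the exogenous-information benchmark---which the paper's substitution trick obscures, at the price of one extra self-consistency argument. Two small caveats on that benchmark step: the coefficient-matching in $\eta$ silently requires $\eta\neq 0$ (if $\eta=0$ the proposition holds trivially and the bracket is irrelevant, so no harm, but this should be said), and it uses Lemma \ref{MWD special cases} for arbitrary $(\zeta,\eta)$ rather than the model's fixed coefficients---legitimate, since its derivation never uses their values, but worth flagging. Your fallback of verifying $(1-\alpha)\,\partial V/\partial\tau_y-\partial V/\partial\tau_x=C'/(1-\alpha)$ by direct substitution is correct (the partials are $\partial V/\partial\tau_x=2\beta^2\tau_x/m^3$ and $\partial V/\partial\tau_y=\beta^2\bigl(3(1-\alpha)\tau_x+\tau_y+\tau_\theta\bigr)/\bigl((1-\alpha)^2m^3\bigr)$, and the combination telescopes to $\beta^2/\bigl((1-\alpha)m^2\bigr)$), so the argument is complete either way.
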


In summary, the social value of public information has the same sign as ${MW_D}^0$ if the crowding-out effect is sufficiently large (with small $\rho$) and ${MW_D}^*(\tau_y)$ if the crowding-out effect is sufficiently small (with large $\rho$). 
In particular, when ${MW_D}^0>0>{MW_D}^*(\tau_y)$,  
the sign is positive if the crowding-out effect is large enough, while negative under exogenous private information. 
In other words, the crowding-out effect can turn the social value of public information from negative to positive if and only if ${MW_D}^0>0>{MW_D}^*(\tau_y)$. 
Similarly, the crowding-out effect can turn the social value of public information from positive to negative if and only if ${MW_D}^0<0<{MW_D}^*(\tau_y)$.

We restate Proposition \ref{main proposition 1} using $(\zeta,\eta)$ and 
\begin{equation}
MV_D(\tau_y)\equiv\frac{d V(\phi(\tau_y),\tau_y)}{d \tau_y}/\bigg|\frac{d D(\phi(\tau_y),\tau_y)}{d \tau_y}\bigg|,\notag
\end{equation}
which is referred to as the marginal volatility with respect to dispersion (MVD).

\begin{corollary}\label{main corollary 1}
Suppose that $\phi(\tau_y)>0$. 
Then, 
\begin{align}
MW_D(\tau_y)
=\eta {MV_D}(\tau_y)-\zeta+\frac{1}{1+\rho}, \label{MWD calc2}
\end{align}
where $\rho=\rho(\phi(\tau_y))$.
The MVD is given by 
\begin{equation}
MV_D(\tau_y)=\frac{{MV_D}^0+\rho\, {MV_D}^*(\tau_y)}{1+\rho}>0,\label{MRTgeneral}
\end{equation}
where 
${MV_D}^{0}\equiv{1}/{(1-\alpha)}$ 
is the MVD when $C$ is linear, and 
\begin{align}
{MV_D}^*(\tau_y)
\equiv \frac{3 (1-\alpha) \phi(\tau_y)+\tau_y+\tau_\theta}{2 (1-\alpha)^2 \phi(\tau_y)}
>
\frac{3 }{2 (1-\alpha)}>{MV_D}^{0}
\label{MRTex}
\end{align}
is the MVD when $\tau_x=\phi(\tau_y)$ is fixed.
\end{corollary}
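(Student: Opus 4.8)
The plan is to establish the two displayed formulas and then the chain of inequalities, relying throughout on the equilibrium identity $\beta^2 = C'(\tau_x)\,((1-\alpha)\tau_x+\tau_y+\tau_\theta)^2$ from \eqref{FOC1}. Throughout I write $K \equiv (1-\alpha)\tau_x+\tau_y+\tau_\theta$ and set $\tau_x=\phi(\tau_y)$.

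For the first identity \eqref{MWD calc2}, I would start from \eqref{dwdtauy} and divide every term by $|dD/d\tau_y| = -dD/d\tau_y$, which is legitimate because $dD/d\tau_y < 0$ by \eqref{dD}. The $\zeta$-term contributes $\zeta\,(dD/d\tau_y)/(-dD/d\tau_y) = -\zeta$; the $\eta$-term contributes $\eta\,MV_D(\tau_y)$ by the definition of the MVD; and the cost term contributes $(dC/d\tau_y)/(dD/d\tau_y)$. Since $dC(\phi(\tau_y))/d\tau_y = C'(\tau_x)\phi'(\tau_y)$ while $dD/d\tau_y = (1+\rho)C'(\tau_x)\phi'(\tau_y)$ by \eqref{dD}, this ratio collapses to $1/(1+\rho)$, which gives \eqref{MWD calc2}.

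For the weighted-average form \eqref{MRTgeneral}, I would decompose the numerator of $MV_D$ by the chain rule as $dV/d\tau_y = (\partial V/\partial\tau_x)\phi'(\tau_y) + \partial V/\partial\tau_y$, and compute the partials of $V$ and $D$ from \eqref{CV} and \eqref{IV}, simplifying each via $\beta^2 = C'K^2$. The two computations that matter are the clean identity $\partial V/\partial\tau_x = 2C'\tau_x/K = -\partial D/\partial\tau_y$ and the expression $\partial V/\partial\tau_y = C'\,[3(1-\alpha)\tau_x+\tau_y+\tau_\theta]/[(1-\alpha)^2 K]$; the latter emerges after expanding the numerator of $V$ and observing the cancellation that leaves the factor $\tau_\theta\,(3(1-\alpha)\tau_x+\tau_y+\tau_\theta)$. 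Writing the denominator of $MV_D$ as $(1+\rho)C'\phi'(\tau_y)$ via \eqref{dD} and multiplying through by $(1+\rho)$ yields $(1+\rho)MV_D = (\partial V/\partial\tau_x)/(-C') + (\partial V/\partial\tau_y)/(-C'\phi'(\tau_y))$. Solving \eqref{dD} for $\phi'(\tau_y)$ gives $-\phi'(\tau_y) = 2\tau_x/(2(1-\alpha)\tau_x + K\rho)$, and substituting this into the second term splits it, according to the two summands $2(1-\alpha)\tau_x$ and $K\rho$ of the numerator of $1/(-\phi'(\tau_y))$, into a $\rho$-free piece and the piece $\rho\,{MV_D}^*(\tau_y)$; here the ratio $(\partial V/\partial\tau_y)/(-\partial D/\partial\tau_y)$ is by construction the MVD with $\tau_x$ held at $\phi(\tau_y)$, i.e. ${MV_D}^*(\tau_y)$. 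Combining the $\rho$-free piece with the first term and using $(1-\alpha)\tau_x+\tau_y+\tau_\theta = K$ collapses them to $1/(1-\alpha) = {MV_D}^0$, which establishes \eqref{MRTgeneral}; setting $\rho=0$ then returns $MV_D = {MV_D}^0$, confirming its interpretation as the linear-cost MVD.

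Finally, the inequalities follow by rewriting ${MV_D}^*(\tau_y) = 3/(2(1-\alpha)) + (\tau_y+\tau_\theta)/(2(1-\alpha)^2\phi(\tau_y))$; the second summand is strictly positive because $\tau_\theta > 0$ and $\phi(\tau_y) > 0$, so ${MV_D}^*(\tau_y) > 3/(2(1-\alpha))$, while $3/(2(1-\alpha)) > 1/(1-\alpha) = {MV_D}^0$ is immediate. Positivity of $MV_D$ then follows because it is a convex combination, with nonnegative weights $1/(1+\rho)$ and $\rho/(1+\rho)$, of the two positive quantities ${MV_D}^0$ and ${MV_D}^*(\tau_y)$. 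I expect the main obstacle to be the partial-derivative bookkeeping for $V$, specifically isolating the cancellation that produces $\partial V/\partial\tau_y$ and recognizing the identity $\partial V/\partial\tau_x = -\partial D/\partial\tau_y$, which is what makes the $\rho$-free terms telescope exactly to $1/(1-\alpha)$.
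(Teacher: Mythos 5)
Your proposal is correct, and for the weighted-average formula \eqref{MRTgeneral} it takes a genuinely different computational route from the paper's. The paper never splits $dV/d\tau_y$ into partial derivatives: it substitutes the inverse equilibrium map $\tau_y=\phi^{-1}(\tau_x)$ from \eqref{phiinv} into \eqref{CV} to get the one-variable closed form $V(\tau_x,\phi^{-1}(\tau_x))=\bigl(\beta^2/\tau_\theta-(1-\alpha)\tau_x C'(\tau_x)-\beta\sqrt{C'(\tau_x)}\bigr)/(1-\alpha)^2$, differentiates in $\tau_x$, and divides by \eqref{dD}; the factor $\phi'(\tau_y)$ cancels in the ratio, so the paper never needs your explicit formula $-\phi'(\tau_y)=2\tau_x/(2(1-\alpha)\tau_x+K\rho)$, and the weights $1$ and $\rho$ fall out of the split of the derivative into its $C'$ and $\tau_xC''$ parts, with ${MV_D}^*$ then recognized by substituting \eqref{phiinv} into the ratio of partials. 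Your chain-rule decomposition $dV/d\tau_y=(\partial V/\partial\tau_x)\phi'+\partial V/\partial\tau_y$ instead hinges on the structural identity $\partial V/\partial\tau_x=-\partial D/\partial\tau_y=2C'\tau_x/K$ (with $K\equiv(1-\alpha)\tau_x+\tau_y+\tau_\theta$), which the paper nowhere states; I verified both of your partials via \eqref{FOC1}, and your $\rho$-free pieces do telescope to $1/(1-\alpha)$ because $3(1-\alpha)\tau_x+\tau_y+\tau_\theta=2(1-\alpha)\tau_x+K$. What the paper's substitution buys is economy (no partials, no explicit $\phi'$); what yours buys is transparency about where the weights come from, namely the two summands $2(1-\alpha)\tau_x$ and $K\rho$ in the numerator of $1/(-\phi')$. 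Your handling of \eqref{MWD calc2} and of the inequalities in \eqref{MRTex} coincides with the paper's. One small imprecision: \eqref{dD} alone cannot be ``solved'' for $\phi'$, since it only asserts $dD/d\tau_y=(1+\rho)C'\phi'$; you must either differentiate \eqref{phiinv} (using $\beta C''/(2(C')^{3/2})=K\rho/(2\tau_x)$ by \eqref{FOC1}) or combine \eqref{dD} with the chain-rule expansion of $dD/d\tau_y$ via $\partial D/\partial\tau_x$ --- either repair is one line, so this is a presentational slip rather than a gap.
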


According to Corollary \ref{main corollary 1}, a change in the welfare is broken down into the following: when the precision of public information increases by $|dD/d\tau_y|^{-1}$, the volatility term increases by $\eta MV_D(\tau_y)$, the dispersion term decreases by $\zeta$, and the information cost decreases by $1/(1+\rho)$. 
Note that $MV_D(\tau_y)\geq {MV_D}^0>0$; that is, the MVD is strictly positive. 
This is because more precise public information increases the covariance of actions, which equals the volatility. 
Consequently, more precise public information increases the welfare if $\eta$ is sufficiently large compared to $\zeta$ and decreases the welfare if $\eta$ is sufficiently small compared to $\zeta$. 

When $C$ is linear, the MVD is smallest, and the cost reduction $1/(1+\rho)$ is largest.  
In other words, the crowding-out effect induces a smaller increase in the volatility and a larger decrease in the cost. 
Thus, when $\eta>0$,  
the crowding-out effect can turn the sign of the social value of public information for either of two reasons: from negative to positive 
due to a larger decrease in the cost, and from positive to negative 
due to a smaller increase in the volatility. 



As another corollary of Proposition \ref{main proposition 1}, we provide the ranges of $\eta$ guaranteeing ${MW_D}(\tau_y)>0$ and guaranteeing ${MW_D}(\tau_y)<0$, respectively, which will be used in the study of optimal disclosure. 

\begin{corollary}\label{main corollary 2}
Let 
\[
\underline\eta(\zeta,\rho)\equiv \frac{2(1-\alpha)\left((1+\rho)\zeta-1\right)}{3\rho+2}.
\]
For $\tau_y\geq 0$ with $\phi(\tau_y)>0$, 
 ${MW_D}(\tau_y)>0$ if $\eta>\max\{\underline\eta(\zeta,\rho),0\}$, and 
 ${MW_D}(\tau_y)<0$ if $\eta<\min\{\underline\eta(\zeta,\rho),0\}$, where $\rho=\rho(\phi(\tau_y))$. 
\end{corollary}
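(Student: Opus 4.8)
The plan is to read Corollary \ref{main corollary 1} as an affine relation in $\eta$ and reduce the two assertions to a single threshold comparison. By \eqref{MWD calc2}, for every $\tau_y$ with $\phi(\tau_y)>0$,
\[
MW_D(\tau_y)=\eta\,MV_D(\tau_y)-\zeta+\frac{1}{1+\rho},
\]
which is affine in $\eta$ with strictly positive slope $MV_D(\tau_y)>0$. Hence the sign of $MW_D(\tau_y)$ is governed by comparing $\eta\,MV_D(\tau_y)$ with $\zeta-1/(1+\rho)$. The obstacle is that the exact root in $\eta$ depends on $\tau_y$ through $\phi(\tau_y)$ and so is not a function of $(\zeta,\rho)$ alone; the remedy is to replace $MV_D(\tau_y)$ by a $\rho$-dependent lower bound, which yields the stated sufficient conditions.

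First I would extract the lower bound already implicit in Corollary \ref{main corollary 1}. Since ${MV_D}^*(\tau_y)>3/(2(1-\alpha))$ by \eqref{MRTex} and ${MV_D}^0=1/(1-\alpha)$, substituting these into the weighted average \eqref{MRTgeneral} with weights $1/(1+\rho)$ and $\rho/(1+\rho)$ gives
\[
MV_D(\tau_y)\ge\frac{2+3\rho}{2(1-\alpha)(1+\rho)},
\]
with strict inequality whenever $\rho>0$. A short computation then shows that $\underline\eta(\zeta,\rho)$ is exactly the value of $\eta$ for which this bound times $\eta$ equals $\zeta-1/(1+\rho)$. Granting this, suppose $\eta>\max\{\underline\eta(\zeta,\rho),0\}$; then $\eta>0$, so multiplying the lower bound on $MV_D(\tau_y)$ by the positive number $\eta$ preserves its direction and, using $\eta>\underline\eta(\zeta,\rho)$, gives $\eta\,MV_D(\tau_y)>\zeta-1/(1+\rho)$, hence $MW_D(\tau_y)>0$. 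Symmetrically, if $\eta<\min\{\underline\eta(\zeta,\rho),0\}$ then $\eta<0$, and multiplying the same lower bound by the negative number $\eta$ reverses the direction to give $\eta\,MV_D(\tau_y)<\zeta-1/(1+\rho)$, hence $MW_D(\tau_y)<0$.

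The point requiring care --- and the reason the thresholds are written as $\max$ and $\min$ against $0$ --- is that substituting the lower bound for $MV_D(\tau_y)$ is favorable only once the sign of $\eta$ is fixed: the substitution lower-bounds $MW_D(\tau_y)$ when $\eta>0$ but upper-bounds it when $\eta<0$. Imposing $\eta>0$ (respectively $\eta<0$) is precisely what makes the monotone substitution valid in the direction needed, and the strict threshold on $\eta$ delivers strictness of the conclusion even at $\rho=0$, where the $MV_D$ bound is an equality. I expect the only real work to be bookkeeping: verifying the algebraic identity that ties $\underline\eta(\zeta,\rho)$ to the $MV_D$ lower bound and tracking the inequality directions through the two sign cases.
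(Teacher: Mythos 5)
Your proof is correct and is essentially the paper's own argument: both start from the affine identity $MW_D(\tau_y)=\eta\,MV_D(\tau_y)-\zeta+1/(1+\rho)$ in Corollary \ref{main corollary 1}, bound $MV_D(\tau_y)$ below by $(3\rho+2)/\bigl(2(1-\alpha)(1+\rho)\bigr)$ using ${MV_D}^*(\tau_y)>3/(2(1-\alpha))$, identify $\underline\eta(\zeta,\rho)$ as the root of the resulting affine function of $\eta$, and split on the sign of $\eta$ to control the direction of the substitution. If anything, you are slightly more careful than the paper at the edge case $\rho=0$, where the $MV_D$ bound holds with equality and strictness must come from $\eta$ strictly exceeding (or falling below) the threshold.
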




\subsection{The optimal disclosure of public information}\label{The optimal disclosure of public information}

We consider the optimal precision of public information in the following sense, which constitutes a subgame perfect equilibrium of our model. 
\begin{definition}
The precision of public information $\tau_y^*\in\mathbb{R}_+\cup\{\infty\}$ is optimal if 
\[
W(\phi(\tau_y^*),\tau_y^*)=\sup_{\tau_y}W(\phi(\tau_y),\tau_y). 
\]	
We say that full disclosure is optimal if $\tau_y^*=\infty$, no disclosure is optimal if $\tau_y^*=0$, and partial disclosure is optimal if $\tau_y^*>0$ is finite. 
\end{definition}

We can obtain the optimal precision by using Proposition \ref{main proposition 1} and identifying all local optima. 
To simplify the discussion, we focus on the case of an isoelastic cost function,  
\begin{equation}
	C(\tau_x)=c\tau_x^{\lambda+1}/(\lambda+1),\label{isoelastic cost function}
\end{equation} 
where $\lambda=\rho(\tau_x)\geq 0$ and $c>0$ are constant. 
Under this assumption, the cost in the second-period subgame is proportional to the dispersion by \eqref{FOC2}: 
\begin{equation}
C(\phi(\tau_y))=D(\phi(\tau_y),\tau_y)/(\lambda+1).\label{cost and dispersion}
\end{equation}

Using Corollary \ref{main corollary 1}, 
we can rewrite $MW_D(\tau_y)$ as a linear function of ${MV_D}^*(\tau_y)$,
\[
MW_D(\tau_y)=\frac{\eta \lambda }{\lambda+1}{MV_D}^*(\tau_y)-\zeta+\frac{\eta 
/(1-\alpha)+1}{\lambda+1},
\]
which depends on $\tau_y$ only through ${MV_D}^*(\tau_y)$. 
Note that ${MV_D}^*(\tau_y)$ is increasing in $\tau_y$ and goes to infinity as $\tau_y$ goes to infinity. 
Thus, if $\eta>0$, then $MW_D(\tau_y)$ is increasing in $\tau_y$ and $MW_D(\tau_y)>0$ for sufficiently large $\tau_y$. 
This implies that full disclosure is optimal if $MW_D(0)>0$ (see Figure \ref{fig:case (i)}), which is the case if $\eta>\max\{\underline{\eta}(\zeta,\lambda),0\}$ by Corollary~\ref{main corollary 2}; otherwise, no disclosure can be optimal (see Figure \ref{fig:case (ii)}). 
Similarly, if $\eta<0$, then $MW_D(\tau_y)$ is decreasing in $\tau_y$ and $MW_D(\tau_y)<0$ for sufficiently large $\tau_y$. 
This implies that no disclosure is optimal if $MW_D(0)<0$ (see Figure \ref{fig:case (iii)}), which is the case if $\eta<\min\{\underline{\eta}(\zeta,\lambda),0\}$ by Corollary~\ref{main corollary 2}; otherwise, paritial disclosure is optimal (see Figure~\ref{fig:case (iv)}).

The following characterization of the optimal precision follows from the above discussion, which is illustrated in Figure \ref{figure 2}.

\begin{figure} 
  \centering
  \subfloat[Case (i)]{\label{fig:case (i)}
  \includegraphics[width=4cm, bb=0 0 360 229]{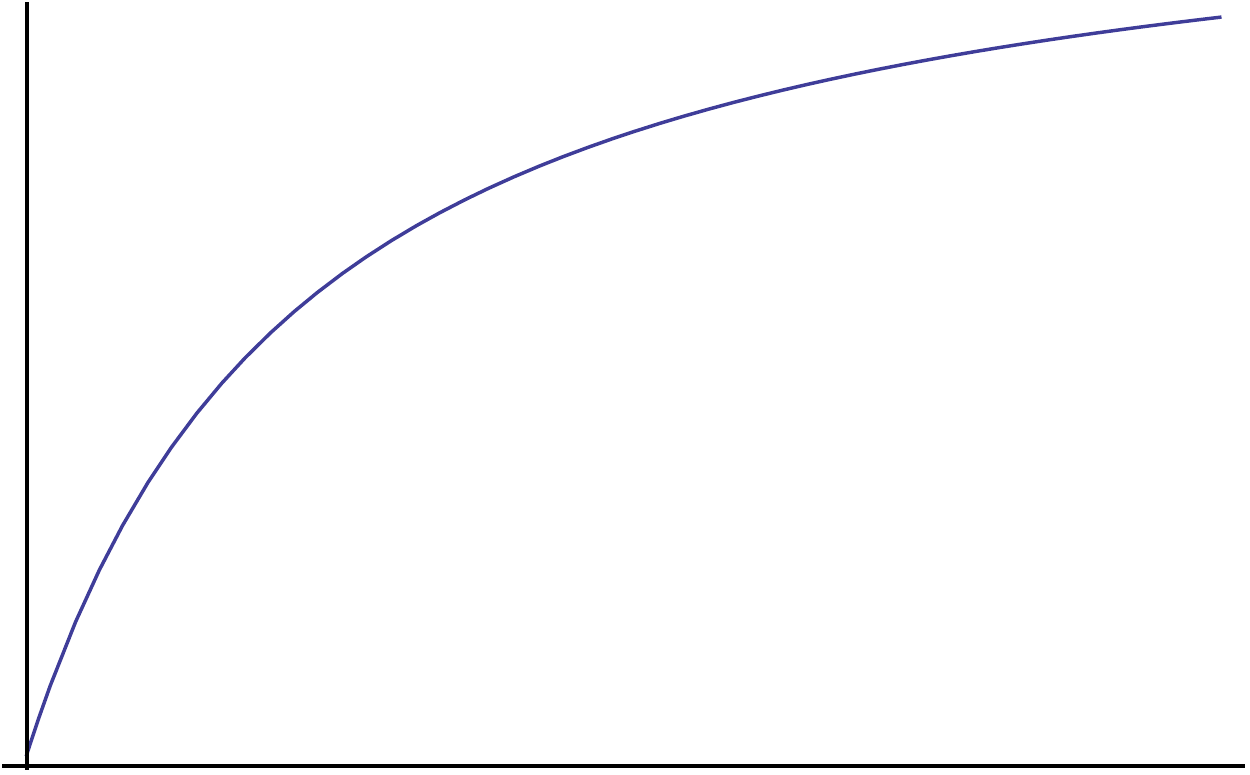}}\qquad \ 
  \subfloat[Case (ii)]{\label{fig:case (ii)}\includegraphics[width=4cm, bb=0 0 360 229]{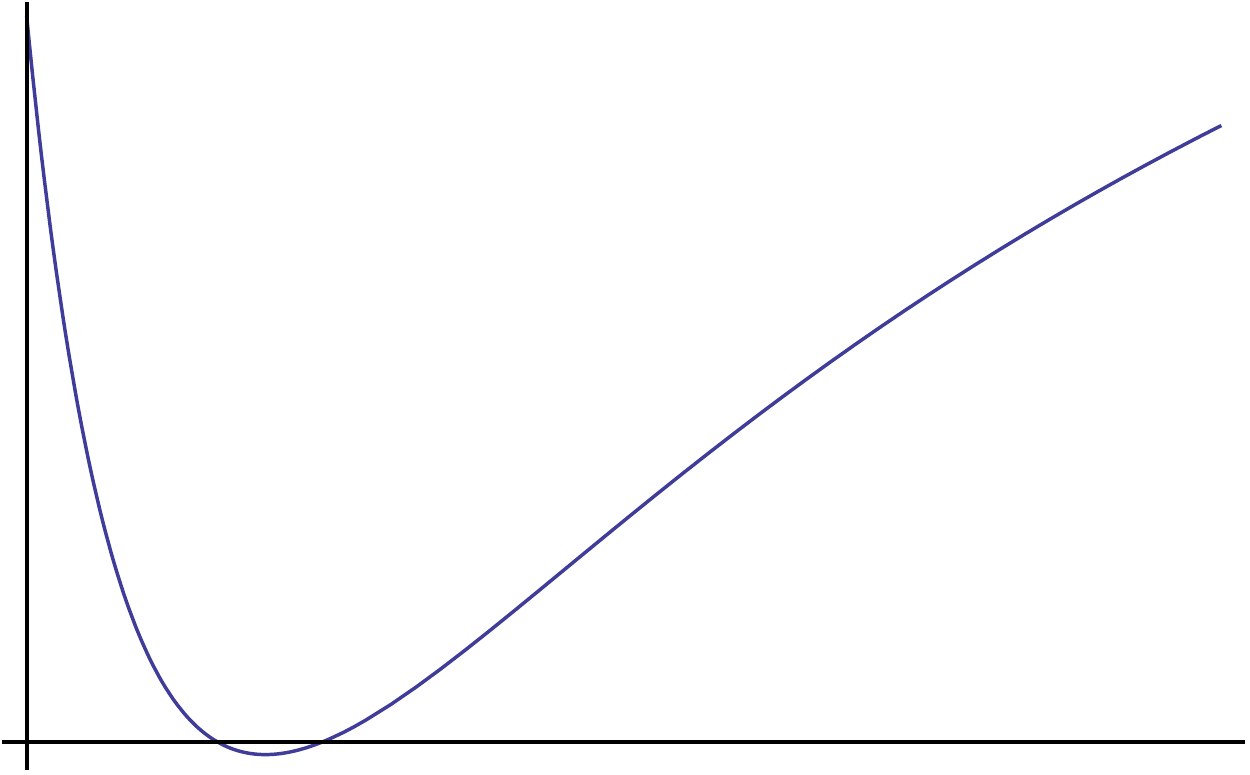}}\\ 
  \subfloat[Case (iii)]{\label{fig:case (iii)}\includegraphics[width=4cm, bb=0 0 360 229]{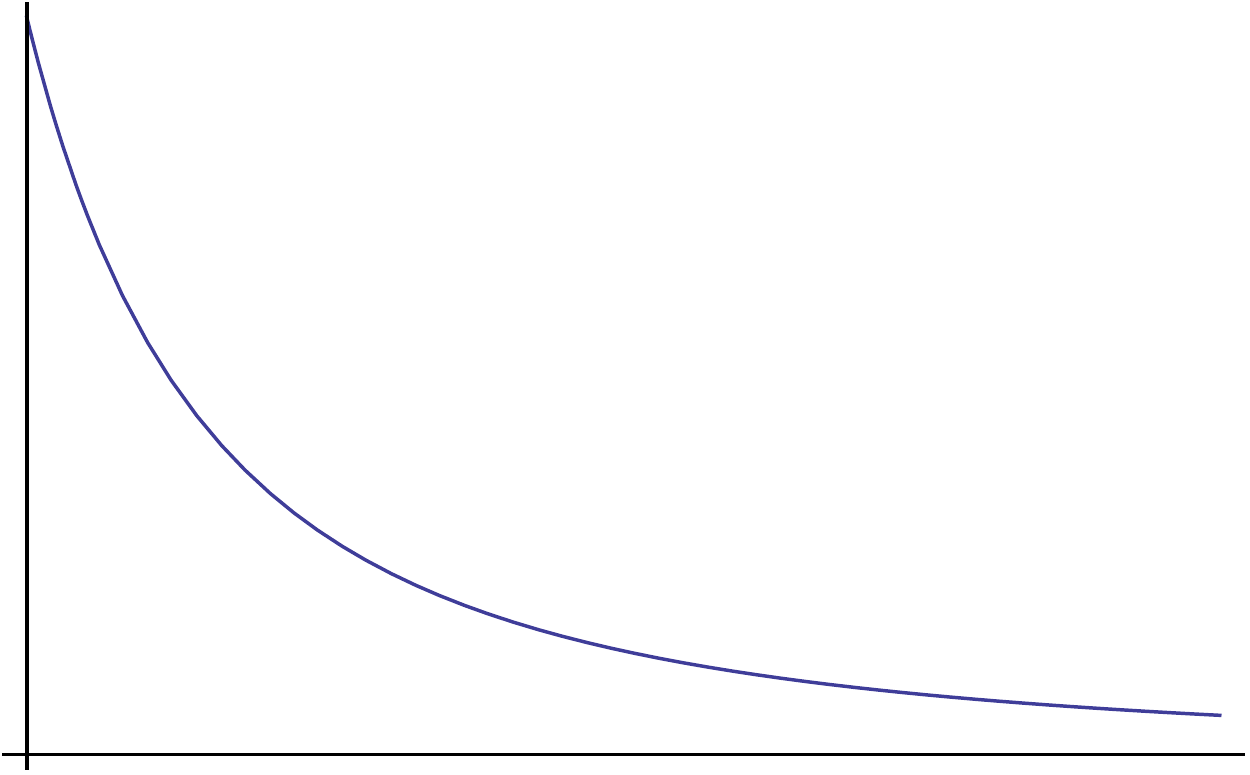}}\qquad \ 
  \subfloat[Case (iv)]{\label{fig:case (iv)}\includegraphics[width=4cm, bb=0 0 360 229]{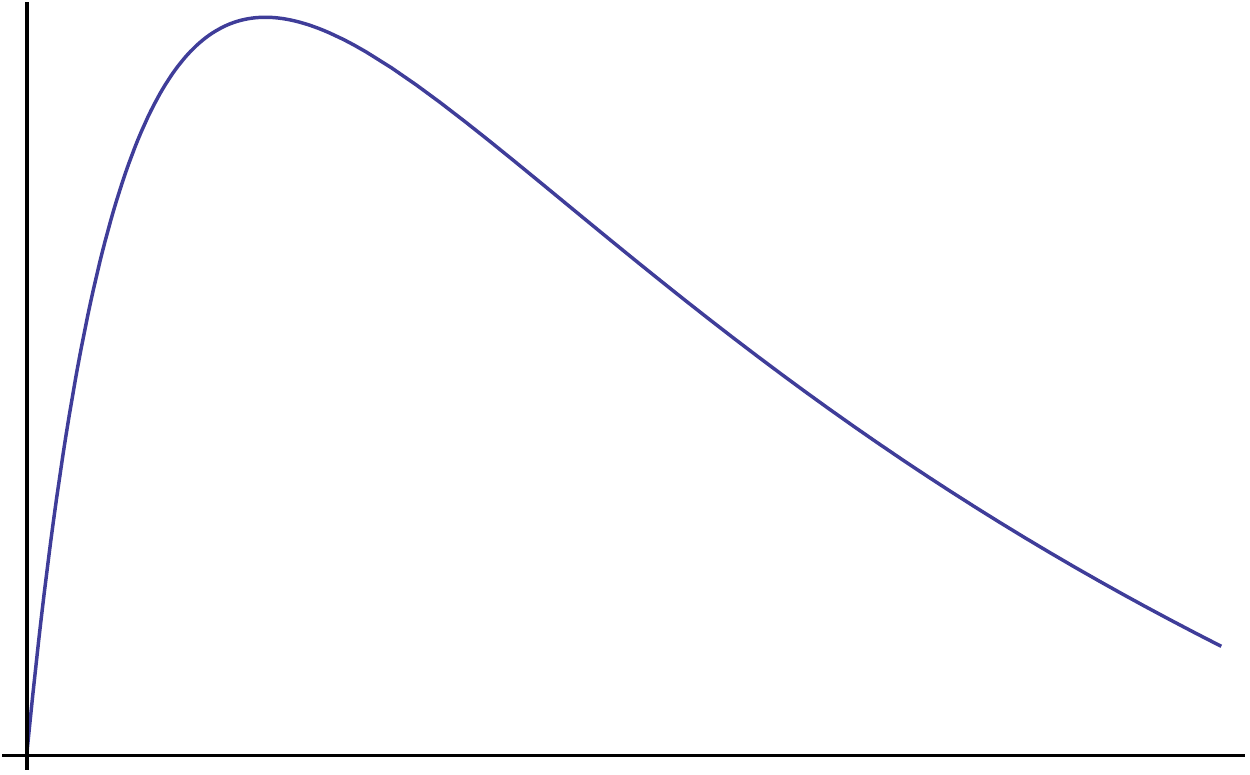}}
  \caption{The graph of the welfare as a function of the precision. } 
  \label{fig1}
\end{figure}

\begin{proposition}\label{optimal proposition}
Assume an isoelastic cost function \eqref{isoelastic cost function}.
Then, the following holds. 
\begin{enumerate}[\em (i)]
\item 
Suppose that $\eta> \max\{\underline\eta(\zeta,\lambda), 0\}$. 
Then,  
${d W(\phi(\tau_y),\tau_y)}/{d \tau_y}> 0$ for all $\tau_y$. 
Thus, the optimal precision is $\infty$. 

\item 
Suppose that $0<\eta< \underline\eta(\zeta,\lambda)$. 
Then,  
${d W(\phi(\tau_y),\tau_y)}/{d \tau_y}\lessgtr 0$ for $\tau_y\lessgtr\bar\tau_y$, where 
\begin{align}
\bar \tau_y&\equiv\phi^{-1}(\bar\tau_x)=\bar\tau_z-\tau_\theta,\notag\\ 
\bar\tau_z &\equiv
\begin{cases}{\beta}/{\sqrt{c}}&\text{ if }\lambda=0,	\\
\bar\tau_x{(1-\alpha ) (3 \lambda +2)(\underline\eta(\zeta,\lambda)-\eta  )}/{(\eta  \lambda )}&\text{ if }\lambda>0,	
\label{phiinv'}
\end{cases}
\\
\bar\tau_x &\equiv\left(\frac{\beta   \lambda \eta}{2 c^{1/2}\left(1-\alpha\right) \left(\left(1-\alpha\right) ((1+\lambda)\zeta -1)-(1+\lambda)\eta \right)}\right)^{2/(\lambda+2)}.\label{taux-star'}
\end{align} 
Thus, the optimal precision is $\infty$ if $\tau_\theta\geq \bar\tau_z$ and either $0$ or $\infty$ if $\tau_\theta< \bar\tau_z$.

\item 
Suppose that $\eta< \min\{\underline\eta(\zeta,\lambda), 0\}$. Then, 
${d W(\phi(\tau_y),\tau_y)}/{d \tau_y}< 0$ for all $\tau_y$. 
Thus, the optimal precision is $0$.

\item 
Suppose that $0>\eta> \underline\eta(\zeta,\lambda)$. Then, 
${d W(\phi(\tau_y),\tau_y)}/{d \tau_y}\gtrless 0$ for $\tau_y\lessgtr\bar\tau_y$. 
Thus, the optimal precision is $\max\{0,\bar\tau_y\}$.
\end{enumerate}
\end{proposition}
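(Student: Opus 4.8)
The plan is to exploit the fact that the social value of public information $dW(\phi(\tau_y),\tau_y)/d\tau_y$ has the same sign as $MW_D(\tau_y)$ wherever $\phi(\tau_y)>0$, together with the representation derived just before the statement,
\[
MW_D(\tau_y)=\frac{\eta\lambda}{\lambda+1}{MV_D}^*(\tau_y)-\zeta+\frac{\eta/(1-\alpha)+1}{\lambda+1}.
\]
Since ${MV_D}^*(\tau_y)$ is strictly increasing in $\tau_y$ and tends to infinity (see \eqref{MRTex}), the coefficient $\eta\lambda/(\lambda+1)$ governs the global monotonicity of $MW_D$: for $\lambda>0$ it is strictly increasing when $\eta>0$ and strictly decreasing when $\eta<0$. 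Moreover, an isoelastic cost with $\lambda>0$ has $C'(0)=0$, so $\phi(\tau_y)>0$ for every $\tau_y\ge 0$; hence the shape of $\tau_y\mapsto W(\phi(\tau_y),\tau_y)$ is determined by where this monotone function crosses zero.

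First I would dispatch the extreme cases. In (i), $\eta>\max\{\underline\eta(\zeta,\lambda),0\}$, Corollary \ref{main corollary 2} gives $MW_D(\tau_y)>0$ for every $\tau_y$ with $\phi(\tau_y)>0$; in (iii), $\eta<\min\{\underline\eta(\zeta,\lambda),0\}$, it gives $MW_D(\tau_y)<0$ throughout. Thus $W(\phi(\tau_y),\tau_y)$ is strictly monotone on $[0,\infty)$, yielding optimal precision $\infty$ and $0$, respectively. The substantive work is in (ii) and (iv), where $\eta$ lies strictly between $0$ and $\underline\eta(\zeta,\lambda)$ and Corollary \ref{main corollary 2} is silent. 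Here I would use single crossing: as $MW_D$ is strictly monotone in $\tau_y$ and ranges over an interval containing $0$, it has a unique zero $\bar\tau_y$ determined by $MW_D(\bar\tau_y)=0$. To make it explicit, I set $\tau_x=\phi(\bar\tau_y)=\bar\tau_x$, substitute the isoelastic inverse $\phi^{-1}(\tau_x)=-(1-\alpha)\tau_x-\tau_\theta+\beta/(c^{1/2}\tau_x^{\lambda/2})$ from \eqref{phiinv} into ${MV_D}^*$, and solve $MW_D=0$ for $\bar\tau_x$, obtaining \eqref{taux-star'}; feeding $\bar\tau_x$ back through $\phi^{-1}$ gives \eqref{phiinv'} and $\bar\tau_y=\bar\tau_z-\tau_\theta$. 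For (ii) ($\eta>0$) the increasing $MW_D$ makes $W$ decrease then increase (a valley), so the maximizer is an endpoint: it is $\infty$ when $\bar\tau_y\le 0$, equivalently $\tau_\theta\ge\bar\tau_z$, and otherwise one of $0$ or $\infty$. For (iv) ($\eta<0$) the decreasing $MW_D$ makes $W$ increase then decrease (a single peak at $\bar\tau_y$), so the maximizer is $\bar\tau_y$ if $\bar\tau_y>0$ and $0$ otherwise, that is $\max\{0,\bar\tau_y\}$. Along the way I would verify that the base of the power in \eqref{taux-star'} is positive—so that $\bar\tau_x>0$ is a genuine root—exactly in these ranges; writing the bracket as $(1-\alpha)((1+\lambda)\zeta-1)-(1+\lambda)\eta$ and using $(1-\alpha)((1+\lambda)\zeta-1)=(3\lambda+2)\underline\eta(\zeta,\lambda)/2$ reduces this to the sign comparison between $(3\lambda+2)\underline\eta$ and $2(1+\lambda)\eta$, which follows from $\eta$ lying between $0$ and $\underline\eta$ together with $3\lambda+2>2(1+\lambda)$.

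The remaining point, and the one I expect to be the main obstacle, is the boundary case $\lambda=0$, where the coefficient $\eta\lambda/(\lambda+1)$ vanishes and $MW_D$ equals the constant ${MW_D}^0$ of Lemma \ref{MWD special cases} only while $\phi(\tau_y)>0$. For a linear cost $\phi(\tau_y)>0$ fails once $\tau_y\ge\beta/\sqrt{c}-\tau_\theta$ by \eqref{linear cost}, and on that range $W(0,\tau_y)=\eta V(0,\tau_y)+\text{const}$ with $V(0,\tau_y)$ strictly increasing, so $dW/d\tau_y$ has the sign of $\eta$. I would show that the sign change of $dW/d\tau_y$ therefore occurs precisely at $\tau_y=\beta/\sqrt{c}-\tau_\theta$, which is exactly $\bar\tau_y=\bar\tau_z-\tau_\theta$ with the $\lambda=0$ branch $\bar\tau_z=\beta/\sqrt{c}$ and $\bar\tau_x=0$ in \eqref{phiinv'}, so that the valley/peak descriptions and the stated optima carry over unchanged. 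Stitching the $\phi>0$ and $\phi=0$ regimes together continuously, and confirming that no spurious interior critical point is created in the $\phi=0$ regime, is the delicate bookkeeping that closes the argument; the rest is the routine algebra of solving $MW_D=0$ and the two endpoint comparisons.
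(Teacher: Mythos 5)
Your proposal is correct and follows essentially the same route as the paper's proof: cases (i) and (iii) via Corollary \ref{main corollary 2}, the explicit solution of $MW_D(\tau_y)=0$ through \eqref{MV calc 2} and \eqref{phiinv} to obtain \eqref{taux-star'} and \eqref{phiinv'} for $\lambda>0$, and a separate treatment of $\lambda=0$ splitting at $\tau_y=\beta/\sqrt{c}-\tau_\theta$ where $W(0,\tau_y)=\eta V(0,\tau_y)$ pins the sign of the derivative to that of $\eta$. Your added check that the base of the power in \eqref{taux-star'} is positive exactly when $\eta$ lies strictly between $0$ and $\underline\eta(\zeta,\lambda)$ is a small explicit supplement to what the paper states as ``$\bar\tau_z>0$ if and only if either $0<\eta<\underline\eta$ or $0>\eta>\underline\eta$,'' not a different argument.
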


\begin{figure}
\centering
\includegraphics[width=8cm, bb=0 0 540 335]{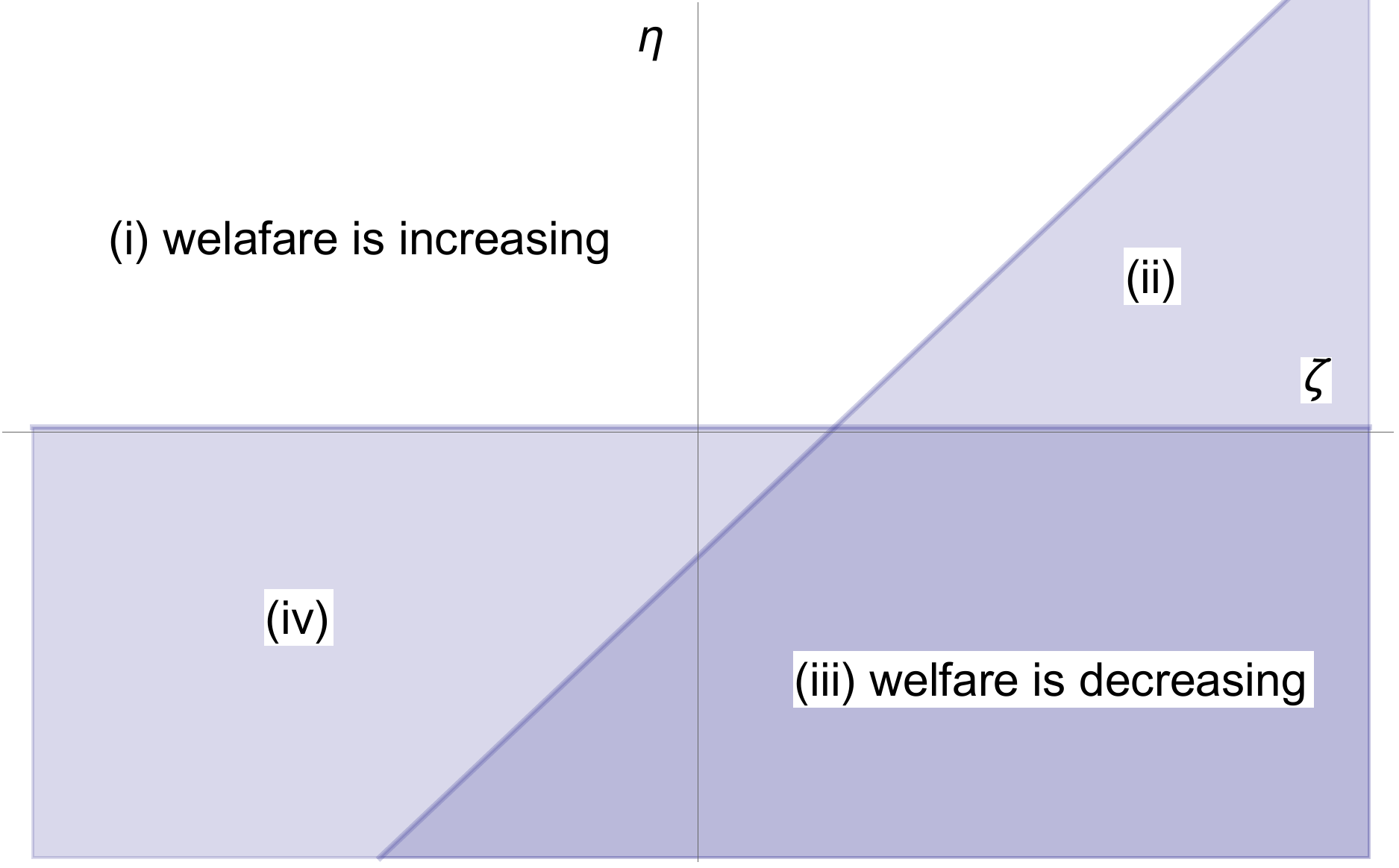}
\caption{The four cases on the $\zeta$-$\eta$ plane. 
The upward sloping line is the graph of $\eta=\underline\eta(\zeta,\lambda)$. Full disclosure is optimal in the region (i). Full or no disclosure is optimal in the region (ii). No disclosure is optimal in the region (iii). Partial or no disclosure is optimal in the region (iv). 
}
\label{figure 2}
\end{figure}

Note that, even in the case of (ii), 
full disclosure is optimal if $\bar\tau_y<0$. 
For example, this is the case 
if the cost function is linear and the marginal cost is sufficiently large. 
Indeed,  $\bar\tau_y={\beta}/{\sqrt{c}}-\tau_\theta<0$ for $c>\beta^2/\tau_\theta^2$ by \eqref{phiinv'} when $\lambda=0$.  
This observation is summarized in the following result and will be used in the next section.

\begin{corollary}\label{optimal proposition corollary}
Suppose that $\eta\neq 0$. Let $\lambda=0$ and $c>\beta^2/\tau_\theta^2$. Then, full disclosure is optimal if and only if $\eta>0$. 
\end{corollary}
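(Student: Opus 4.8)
The plan is to exploit the hypothesis $c>\beta^2/\tau_\theta^2$, which makes the marginal cost so high that agents never acquire private information; this collapses the problem to a one–dimensional monotonicity check, so the full machinery of Proposition~\ref{optimal proposition} is not strictly needed.

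First I would show that $\phi(\tau_y)=0$ for every $\tau_y\ge 0$. With the linear cost ($\lambda=0$) we have $C'(0)=c$, and the discussion following \eqref{FOC1} states that the equilibrium precision is zero whenever $C'(0)\ge\beta^2/(\tau_y+\tau_\theta)^2$. Since $c>\beta^2/\tau_\theta^2\ge\beta^2/(\tau_y+\tau_\theta)^2$ for all $\tau_y\ge 0$, this inequality holds throughout, so $\phi\equiv 0$ on $[0,\infty)$.

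Next I would evaluate the welfare along this degenerate equilibrium. Setting $\tau_x=0$ in \eqref{IV} and \eqref{CV} gives $D(0,\tau_y)=0$ and
\[
V(0,\tau_y)=\frac{\beta^2\tau_y}{(1-\alpha)^2\tau_\theta(\tau_y+\tau_\theta)},
\]
and $C(0)=0$, so by \eqref{welfare simple}, up to an additive constant,
\[
W(\phi(\tau_y),\tau_y)=W(0,\tau_y)=\eta\,\frac{\beta^2\tau_y}{(1-\alpha)^2\tau_\theta(\tau_y+\tau_\theta)}.
\]
The factor $\tau_y/(\tau_y+\tau_\theta)$ is strictly increasing on $[0,\infty)$, rising from $0$ to $1$. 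Hence $W(0,\tau_y)$ is strictly increasing when $\eta>0$ and strictly decreasing when $\eta<0$. In the former case the supremum is approached only as $\tau_y\to\infty$, so $\tau_y^*=\infty$ and full disclosure is optimal; in the latter the maximum sits at $\tau_y=0$, so no disclosure is optimal and full disclosure is not. Since $\eta\neq 0$, this yields the claimed equivalence. As a consistency check one can instead invoke Proposition~\ref{optimal proposition}: for $\lambda=0$ the threshold \eqref{phiinv'} reads $\bar\tau_z=\beta/\sqrt{c}$, and $c>\beta^2/\tau_\theta^2$ is exactly $\tau_\theta>\bar\tau_z$, i.e.\ $\bar\tau_y<0$; cases (i)--(ii) then give full disclosure when $\eta>0$, while cases (iii)--(iv) give optimal precision $\max\{0,\bar\tau_y\}=0$ when $\eta<0$.

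The only real subtlety, and the step I would flag as the main obstacle, is recognizing at the outset that the high marginal cost forces $\phi\equiv 0$. Once this is seen, the derivative-of-welfare / MWD analysis underlying Proposition~\ref{optimal proposition}, which is valid only where $\phi(\tau_y)>0$, is bypassed entirely, and one is left with a transparent corner-solution argument: the welfare reduces to $\eta$ times a strictly increasing, bounded function of $\tau_y$, whose optimum sits at one of the two endpoints according to the sign of $\eta$.
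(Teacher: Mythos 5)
Your proof is correct, and its core computation is in substance the same as the paper's: the observation that $c=C'(0)>\beta^2/\tau_\theta^2\geq\beta^2/(\tau_y+\tau_\theta)^2$ forces $\phi\equiv 0$, so that $W(\phi(\tau_y),\tau_y)=\eta V(0,\tau_y)=\eta\beta^2\tau_y/\bigl((1-\alpha)^2\tau_\theta(\tau_y+\tau_\theta)\bigr)$ is strictly monotone with the sign of $\eta$, is exactly the degenerate branch inside the paper's proof of Proposition~\ref{optimal proposition} for $\lambda=0$ (``when $\tau_y\geq\beta/\sqrt{c}-\tau_\theta$, $dW/d\tau_y>0$ iff $\eta>0$ because $W(\phi(\tau_y),\tau_y)=W(0,\tau_y)=\eta V(0,\tau_y)$''). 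The difference is packaging: the paper obtains the corollary by reading it off the statement of Proposition~\ref{optimal proposition}, noting $\bar\tau_z=\beta/\sqrt{c}$ and $\bar\tau_y=\beta/\sqrt{c}-\tau_\theta<0$ under the hypothesis, whereas you give a self-contained corner argument that bypasses the MWD machinery entirely --- which is legitimate, since that machinery applies only where $\phi(\tau_y)>0$, a region that is empty here. Your direct route even has a small substantive advantage: the four cases of Proposition~\ref{optimal proposition} are stated with strict inequalities and do not cover the boundary value $\eta=\underline\eta(\zeta,0)=(1-\alpha)(\zeta-1)$ when this is nonzero, so the ``consistency check'' derivation from the proposition's statement alone would leave that value of $\eta$ unaddressed, while your monotonicity argument (like the corresponding line in the paper's proof) handles every $\eta\neq 0$ uniformly. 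One point worth a sentence in a polished write-up: for $\eta>0$ the supremum is attained at $\tau_y^*=\infty$ in the paper's sense because $V(0,\tau_y)$ converges to the finite limit $\beta^2/\bigl((1-\alpha)^2\tau_\theta\bigr)$, so $W(0,\infty)$ equals the supremum, consistent with the definition allowing $\tau_y^*\in\mathbb{R}_+\cup\{\infty\}$.
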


We can also consider the gross expected welfare excluding the information cost. 
From \eqref{cost and dispersion}, the gross welfare is represented as 
\begin{align}
W^g(\phi(\tau_y),\tau_y)
&\equiv \zeta D(\phi(\tau_y),\tau_y)+\eta V(\phi(\tau_y),\tau_y)\notag \\
&=(\zeta+1/(\lambda+1)) D(\phi(\tau_y),\tau_y)+\eta V(\phi(\tau_y),\tau_y)-C(\phi(\tau_y)).
\label{gross welfare}\end{align}
Thus, we can use Proposition~\ref{optimal proposition} by replacing $\zeta$ with $\zeta+1/(\lambda+1)$. 

\section{Information disclosure with unknown information costs}
\label{Robust disclosure section}

Assume that the policymaker does not know what cost function the agents have.  
Even in this case, if the welfare increases with public information regardless of a cost function, 
the policymaker prefers more precise public information.
In Section \ref{regardless of costs}, we identify the class of welfare functions having this property. 
If a welfare function is not in this class, under what condition should the policymaker provide more precise public information? 
In Section \ref{robust disclosure}, we address this question by assuming that the policymaker evaluates the precision of public information in terms of the worst-case welfare.

\subsection{Positive (negative) social value regardless of a cost function}\label{regardless of costs}

We identify the class of welfare functions such that ${d W(\phi(\tau_y),\tau_y)}/{d \tau_y}> 0$ regardless of a cost function. 
According to Corollary \ref{main corollary 2},  
if $\eta>\max\{\underline\eta(\zeta,\rho),0\}$ for all $\rho\geq 0$, then ${d W(\phi(\tau_y),\tau_y)}/{d \tau_y}> 0$ for an arbitrary cost function with $\phi(\tau_y)>0$, which is valid for any $\tau_y$.
By elaborating on this argument, we obtain the following proposition.

\begin{proposition}\label{corollary main result 2}
The following three statements are equivalent. 
\begin{enumerate}[\em (i)]
	\item {There exists $\tau_y\geq 0$ such that, for any cost function with $\phi(\tau_y)>0$, ${d W(\phi(\tau_y),\tau_y)}/{d \tau_y}> 0$.} 
	\item For any cost function and any $\tau_y$ with $\phi(\tau_y)>0$, ${d W(\phi(\tau_y),\tau_y)}/{d \tau_y}> 0$. 
	\item $(\zeta,\eta)$ satisfies
\begin{equation}
\eta
\begin{cases}
\geq 0 &\text{ if }\zeta\leq 0,\\
\geq \underline{\eta}(\zeta,\infty)=2(1-\alpha)\zeta/3 &\text{ if }0< \zeta< 3,\\
>\underline{\eta}(\zeta,0)=(1-\alpha)(\zeta-1)&\text{ if }\zeta \geq 3.
\end{cases} \notag\label{robust welinc}
\end{equation}
\end{enumerate}
Similarly, the following three statements are equivalent. 
\begin{enumerate}[\em (i')]
	\item There exists $\tau_y\geq 0$ such that, for any cost function with $\phi(\tau_y)>0$, ${d W(\phi(\tau_y),\tau_y)}/{d \tau_y}< 0$. 
	\item For any cost function and any $\tau_y$ with $\phi(\tau_y)>0$, ${d W(\phi(\tau_y),\tau_y)}/{d \tau_y}< 0$.
	\item $(\zeta,\eta)$ satisfies 
\begin{equation}
\eta
\begin{cases}
\displaystyle
< \underline{\eta}(\zeta,0)=(1-\alpha)(\zeta-1) &\text{ if }\zeta \leq 1,\\
\leq 0 &\text{ if }\zeta > 1.\\
\end{cases}\notag\label{robust weldec}
\end{equation}
\end{enumerate}
\end{proposition}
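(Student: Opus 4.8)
The plan is to collapse the entire proposition into a single statement about when the affine function $\rho\mapsto {MW_D}^0+\rho\,{MW_D}^*$ is positive (resp.\ negative) for every admissible pair $(\rho,{MW_D}^*)$. Since $dW(\phi(\tau_y),\tau_y)/d\tau_y$ shares its sign with $MW_D(\tau_y)$, and since Proposition~\ref{main proposition 1} gives $MW_D(\tau_y)=({MW_D}^0+\rho\,{MW_D}^*(\tau_y))/(1+\rho)$ with $\rho=\rho(\phi(\tau_y))$, the sign in question is that of ${MW_D}^0+\rho\,{MW_D}^*(\tau_y)$. Here ${MW_D}^0=\eta/(1-\alpha)-\zeta+1$ is a constant independent of the cost, while ${MW_D}^*(\tau_y)=\eta\,{MV_D}^*(\tau_y)-\zeta$ depends on the cost only through $\phi(\tau_y)$ via \eqref{MRTex}.

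The crucial step, and the one I expect to be the main obstacle to state cleanly, is an achievability claim that forces (i) and (ii) to coincide. First I would fix an arbitrary $\tau_y\ge 0$ and show that, as the cost function ranges over all convex costs, the pair $(\rho(\phi(\tau_y)),\phi(\tau_y))$ realizes every value in $[0,\infty)\times(0,\infty)$. Concretely, given a target $(\rho_0,\tau_x)$, the isoelastic cost \eqref{isoelastic cost function} with $\lambda=\rho_0$ and $c=\beta^2/(\tau_x^{\rho_0}((1-\alpha)\tau_x+\tau_y+\tau_\theta)^2)$ has constant elasticity $\rho_0$ and, by the first-order condition \eqref{FOC1}, yields $\phi(\tau_y)=\tau_x$. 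By \eqref{MRTex}, ${MV_D}^*(\tau_y)$ is a strictly decreasing bijection of $\phi(\tau_y)$ onto $(3/(2(1-\alpha)),\infty)$, so that interval is swept out independently of $\rho$. Since the resulting achievable set of $(\rho,{MV_D}^*)$ does not depend on $\tau_y$, the existential quantifier in (i) and the universal one in (ii) impose identical constraints on $(\zeta,\eta)$; together with the trivial implication (ii)$\Rightarrow$(i), this gives (i)$\Leftrightarrow$(ii).

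Next I would characterize the common constraint. For ${MW_D}^0+\rho\,{MW_D}^*>0$ to hold for all $\rho\ge 0$ and all achievable ${MW_D}^*$, evaluating at $\rho=0$ forces ${MW_D}^0>0$, while letting $\rho\to\infty$ forces $\inf{MW_D}^*\ge 0$ over the achievable range; conversely these two conditions are jointly sufficient, since then the affine function has a positive intercept and a nonnegative slope. Computing the infimum, when $\eta>0$ the limit ${MV_D}^*\downarrow 3/(2(1-\alpha))$ gives $\inf{MW_D}^*=3\eta/(2(1-\alpha))-\zeta$, so $\inf{MW_D}^*\ge 0$ iff $\eta\ge\underline{\eta}(\zeta,\infty)$; when $\eta=0$ we have ${MW_D}^*\equiv-\zeta$, so the condition is $\zeta\le 0$; and when $\eta<0$ we have $\inf{MW_D}^*=-\infty$, which is incompatible. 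Separately, ${MW_D}^0>0$ is exactly $\eta>\underline{\eta}(\zeta,0)$.

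Finally I would assemble the cases. Because $\underline{\eta}(\zeta,\infty)-\underline{\eta}(\zeta,0)$ has the sign of $3-\zeta$, the binding constraint switches at $\zeta=3$: for $\zeta<3$ the weak bound $\eta\ge\underline{\eta}(\zeta,\infty)$ dominates and already implies ${MW_D}^0>0$ strictly, whereas for $\zeta\ge 3$ the strict bound $\eta>\underline{\eta}(\zeta,0)$ dominates; the further split at $\zeta=0$ only records that $\underline{\eta}(\zeta,\infty)\le 0$ there, where the condition reduces to $\eta\ge 0$ with $\eta=0$ admissible precisely because $\zeta\le 0$. This reproduces (iii). The strict-versus-weak pattern is the bookkeeping I would watch most carefully: the intercept at $\rho=0$ is attained, forcing strictness when ${MW_D}^0$ binds, whereas the value $3/(2(1-\alpha))$ of ${MV_D}^*$ is only a limit, permitting a weak inequality when $\inf{MW_D}^*$ binds. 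The equivalences (i$'$)--(iii$'$) follow by the mirror-image argument: one asks instead when ${MW_D}^0+\rho\,{MW_D}^*<0$ for all $\rho$, replacing the infimum of ${MW_D}^*$ by its supremum, which is finite only when $\eta<0$ and equals $3\eta/(2(1-\alpha))-\zeta$, with the crossover now occurring at $\zeta=1$ where $\underline{\eta}(\zeta,0)=0$.
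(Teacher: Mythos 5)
Your proposal is correct and follows essentially the same route as the paper's proof: both reduce the sign of $dW/d\tau_y$ to that of ${MW_D}^0+\rho\,{MW_D}^*(\tau_y)$ via Proposition~\ref{main proposition 1}, extract necessity from the endpoints $\rho=0$ and $\rho\to\infty$ together with the range $(3/(2(1-\alpha)),\infty)$ of ${MV_D}^*$ from Corollary~\ref{main corollary 1}, and track the same strict-versus-weak distinction (attained intercept at $\rho=0$ versus unattained infimum of ${MV_D}^*$). Your only addition is to make explicit, via isoelastic costs with calibrated $(\lambda,c)$, the achievability of every pair $(\rho,\phi(\tau_y))\in[0,\infty)\times(0,\infty)$ at any fixed $\tau_y$ --- a step the paper leaves implicit --- which also yields the equivalence of (i) and (ii) directly rather than through the paper's chain (i) $\Rightarrow$ (iii) $\Rightarrow$ (ii).
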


Figure \ref{figure 4} illustrates the classes of welfare functions identified by Corollary \ref{corollary main result 2} on the $\zeta$-$\eta$ plane. 
Regardless of a cost function, the welfare increases with public information if $(\zeta,\eta)$ is in the upper-left area; the welfare decreases with public information if $(\zeta,\eta)$ is in the lower-right area. 
If $(\zeta,\eta)$ is not in these areas, there exists a cost function such that the welfare is not a monotone function.

\begin{figure}
\centering
\includegraphics[width=8cm, bb=0 0 540 335]{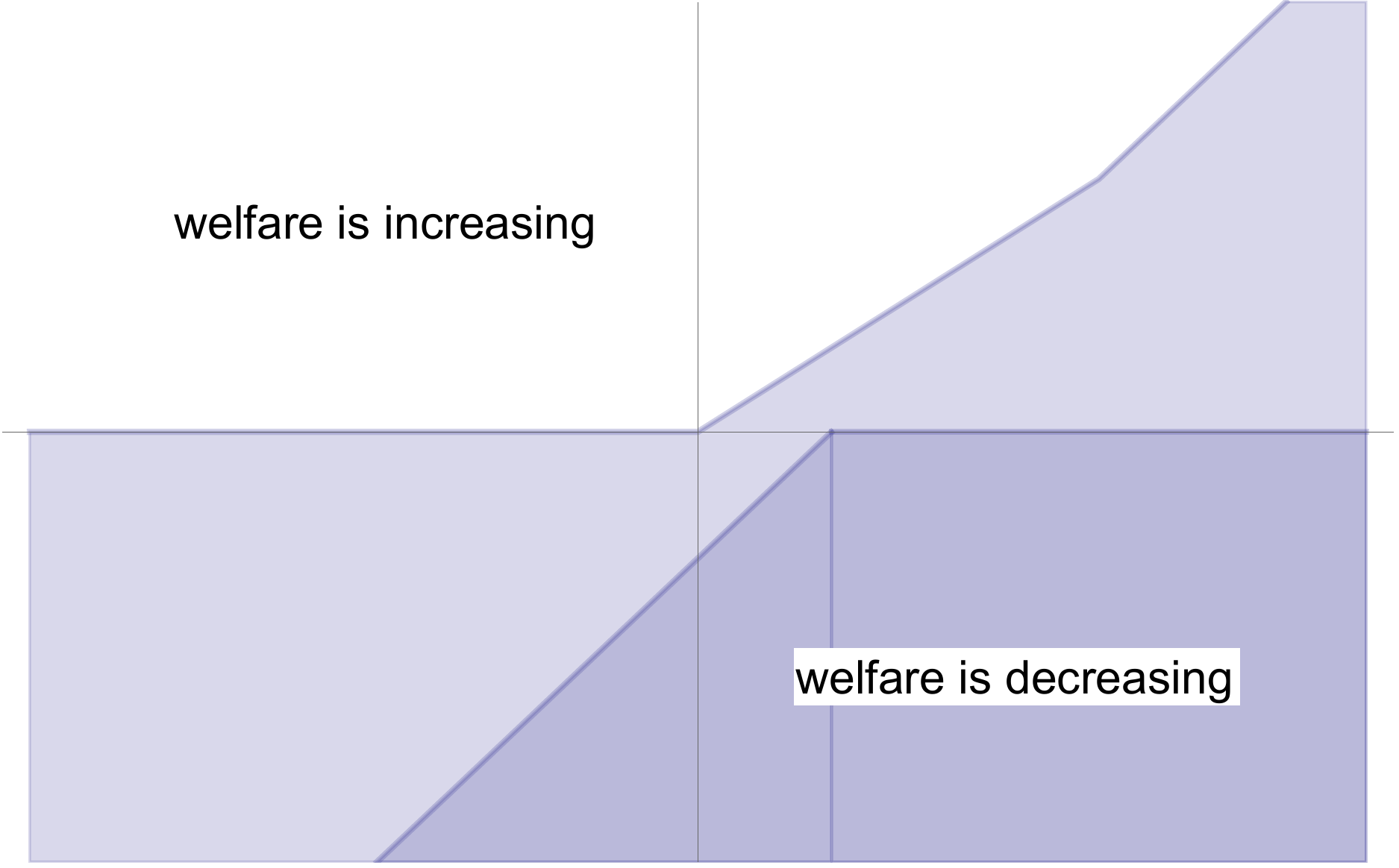}
\caption{ The welfare effects of public information under an arbitrarily cost function.}
\label{figure 4}
\end{figure}

\subsection{The robust disclosure of public information}\label{robust disclosure}

Consider a policymaker with a welfare function that does not satisfy either condition in Proposition~\ref{corollary main result 2}. 
Assume that the policymaker evaluates the precision of public information in terms of the worst-case welfare, which is the infimum of the expected welfare over the collection of all convex cost functions. We denote the collection by
\[
\Gamma\equiv\{C:\mathbb{R}_+\to\mathbb{R}_+\mid \text{$C(0)=0$, $C'(\tau_x)>0$, $C''(\tau_x)\geq 0$ for all $\tau_x\geq 0$}\}.
\]
For each $C\in \Gamma$, we write $\phi_C(\tau_y)$ for the precision of private information when 
the cost function is $C$ and the precision of public information is $\tau_y$. 
We also consider the subcollection of cost functions such that 
the precision of private information is less than or equal to $\kappa\in\mathbb{R}_{++}
\cup\{\infty\}$ for all $\tau_y$, which is given by  
\[
\Gamma_\kappa\equiv\{C\in\Gamma\mid \phi_C(\tau_y)\leq \kappa \text{ for all }\tau_y\}
=\{C\in\Gamma\mid C'(\kappa)\geq {\beta^2 }/{\left((1-\alpha)\kappa+\tau_\theta\right)^2}\}.
\]
The above equality follows from the first-order condition for the precision \eqref{FOC1}. 
Note that $\Gamma=\Gamma_\infty$ and $\Gamma_\kappa\supseteq \Gamma_{\kappa'}$ if $\kappa\geq\kappa'$.

For each $\tau_y$, let 
\begin{align}
F_\kappa(\tau_y)&\equiv 
\inf_{C\in \Gamma_\kappa}W(\phi_C(\tau_y),\tau_y)\notag\label{worst-case welfare:def}
\end{align}
be the infimum of the expected welfare over $\Gamma_\kappa$. 
We consider $F_\kappa(\tau_y)$ with $\kappa<\infty$ as well as $\kappa=\infty$ for two reasons. 
	First, the agents may not be able to acquire perfectly accurate information. 
	Second, $F_\infty(\tau_y)$ is constant if the marginal cost is zero and the agents choose $\tau_x=\infty$ in the worst-case scenario, i.e., $F_\infty(\tau_y)=W(\infty,\tau_y)<\infty$, which can be the case for some welfare functions.

The policymaker chooses the precision of public information that maximizes the worst-case welfare, which is said to be robustly optimal or robust for short.  
\begin{definition}
The precision of public information $\tau_y^*\in\mathbb{R}_+\cup\{\infty\}$ is $\kappa$-robust if 
\[
F_\kappa(\tau_y^*)=\sup_{\tau_y}F_\kappa(\tau_y).
\]
We say that $\tau_y^*$ is robust if $\tau_y^*$ is $\kappa$-robust for all $\kappa\in \mathbb{R}_{++}\cup \{\infty\}$. 
\end{definition}

A key observation is that the worst-case cost function is linear. 
This is because if a linear cost function and a strictly convex cost function have the same marginal cost at the same equilibrium precision of private information in the second-period subgame, the total cost is greater for the former than the latter. 
Moreover, when the cost function is linear, the dispersion equals the cost by \eqref{cost and dispersion}. Therefore, the worst-case welfare has the following representation. 
\begin{lemma}\label{worst key lemma}
For each $\tau_y\geq 0$ and $\kappa\in\mathbb{R}_+\cup\{\infty\}$, 
\begin{align}
F_\kappa(\tau_y)= \inf_{\tau_x\leq \kappa} W^0(\tau_x,\tau_y), \label{infimum W}
\end{align}
where 
$W^0(\tau_x,\tau_y)\equiv \eta V(\tau_x,\tau_y)+(\zeta-1)D(\tau_x,\tau_y)$ is the expected welfare when the cost function is linear and the precision of acquired private information is $\tau_x$ in the second-period subgame.
\end{lemma}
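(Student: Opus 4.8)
The plan is to prove the two inequalities $F_\kappa(\tau_y)\ge\inf_{\tau_x\le\kappa}W^0(\tau_x,\tau_y)$ and $F_\kappa(\tau_y)\le\inf_{\tau_x\le\kappa}W^0(\tau_x,\tau_y)$ separately. For the lower bound, I would fix any $C\in\Gamma_\kappa$ and write $\tau_x=\phi_C(\tau_y)$, which lies in $[0,\kappa]$ by definition of $\Gamma_\kappa$. The equilibrium first-order condition \eqref{FOC2} gives $D(\tau_x,\tau_y)=\tau_x C'(\tau_x)$, while convexity of $C$ with $C(0)=0$ gives the tangent-line bound $C(\tau_x)\le\tau_x C'(\tau_x)$ (the tangent to $C$ at $\tau_x$, evaluated at $t=0$, cannot exceed $C(0)=0$). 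Substituting both into the welfare expression $W=\zeta D+\eta V-C$ from Lemma \ref{lemma 2} yields $W(\tau_x,\tau_y)\ge\zeta D+\eta V-D=W^0(\tau_x,\tau_y)\ge\inf_{\tau_x'\le\kappa}W^0(\tau_x',\tau_y)$. Taking the infimum over $C\in\Gamma_\kappa$ establishes the lower bound; the corner case $\tau_x=0$ is immediate because then $D=C(0)=0$.

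For the upper bound I would construct, for each target $\tau_x^0\in[0,\kappa)$, an admissible cost function whose welfare at $\tau_y$ equals $W^0(\tau_x^0,\tau_y)$. Put $c_0=\beta^2/((1-\alpha)\tau_x^0+\tau_y+\tau_\theta)^2$ and take $C$ to be linear with slope $c_0$ on $[0,\tau_x^0]$ and then to continue as a smooth convex function whose derivative rises to at least $\beta^2/((1-\alpha)\kappa+\tau_\theta)^2$ by $\tau_x=\kappa$. Because the marginal benefit in \eqref{FOC1} equals $c_0$ exactly at $\tau_x^0$ and is strictly decreasing, while the marginal cost is nondecreasing and equal to $c_0$ there, the equilibrium precision at $\tau_y$ is exactly $\tau_x^0$; linearity on $[0,\tau_x^0]$ forces $C(\tau_x^0)=c_0\tau_x^0=D(\tau_x^0,\tau_y)$ by \eqref{FOC2}, so the welfare is exactly $W^0(\tau_x^0,\tau_y)$; and the derivative condition at $\kappa$ places $C$ in $\Gamma_\kappa$ through the characterization of $\Gamma_\kappa$ in terms of $C'(\kappa)$. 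Hence $F_\kappa(\tau_y)\le W^0(\tau_x^0,\tau_y)$ for every $\tau_x^0<\kappa$, and continuity of $W^0$ in $\tau_x$ upgrades this to $F_\kappa(\tau_y)\le\inf_{\tau_x\le\kappa}W^0(\tau_x,\tau_y)$ over the closed interval.

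The delicate part is the upper bound, because two requirements pull against each other. Matching $W^0(\tau_x^0,\tau_y)$ exactly demands that $C$ be linear on the whole segment $[0,\tau_x^0]$ — a convex function whose tangent at $\tau_x^0$ passes through the origin must coincide with that tangent throughout $[0,\tau_x^0]$ — whereas membership in $\Gamma_\kappa$ demands that $C$ be steep enough at $\kappa$ to cap the equilibrium precision at $\kappa$ for every $\tau_y'$, and the purely linear tangent cost generally fails this cap when $\tau_y$ is large. The resolution is to steepen $C$ only on $[\tau_x^0,\kappa]$, which leaves the equilibrium at $\tau_y$ (and hence the welfare) untouched while restoring admissibility; one must also note that for $\tau_y>0$ the precision $\kappa$ is itself never attained within $\Gamma_\kappa$, so the value $W^0(\kappa,\tau_y)$ is reached only in the limit $\tau_x^0\to\kappa$, which is exactly why the continuity step is needed. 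For $\kappa=\infty$ the cap is vacuous and the tangent linear cost works directly, recovering the familiar special case.
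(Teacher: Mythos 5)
Your proposal is correct and follows essentially the same route as the paper: the lower bound via the convexity tangent bound $C(\tau_x)\leq\tau_x C'(\tau_x)=D(\tau_x,\tau_y)$ combined with \eqref{FOC2}, and the upper bound via a cost that is linear with the FOC slope up to the target precision and steepened above it, which is exactly the paper's $\tilde C$ construction (the paper merely packages both directions into that single modification of a given $C\in\Gamma_\kappa$). If anything, your continuity remark at $\tau_x^0\to\kappa$ is slightly more careful than the paper's assertion that every $\tau_x\leq\kappa$ is attained by some $C\in\Gamma_\kappa$, which fails at $\tau_x=\kappa$ when $\tau_y>0$ but is harmless for the infimum.
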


By this lemma, if $W^0(\tau_x,\tau_y)$ is monotone in $\tau_y$ for each $\tau_x$,  
then $F_\kappa(\tau_y)$ is also monotone in $\tau_y$ for each $\kappa<\infty$. 
Thus, using the MWD in the case of exogenous private information, 
we can obtain a sufficient condition for the monotonicity of the worst-case welfare and robustness of full or no disclosure.

\begin{proposition}\label{main result robustly optimal -1}
The following holds. 

\begin{enumerate}[\em (i)]
	\item Suppose that $\eta>0$ and $\eta\geq 2(1-\alpha)(\zeta-1)/3$. Then, 
$W^0(\tau_x,\tau_y)$ is strictly increasing in $\tau_y$ for each $\tau_x$. 
Thus, $F_\kappa(\tau_y)$ is strictly increasing for each $\kappa\in\mathbb{R}_+\cup\{\infty\}$. The robust precision is $\infty$. 
	\item Suppose that $\eta<0$ and $\eta\leq 2(1-\alpha)(\zeta-1)/3$. Then, 
$W^0(\tau_x,\tau_y)$ is strictly decreasing in $\tau_y$ for each $\tau_x$. 
Thus, $F_\kappa(\tau_y)$ is strictly decreasing for each $\kappa<\infty$, while $F_\infty(\tau_y)$ is constant. The robust precision is $0$. 
\end{enumerate}
\end{proposition}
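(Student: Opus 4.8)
The plan is to reduce the statement about $W^0(\tau_x,\tau_y)$ to the sign of a single partial derivative in $\tau_y$, and then propagate strict monotonicity through the infimum that defines $F_\kappa$, handling the finite-$\kappa$ and $\kappa=\infty$ cases separately. \emph{Step 1 (monotonicity of $W^0$ in $\tau_y$).} Fix $\tau_x>0$ and differentiate the closed forms \eqref{CV} and \eqref{IV}; writing $s=(1-\alpha)\tau_x+\tau_y+\tau_\theta$, a short computation gives $\partial D/\partial\tau_y=-2\beta^2\tau_x/s^3<0$ and $\partial V/\partial\tau_y=\beta^2(3(1-\alpha)\tau_x+\tau_y+\tau_\theta)/((1-\alpha)^2 s^3)>0$. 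Dividing $\partial W^0/\partial\tau_y=\eta\,\partial V/\partial\tau_y+(\zeta-1)\,\partial D/\partial\tau_y$ by $|\partial D/\partial\tau_y|>0$ shows its sign equals that of $\eta\,{MV_D}^*(\tau_x,\tau_y)-(\zeta-1)$, where ${MV_D}^*(\tau_x,\tau_y)=(3(1-\alpha)\tau_x+\tau_y+\tau_\theta)/(2(1-\alpha)^2\tau_x)$ is the exogenous MVD of \eqref{MRTex} evaluated at the free precision $\tau_x$. The key input is the uniform bound ${MV_D}^*(\tau_x,\tau_y)>3/(2(1-\alpha))$ from \eqref{MRTex}. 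Under (i), $\eta>0$ and $\eta\geq 2(1-\alpha)(\zeta-1)/3$ give $\eta\,{MV_D}^*>3\eta/(2(1-\alpha))\geq\zeta-1$, so $\partial W^0/\partial\tau_y>0$; under (ii), $\eta<0$ reverses the multiplication and $\eta\leq 2(1-\alpha)(\zeta-1)/3$ yields $\partial W^0/\partial\tau_y<0$. The boundary $\tau_x=0$ is immediate, since there $D\equiv0$ and $W^0=\eta V$ with $V$ strictly increasing in $\tau_y$.

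\emph{Step 2 (transfer to $F_\kappa$ for $\kappa<\infty$).} Here $F_\kappa(\tau_y)=\inf_{\tau_x\in[0,\kappa]}W^0(\tau_x,\tau_y)$ by Lemma \ref{worst key lemma}, an infimum of a continuous function over a compact interval, hence attained. For $\tau_y^1<\tau_y^2$ in case (i), choosing $\hat\tau_x$ that attains the infimum at $\tau_y^2$ gives $F_\kappa(\tau_y^1)\leq W^0(\hat\tau_x,\tau_y^1)<W^0(\hat\tau_x,\tau_y^2)=F_\kappa(\tau_y^2)$ by Step 1, so $F_\kappa$ is strictly increasing; the symmetric argument, attaining the infimum at $\tau_y^1$, gives strict decrease in case (ii).

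\emph{Step 3 (the case $\kappa=\infty$).} This is where (i) and (ii) genuinely diverge, and I expect it to be the main obstacle. I would first compute $\lim_{\tau_x\to\infty}W^0(\tau_x,\tau_y)=\eta\beta^2/((1-\alpha)^2\tau_\theta)=:C_\infty$, which is independent of $\tau_y$. Using the identity $W^0(\tau_x,\tau_y)-C_\infty=(\beta^2/s^2)\big[(\zeta-1)\tau_x-\eta(2(1-\alpha)\tau_x+\tau_y+\tau_\theta)/(1-\alpha)^2\big]$, I would check that in case (ii) the hypotheses force the bracket to be strictly positive for every finite $\tau_x\geq0$ (both its $\tau_x$-coefficient and its constant term are nonnegative, the latter strictly, under $\eta<0$ and $\eta\leq 2(1-\alpha)(\zeta-1)/3$). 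Hence $W^0>C_\infty$ everywhere while $W^0\to C_\infty$, so $F_\infty(\tau_y)=C_\infty$ is constant, the infimum being approached only as $\tau_x\to\infty$. In case (i) the bracket is strictly negative at $\tau_x=0$, so $W^0(0,\tau_y)<C_\infty=\lim_{\tau_x\to\infty}W^0$; together with continuity this confines the infimum to a compact subinterval of $\tau_x$, so it is attained at a finite $\hat\tau_x$ and the Step 2 argument again yields strict increase of $F_\infty$.

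\emph{Step 4 (robust precision).} In case (i), $F_\kappa$ is strictly increasing for every $\kappa$, so its supremum is approached as $\tau_y\to\infty$, and $\tau_y^*=\infty$ is $\kappa$-robust for all $\kappa$, hence robust. In case (ii), $F_\kappa$ is maximized at $\tau_y=0$ for every finite $\kappa$ while $F_\infty$ is constant, so $\tau_y^*=0$ maximizes $F_\kappa$ for all $\kappa$ and is therefore robust. The crux remains Step 3: it is the sign of $\eta$ itself, not merely the affine conditions, that decides whether the worst-case (linear) cost drives $\tau_x\to\infty$ toward a $\tau_y$-independent value, flattening $F_\infty$ in case (ii), or keeps the minimizing $\tau_x$ finite so that strict monotonicity survives in the limit in case (i).
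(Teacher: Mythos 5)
Your proof is correct and follows essentially the same route as the paper: sign the affine-in-$\tau_x$ criterion for $\partial W^0/\partial\tau_y$ (your Step 1 is exactly the paper's Lemma on partial derivatives of $W^0$), transfer strict monotonicity through the attained infimum for $\kappa<\infty$, and handle $\kappa=\infty$ by showing the minimizer stays finite in case (i) while the infimum is approached only as $\tau_x\to\infty$ in case (ii). The only (harmless) variation is in Step 3, where your explicit identity for $W^0(\tau_x,\tau_y)-C_\infty$ replaces the paper's argument via the sign of $\partial W^0/\partial\tau_x$ and the direct comparison $W^0(0,\tau_y)>W^0(\infty,\tau_y)$, reaching the same conclusions.
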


It is clear that the worst-case welfare is increasing 
if the welfare is increasing regardless of cost functions, but the converse is not true. 
For example, if $\lim_{\lambda\to\infty}\underline\eta(\zeta,\lambda)=2(1-\alpha)\zeta/3>\eta>2(1-\alpha)(\zeta-1)/3>0$, then the worst-case welfare is increasing by Proposition~\ref{main result robustly optimal -1}, so full disclosure is robust. However, no disclosure can be optimal if $\lambda$ is sufficiently large by Proposition~\ref{optimal proposition}. Such an example will be discussed in Section~\ref{Cournot games}. 

Similarly, the worst-case welfare is decreasing if the welfare is decreasing regardless of cost functions, but the converse is not true. 
For example, if $0>2(1-\alpha)(\zeta-1)/3>\eta>(1-\alpha)(\zeta-1)=\underline\eta(\zeta,0)$, then 
the worst-case welfare is decreasing by Proposition \ref{main result robustly optimal -1}, so no disclosure is robust. However, partial disclosure is optimal if $\lambda$ is sufficiently small by Proposition \ref{optimal proposition}.  

Even if $W^0(\tau_x,\tau_y)$ is not monotone in $\tau_y$, 
we can identify robust disclosure by directly calculating \eqref{infimum W}.

\begin{proposition}\label{main result robustly optimal}
The following holds. 

\begin{enumerate}[\em (i)]
	\item Suppose that $0<\eta< 2(1-\alpha)(\zeta-1)/3$. Then, $F_\kappa(\tau_y)$ is strictly increasing for each $\kappa\in\mathbb{R}_+\cup\{\infty\}$. The robust precision is $\infty$. 
	\item Suppose that $0>\eta> 2(1-\alpha)(\zeta-1)/3$ and $\kappa<\infty$. Then, $F_\kappa(\tau_y)$ is strictly increasing for $\tau_y< g(\kappa)$ and strictly decreasing for $\tau_y> g(\kappa)$, where 
\[
g(\kappa)\equiv 
-(1-\alpha)(3 \eta-2(1-\alpha)  (\zeta-1))\kappa/\eta-\tau_\theta. 
\] 
The $\kappa$-robust precision is $\max\{g(\kappa),0\}$, which equals $0$ if $g(\kappa)\leq 0$, i.e., $\kappa\leq -\tau_\theta\eta/((1-\alpha)(3 \eta-2(1-\alpha)  (\zeta-1)))$. 
If $\kappa=\infty$, the following holds.
\begin{enumerate}[\em (a)]
\item If $\eta\leq (1-\alpha)(\zeta-1)/2$, $F_\infty(\tau_y)$ is constant. 
\item If $\eta > (1-\alpha)(\zeta-1)/2$, $F_\infty(\tau_y)$ is strictly increasing. The $\infty$-robust precision is $\infty$.
\end{enumerate}
\end{enumerate}
\end{proposition}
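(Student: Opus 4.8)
The plan is to work entirely from the representation $F_\kappa(\tau_y)=\inf_{\tau_x\le\kappa}W^0(\tau_x,\tau_y)$ of Lemma \ref{worst key lemma}, where $W^0(\tau_x,\tau_y)=\eta V(\tau_x,\tau_y)+(\zeta-1)D(\tau_x,\tau_y)$. The single computation that drives everything is the sign of $\partial W^0/\partial\tau_x$. Using \eqref{CV}, \eqref{IV} and clearing the positive factor $\beta^2/\bigl((1-\alpha)\tau_x+\tau_y+\tau_\theta\bigr)^3$, one finds that $\partial W^0/\partial\tau_x$ has the same sign as the affine function
\[
S(\tau_x,\tau_y)\equiv\bigl(2\eta-(1-\alpha)(\zeta-1)\bigr)\tau_x+(\zeta-1)(\tau_y+\tau_\theta).
\]
Thus $W^0(\cdot,\tau_y)$ is monotone, single-peaked, or single-dipped in $\tau_x$ according to the sign of its slope $2\eta-(1-\alpha)(\zeta-1)$, whose vanishing is exactly the threshold $\eta=(1-\alpha)(\zeta-1)/2$ separating subcases (a) and (b). I will also record $W^0(0,\tau_y)=\beta^2\eta\tau_y/\bigl((1-\alpha)^2\tau_\theta(\tau_y+\tau_\theta)\bigr)$ and, since the leading terms cancel, $\lim_{\tau_x\to\infty}W^0(\tau_x,\tau_y)=\beta^2\eta/\bigl((1-\alpha)^2\tau_\theta\bigr)$, noting that both share the sign of $\eta$ and that $W^0(0,\tau_y)$ equals the limit times $\tau_y/(\tau_y+\tau_\theta)$.

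For part (i) we have $\zeta>1$ and $\eta>0$, so the intercept $(\zeta-1)(\tau_y+\tau_\theta)$ of $S$ is positive; hence $W^0(\cdot,\tau_y)$ either increases throughout or rises to an interior maximum and then decreases toward its positive limit. Since on that decreasing branch $W^0$ stays above its limit while $\eta>0$ places $W^0(0,\tau_y)$ strictly below the limit, the infimum over $[0,\kappa]$ is attained at $\tau_x=0$ for every $\kappa$. Therefore $F_\kappa(\tau_y)=W^0(0,\tau_y)$, which is strictly increasing because $\tau_y/(\tau_y+\tau_\theta)$ is, and full disclosure is robust.

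For part (ii) we have $\zeta<1$ and $\eta<0$, so the intercept of $S$ is negative and $\tau_x=0$ is never the minimizer. When $\kappa<\infty$ the minimizer $\tau_x^{\min}(\tau_y)$ is therefore either the interior dip $\tau_x^{\circ}$ (when the slope of $S$ is positive) or the endpoint $\kappa$. By the envelope theorem $dF_\kappa/d\tau_y$ equals $\partial W^0/\partial\tau_y$ at $\tau_x^{\min}$, so I compute this partial in each case. Using the fixed-$\tau_x$ marginal volatility $\bigl(3(1-\alpha)\tau_x+\tau_y+\tau_\theta\bigr)/\bigl(2(1-\alpha)^2\tau_x\bigr)$ from \eqref{MRTex}, the sign of $\partial W^0/\partial\tau_y$ is that of $\eta\bigl(3(1-\alpha)\tau_x+\tau_y+\tau_\theta\bigr)-2(1-\alpha)^2(\zeta-1)\tau_x$. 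At an interior minimizer, substituting $S(\tau_x^{\circ},\tau_y)=0$ collapses this to a positive multiple of $-(\eta-(1-\alpha)(\zeta-1))^2/(\zeta-1)$, which is strictly positive since $\zeta<1$ and $\eta\neq(1-\alpha)(\zeta-1)$; at $\tau_x=\kappa$ it is strictly decreasing in $\tau_y$ with a unique zero, and solving that zero reproduces the stated $g(\kappa)$. Finally I verify that the switch from the interior to the boundary regime, occurring at the $\tau_y^{\dagger}$ solving $S(\kappa,\tau_y^{\dagger})=0$, satisfies $g(\kappa)-\tau_y^{\dagger}\propto\bigl((1-\alpha)(\zeta-1)-\eta\bigr)^2/\bigl(\eta(\zeta-1)\bigr)>0$. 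Hence $dF_\kappa/d\tau_y>0$ on $(0,g(\kappa))$ and $<0$ beyond, so the $\kappa$-robust precision is $\max\{g(\kappa),0\}$, with the stated condition for it to equal $0$ obtained by solving $g(\kappa)\le0$.

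For $\kappa=\infty$ the slope of $S$ decides the limiting behavior: if $\eta\le(1-\alpha)(\zeta-1)/2$ the slope is nonpositive, $W^0(\cdot,\tau_y)$ decreases to the constant $\beta^2\eta/\bigl((1-\alpha)^2\tau_\theta\bigr)$, and $F_\infty$ is that constant, giving subcase (a); if $\eta>(1-\alpha)(\zeta-1)/2$ the infimum is the interior dip, and the interior envelope computation above shows $F_\infty$ is strictly increasing, giving $\infty$-robust precision $\infty$ in subcase (b). The main obstacle is the finite-$\kappa$ analysis of part (ii): the minimizer in $\tau_x$ genuinely migrates from interior to boundary as $\tau_y$ grows, so single-peakedness of $F_\kappa$ is not inherited termwise and must be assembled from the envelope sign at interior minima together with the inequality $\tau_y^{\dagger}<g(\kappa)$. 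Both facts reduce to recognizing the perfect square $(\eta-(1-\alpha)(\zeta-1))^2$, which is the crux of the whole argument.
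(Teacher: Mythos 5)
Your proof is correct and follows essentially the same route as the paper: both reduce everything to $F_\kappa(\tau_y)=\inf_{\tau_x\le\kappa}W^0(\tau_x,\tau_y)$, use the same affine sign conditions for $\partial W^0/\partial\tau_x$ and $\partial W^0/\partial\tau_y$ (the paper's Lemma \ref{key partial derivative}, with your $\tau_x^{\circ}=f(\tau_y)$ and zero locus $g$), split on the sign of $2\eta-(1-\alpha)(\zeta-1)$, and hinge on the perfect square $(\eta-(1-\alpha)(\zeta-1))^2$ both for positivity at the interior dip and for the ordering $f^{-1}(\kappa)<g(\kappa)$. The only cosmetic differences are that in part (i) you compare $W^0(0,\tau_y)$ to the tail limit rather than computing the explicit difference $W^0(\kappa,\tau_y)-W^0(0,\tau_y)$, and in part (ii) you get the sign of $dW^0(f(\tau_y),\tau_y)/d\tau_y$ via the envelope theorem rather than by direct differentiation, as the paper does.
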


In summary, full disclosure is robust if and only if $\eta>0$, which is true if and only if full disclosure is optimal under some linear cost function by Corollary \ref{optimal proposition corollary}.  
In this case, robust disclosure is more informative than optimal disclosure. 
If $\eta<0$ and $\eta\leq 2(1-\alpha)(\zeta-1)/3$, no disclosure is robust, where robust disclosure is not necessarily more informative than optimal disclosure. 
If $0>\eta> 2(1-\alpha)(\zeta-1)/3$, 
$\kappa$-robust precision is increasing in $\kappa$, and there is no robust precision. 
Figure \ref{figure 4.5} illustrates the classes of welfare functions identified by 
Propositions \ref{main result robustly optimal -1} and \ref{main result robustly optimal} on the $\zeta$-$\eta$ plane.

\begin{figure}
\centering
\includegraphics[width=8cm, bb=0 0 540 335]{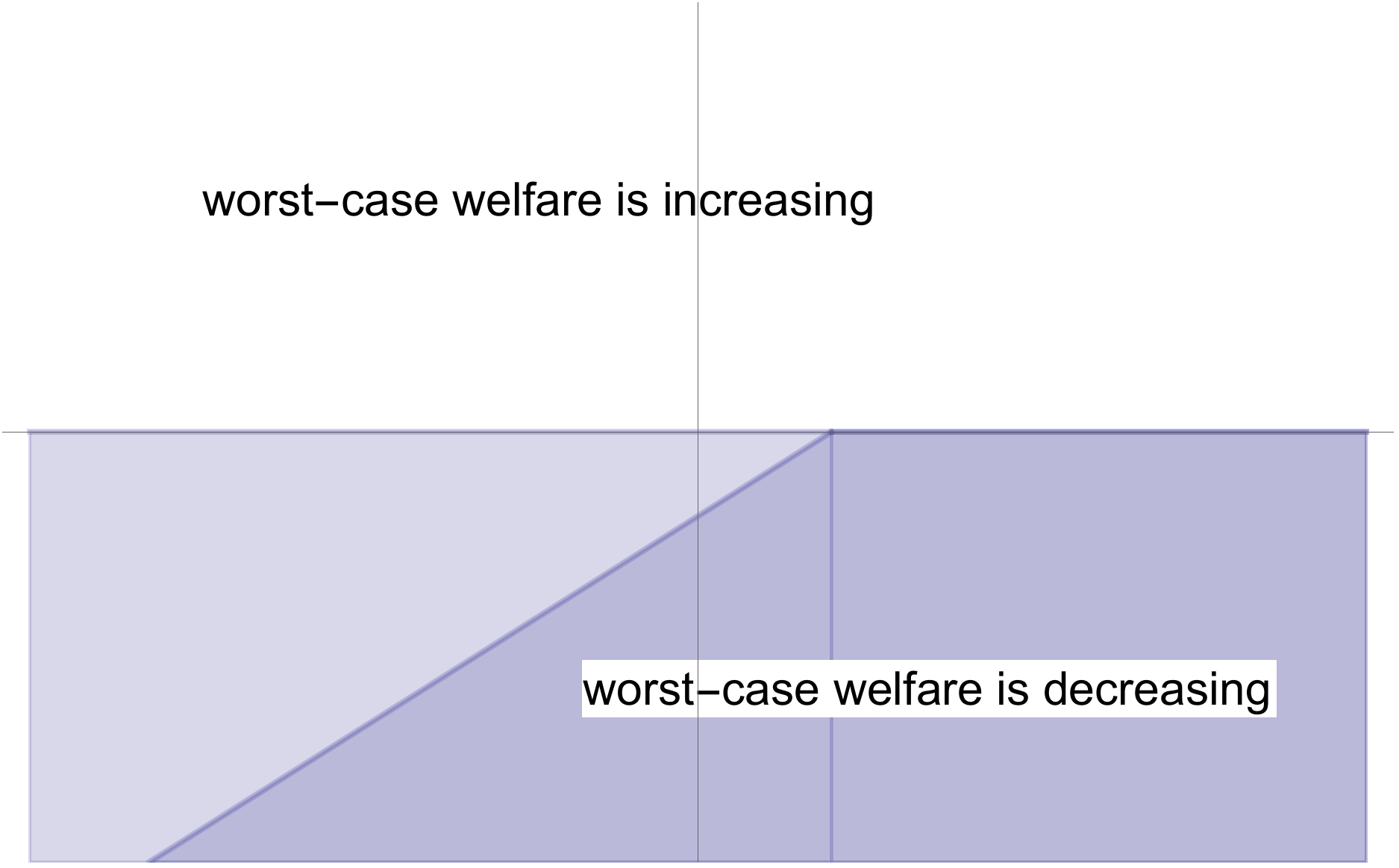}
\caption{The effects of public information on the worst-case welfare. The worst-case welfare necessarily increases with public information if $(\zeta,\eta)$ is in the upper area; it necessarily decreases if $(\zeta,\eta)$ is in the lower-right area.}
\label{figure 4.5}
\end{figure}

\section{Applications}\label{Cournot}


\subsection{Cournot games}\label{Cournot games}
Consider a Cournot game with a continuum of firms. 
Firm $i$ produces $a_i$ units of a homogeneous product at a quadratic cost $a_i^2$. 
An inverse demand function is $\theta-\delta \int a_jdj$,  
where $\delta>0$ is constant and $\theta$ is normally distributed. Then, firm $i$'s profit is 
\begin{equation}
\Big(\theta-\delta \int a_jdj\Big)a_i- a_i^2, \notag\label{Cournot payoff}
\end{equation}
which is \eqref{payoff function cont} with $\alpha=-\delta/2$ and $\beta=1/2$.
When the information cost is linear in the precision, the second-period subgame is the game studied by  \citet{lietal1987} and \citet{vives1988}. 
Let $W(\tau_x,\tau_y)$ be the net total profit, i.e., the expected total profit minus the information cost. Then, by Lemma~\ref{lemma 2},  $W(\tau_x,\tau_y)$ has a representation \eqref{welfare simple} with $(\zeta,\eta)=(1,1)$; that is, $W(\tau_x,\tau_y)=\mathrm{var}[\sigma_i]-C(\tau_x)$.

The optimal and robust precision of public information is given by the following corollary of Propositions \ref{optimal proposition} and \ref{main result robustly optimal -1}. 

\begin{corollary}\label{Cournot corollary}
Consider a Cournot game with an isoelastic information cost function \eqref{isoelastic cost function}. 
If $\lambda=0$ or $\delta<\delta^*(\lambda)\equiv 1+2\lambda^{-1}$, 
the net profit necessarily increases with public information. 
If $\delta>\delta^*(\lambda)$, 
the net profit can decrease with public information.  
Full disclosure is optimal if $\delta<\delta^{**}(\lambda)\equiv 2{\sqrt{(1+\lambda^{-1})  (1+\lambda^{-1}+\tau_\theta/\phi(0))}}  +2\lambda^{-1}$, and no disclosure is optimal if $\delta>\delta^{**}(\lambda)$. 
On the other hand, full disclosure is robust for all $\delta>0$. 
\end{corollary}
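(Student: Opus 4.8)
The plan is to specialize Propositions~\ref{optimal proposition} and \ref{main result robustly optimal -1} to the Cournot parameters $\alpha=-\delta/2$, $\beta=1/2$, and $(\zeta,\eta)=(1,1)$, so that $1-\alpha=1+\delta/2$ and, since $\rho=\lambda$ for the isoelastic cost, $\underline\eta(1,\lambda)=2(1-\alpha)\lambda/(3\lambda+2)$.

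First I would settle the two monotonicity sentences by locating which case of Proposition~\ref{optimal proposition} applies. Because $\eta=1>0$, only cases (i) and (ii) are relevant, and they are separated by the sign of $\eta-\underline\eta(1,\lambda)$. A short computation shows $\eta>\underline\eta(1,\lambda)$ iff $\lambda=0$ or $(\delta-1)\lambda<2$, i.e. $\delta<1+2\lambda^{-1}=\delta^*(\lambda)$; in this regime we are in case (i) and $dW/d\tau_y>0$ for all $\tau_y$, giving the ``necessarily increases'' conclusion. For $\delta>\delta^*(\lambda)$ we fall into case (ii), where $W(\phi(\tau_y),\tau_y)$ is strictly decreasing on $[0,\bar\tau_y)$ whenever $\bar\tau_y=\bar\tau_z-\tau_\theta>0$; exhibiting such a configuration (e.g. small $\tau_\theta$) establishes that the net profit ``can decrease.''

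The substantive step is the full-versus-no-disclosure threshold. In case (i) full disclosure is immediate, so I only need case (ii), where by Proposition~\ref{optimal proposition}(ii) the optimum is an endpoint and it suffices to compare $\lim_{\tau_y\to\infty}W(\phi(\tau_y),\tau_y)$ with $W(\phi(0),0)$. Using $\phi(\tau_y)\to0$ as $\tau_y\to\infty$ together with \eqref{CV}--\eqref{IV} gives $D\to0$ and $V\to\beta^2/((1-\alpha)^2\tau_\theta)$, hence the full-disclosure value $\eta\beta^2/((1-\alpha)^2\tau_\theta)$. At $\tau_y=0$, the identity $C(\phi(0))=D(\phi(0),0)/(\lambda+1)$ from \eqref{cost and dispersion} reduces the no-disclosure value to $V(\phi(0),0)+\tfrac{\lambda}{\lambda+1}D(\phi(0),0)$. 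Writing $k=1-\alpha$ and $s=\tau_\theta/\phi(0)$, the inequality ``full $\ge$ no disclosure'' collapses to $\tfrac{\lambda}{\lambda+1}k^2-2k-s\le0$, whose unique positive root is $k^{**}=(1+\lambda^{-1})+\sqrt{(1+\lambda^{-1})(1+\lambda^{-1}+s)}$; since $\delta=2k-2$, this is exactly $\delta\le\delta^{**}(\lambda)$. As one checks $\delta^*(\lambda)<\delta^{**}(\lambda)$, the two regimes knit together: full disclosure is optimal iff $\delta<\delta^{**}(\lambda)$ and no disclosure iff $\delta>\delta^{**}(\lambda)$ (with $\delta^{**}\to\infty$ as $\lambda\to0$, matching the linear-cost case).

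Finally, robustness follows directly from Proposition~\ref{main result robustly optimal -1}(i): with $\zeta=1$ the hurdle $2(1-\alpha)(\zeta-1)/3$ equals $0$, so $\eta=1$ satisfies both $\eta>0$ and $\eta\ge0$, whence $F_\kappa$ is strictly increasing for every $\kappa$ and the robust precision is $\infty$ for all $\delta>0$. I expect the main obstacle to be the case-(ii) endpoint algebra: carrying the limits of $V$ and $D$, invoking the cost-equals-dispersion identity, and reducing the welfare comparison to the quadratic in $k$ whose positive root reproduces exactly the stated $\delta^{**}(\lambda)$. The implicit dependence of $\phi(0)$ on $\delta$ must be kept in mind but does not affect the reduction.
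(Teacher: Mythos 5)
Your proposal is correct and follows essentially the same route as the paper's proof: it specializes $\underline\eta(\zeta,\lambda)$ at $(\zeta,\eta)=(1,1)$ to separate cases (i) and (ii) of Proposition~\ref{optimal proposition} (yielding $\delta^*(\lambda)$), compares the endpoint welfares $W(\phi(\infty),\infty)$ and $W(\phi(0),0)$ in case (ii) (yielding $\delta^{**}(\lambda)$), and invokes Proposition~\ref{main result robustly optimal -1}(i) for robustness, exactly as the paper does. Your reduction of the endpoint comparison to the quadratic $\frac{\lambda}{\lambda+1}k^2-2k-s\le 0$ in $k=1-\alpha=1+\delta/2$, $s=\tau_\theta/\phi(0)$, is an equivalent reorganization of the closed-form difference formula the paper displays, and its positive root correctly reproduces $\delta^{**}(\lambda)$.
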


More precise public information can decrease the net profit if $\delta>\delta^*(\lambda)$, and no disclosure is optimal if $\delta>\delta^{**}(\lambda)>\delta^{*}(\lambda)$, where the price elasticity of demand is sufficiently small. 
A related result is reported in the case of exogenous private information by \citet{morrisbergemann2013} and \citet{uiyoshizawa2015}, which corresponds to the limit as $\lambda$ approaches infinity, and the intuition behind the result is essentially the same. 
When $\delta$ is large, the game exhibits strong strategic substitutability, so the ratio of the coefficient of private information to that of public information in the equilibrium strategy, $b_x/b_y=(1+\delta)\tau_x/\tau_y$, is large.  Thus, the dispersion term is dominant in the net profit, which is decreasing in the precision of public information.

Note that $\delta^{*}(\lambda)$ is decreasing in $\lambda$ with the supremum $\infty$ and the infimum $1$. 
Thus, for any fixed $\delta>0$, $\delta<\delta^{*}(\lambda)$ for sufficiently small $\lambda$, and $\delta>\delta^{*}(\lambda)$ for sufficiently large $\lambda$ if $\delta>1$.  In other words, more precise public information can be harmful if $\lambda$ is sufficiently large but beneficial if $\lambda$ is sufficiently small, which is attributed to a large reduction in the information cost under the crowding-out effect, as discussed in Section \ref{The welfare effect of public information}.

\subsection{Beauty contest games}\label{Beauty contest games}

Let $\alpha =r\in(0,1)$ and $\beta=1-r$ in (\ref{payoff function cont}).  The best response is the conditional expectation of the weighted mean of the state and the aggregate action,  
$\E[ (1-r) \theta+r\int a_jdj| x_i,y]$. 
The welfare is the negative of the mean squared error of an action from the state minus the information cost:  
\[
W(\tau_x,\tau_y)=-\E[(\sigma_i-\theta)^2]-C(\tau_x),
\]
which has a representation \eqref{welfare simple} with $(\zeta,\eta)=(1+r,1-r)$ by Lemma~\ref{lemma 2}.
This game is referred to as a beauty contest game. 
It is clear that full disclosure is optimal, but more precise public information can be harmful, as shown by \citet{morrisshin2002} in the case of exogenous private information and by  \citet{ui2014} in the case of endogenous private information.
The following result is a special case of  Proposition~\ref{optimal proposition}.

\begin{corollary}\label{beauty contest corollary}
Consider a beauty contest game  
with an isoelastic cost function \eqref{isoelastic cost function}.  
 If $r< r^*(\lambda)\equiv (\lambda/2+1)/(\lambda+1)$, the welfare necessarily increases with public information.  
If $r> r^*(\lambda)$, the welfare can decrease with public information. 
\end{corollary}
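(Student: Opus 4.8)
The plan is to obtain this as a direct specialization of Proposition~\ref{optimal proposition} to the beauty-contest parameter values, exactly paralleling the treatment of Cournot games in Corollary~\ref{Cournot corollary}. First I would record that here $\alpha=r$ and $\beta=1-r$, so $1-\alpha=1-r$, and that the welfare admits the representation \eqref{welfare simple} with $(\zeta,\eta)=(1+r,1-r)$, as already noted. Since $r\in(0,1)$ forces $\eta=1-r>0$, the configuration can only fall into case~(i) or case~(ii) of Proposition~\ref{optimal proposition}; cases~(iii) and~(iv), which require $\eta<0$, are excluded. Consequently the dichotomy in the corollary is governed entirely by whether $\eta>\underline\eta(\zeta,\lambda)$ (case~(i), where $dW(\phi(\tau_y),\tau_y)/d\tau_y>0$ for all $\tau_y$) or $0<\eta<\underline\eta(\zeta,\lambda)$ (case~(ii), where the sign of $dW/d\tau_y$ switches at $\bar\tau_y$).

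The key step is to reduce the boundary $\eta=\underline\eta(\zeta,\lambda)$ to the stated threshold $r^*(\lambda)$. Substituting $\zeta=1+r$ and $1-\alpha=1-r$ into the definition of $\underline\eta$ gives
\[
\underline\eta(1+r,\lambda)=\frac{2(1-r)\bigl((1+\lambda)(1+r)-1\bigr)}{3\lambda+2}
=\frac{2(1-r)\bigl(r+\lambda(1+r)\bigr)}{3\lambda+2},
\]
which is strictly positive. Because $\eta=1-r>0$, the inequality $\eta>\underline\eta(1+r,\lambda)$ may be divided through by $1-r$ and cleared of the positive denominator $3\lambda+2$, reducing it to $3\lambda+2>2\bigl(r+\lambda(1+r)\bigr)$, i.e.\ $\lambda+2>2r(1+\lambda)$, i.e.\ $r<(\lambda/2+1)/(\lambda+1)=r^*(\lambda)$. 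Hence $\eta>\underline\eta(\zeta,\lambda)\iff r<r^*(\lambda)$, and the reversed strict inequality corresponds to $r>r^*(\lambda)$. I expect this elementary reduction to be the only substantive computation; the remaining work is bookkeeping to confirm that $\eta>0$ rules out cases~(iii)--(iv), so a single threshold $r^*(\lambda)$ cleanly separates the two regimes.

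With the equivalence in hand the two claims follow at once. When $r<r^*(\lambda)$ we are in case~(i), so $dW(\phi(\tau_y),\tau_y)/d\tau_y>0$ for every $\tau_y$ and the welfare necessarily increases with public information (full disclosure is optimal). When $r>r^*(\lambda)$ we are in case~(ii): there $dW/d\tau_y<0$ for $\tau_y<\bar\tau_y$, so the welfare strictly decreases on $[0,\bar\tau_y)$ whenever $\bar\tau_y>0$, equivalently $\tau_\theta<\bar\tau_z$. Thus for suitable values of the remaining parameters the welfare can decrease with public information, which is the sense in which the detrimental effect of \citet{morrisshin2002} and \citet{ui2014} reappears past the threshold.
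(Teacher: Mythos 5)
Your proposal is correct and follows essentially the same route as the paper: the paper's proof likewise substitutes $(\zeta,\eta)=(1+r,1-r)$ into $\underline{\eta}(\zeta,\lambda)$, obtains $\underline{\eta}=2(1-r)\bigl((1+r)\lambda+r\bigr)/(3\lambda+2)>0$, and concludes that $\eta>\underline{\eta}(\zeta,\lambda)$ if and only if $r<r^*(\lambda)$, with Proposition~\ref{optimal proposition} supplying the two regimes. Your extra bookkeeping (excluding cases~(iii)--(iv) via $\eta=1-r>0$, and noting that in case~(ii) the welfare decreases on $[0,\bar\tau_y)$ when $\tau_\theta<\bar\tau_z$) is left implicit in the paper but is exactly the intended reading.
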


Suppose that $\lambda>0$. Then, the welfare can decrease if $r$ is sufficiently close to one because $r^*(\lambda)<1$. 
 However, if $\lambda=0$, i.e., the cost function is linear, then $r^*(\lambda)=1$, so the welfare necessarily increases for all $r\in (0,1)$, as shown by \citet{colombofemminis2008}. 
Note that $r^{*}(\lambda)$ is decreasing in $\lambda$ with the supremum $1$ and the infimum $1/2$ (see Figure~\ref{figure 6}).    
Thus, for any fixed $r\in  (1/2,1)$, $r<r^{*}(\lambda)$ for sufficiently small $\lambda$, and $r>r^{*}(\lambda)$ for sufficiently large $\lambda$; that is, more precise public information can be harmful if $\lambda$ is sufficiently large but beneficial if $\lambda$ is sufficiently small.   

\citet{colombofemminis2014} apply their result to a beauty contest game and show that the crowding-out effect can turn the social value of public information from negative to positive, but never from positive to negative. 
As discussed in Section \ref{The welfare effect of public information}, this is the case if and only if 
${MW_D}^0>0>{MW_D}^*(\tau_y)$, and indeed, 
${MW_D}^0=1-r>0$ and ${MW_D}^*(\tau_y)<0$ for sufficiently small $\tau_y$ and $r>1/2$.

We also consider the gross welfare 
$W^g(\tau_x,\tau_y)=-\E[(\sigma_i-\theta)^2]$,  which has a representation \eqref{welfare simple} with $(\zeta,\eta)=(1+r+1/(\lambda+1),1-r)$ by \eqref{gross welfare}. 
The following result on the gross welfare is a consequence of Proposition~\ref{optimal proposition}.

\begin{corollary}\label{beauty contest corollary new}
Consider a beauty contest game with an isoelastic cost function \eqref{isoelastic cost function}.  
If $r< r^g(\lambda) \equiv {\lambda }/({2 (\lambda +1)})=r^{*}(\lambda)-1/ (\lambda +1)$, the gross welfare necessarily increases with public information.  
If $r> r^g(\lambda)$, the gross welfare can decrease with public information. 
\end{corollary}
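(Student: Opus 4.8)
The plan is to reduce the claim to Proposition \ref{optimal proposition} applied to the gross-welfare coefficients. By \eqref{gross welfare}, the gross welfare $W^g$ admits the representation \eqref{welfare simple} with $(\zeta,\eta)=(1+r+1/(\lambda+1),\,1-r)$, so it suffices to determine into which of the cases (i)--(iv) of Proposition \ref{optimal proposition} these coefficients fall. Since $r\in(0,1)$ we have $\eta=1-r>0$, which rules out (iii) and (iv) and leaves only (i) and (ii); the dichotomy is then governed entirely by comparing $\eta$ with $\underline\eta(\zeta,\lambda)$. Case (i) (that is, $dW^g(\phi(\tau_y),\tau_y)/d\tau_y>0$ for all $\tau_y$, so the gross welfare necessarily increases) holds when $\eta>\underline\eta(\zeta,\lambda)$, and case (ii) holds when $0<\eta<\underline\eta(\zeta,\lambda)$, recalling here that $\rho=\lambda$ for the isoelastic cost \eqref{isoelastic cost function} and $1-\alpha=1-r$ for the beauty contest game.

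Next I would compute $\underline\eta(\zeta,\lambda)$ at these coefficients. The key simplification is $(1+\lambda)\zeta-1=(1+\lambda)(1+r)$, since multiplying the extra term $1/(\lambda+1)$ by $(1+\lambda)$ contributes exactly $+1$, which cancels the $-1$. Substituting into $\underline\eta(\zeta,\rho)=2(1-\alpha)((1+\rho)\zeta-1)/(3\rho+2)$ gives $\underline\eta(\zeta,\lambda)=2(1-r)(1+\lambda)(1+r)/(3\lambda+2)$, and dividing the inequality $\eta>\underline\eta(\zeta,\lambda)$ through by the positive factor $1-r$ reduces it to $3\lambda+2>2(1+\lambda)(1+r)$, i.e. $\lambda>2r(1+\lambda)$, i.e. $r<\lambda/(2(\lambda+1))=r^g(\lambda)$. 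This single computation supplies the threshold.

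It then follows immediately that $r<r^g(\lambda)$ places the gross welfare in case (i), so it strictly increases in $\tau_y$ for all $\tau_y$; and $r>r^g(\lambda)$ places it in case (ii), where the derivative is negative on $[0,\bar\tau_y)$ whenever $\bar\tau_y>0$. By Proposition \ref{optimal proposition}, $\bar\tau_y>0$ exactly when $\tau_\theta<\bar\tau_z$, so for such parameter values the gross welfare strictly decreases over an interval of $\tau_y$, establishing that it can decrease with public information.

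I expect no serious obstacle, since the argument is structurally identical to that of the net-welfare statement Corollary \ref{beauty contest corollary}, the only change being the shift of $\zeta$ by $1/(\lambda+1)$ dictated by \eqref{gross welfare} (this is also why $r^g(\lambda)=r^*(\lambda)-1/(\lambda+1)$). The one point deserving care is the precise reading of ``can decrease'': it must be interpreted as non-monotonicity of $W^g$, which in case (ii) is witnessed by choosing the remaining free parameters (for instance a small enough prior precision $\tau_\theta$, or a small enough marginal-cost scale $c$) so that $\bar\tau_y>0$. One should note explicitly that if instead $\bar\tau_y\le 0$ the gross welfare is in fact increasing for all admissible $\tau_y\ge 0$, so the second assertion is an existence statement over parameter values rather than a universal one.
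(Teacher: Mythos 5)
Your proposal is correct and follows essentially the same route as the paper: it applies Proposition \ref{optimal proposition} to the gross-welfare coefficients $(\zeta,\eta)=(1+r+1/(\lambda+1),\,1-r)$ from \eqref{gross welfare}, computes $\underline\eta(\zeta,\lambda)=2(1+\lambda)(1-r^2)/(3\lambda+2)$ via the cancellation $(1+\lambda)\zeta-1=(1+\lambda)(1+r)$, and reduces the comparison $\eta\gtrless\underline\eta(\zeta,\lambda)$ to $r\lessgtr r^g(\lambda)$. Your additional remarks on the case analysis (ruling out (iii)--(iv) since $\eta>0$) and on reading ``can decrease'' as an existence claim requiring $\bar\tau_y>0$ merely make explicit what the paper's one-line proof leaves implicit.
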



Note that $r^{g}(\lambda)$ is increasing in $\lambda$ with the infimum $0$ and the supremum $1/2$ (see Figure~\ref{figure 6}). 
Thus, for any fixed $r\in (0,1/2)$, $r>r^{g}(\lambda)$ for sufficiently small $\lambda$, and $r<r^{g}(\lambda)$ for sufficiently large $\lambda$; that is, more precise public information can be harmful to the gross welfare if $\lambda$ is sufficiently small but beneficial if $\lambda$ is sufficiently large, which is due to a small increase in the volatility caused by the crowding-out effect, as discussed in Section \ref{The welfare effect of public information}.

\begin{figure}
\centering
\includegraphics[width=8cm, bb=0 0 540 330]{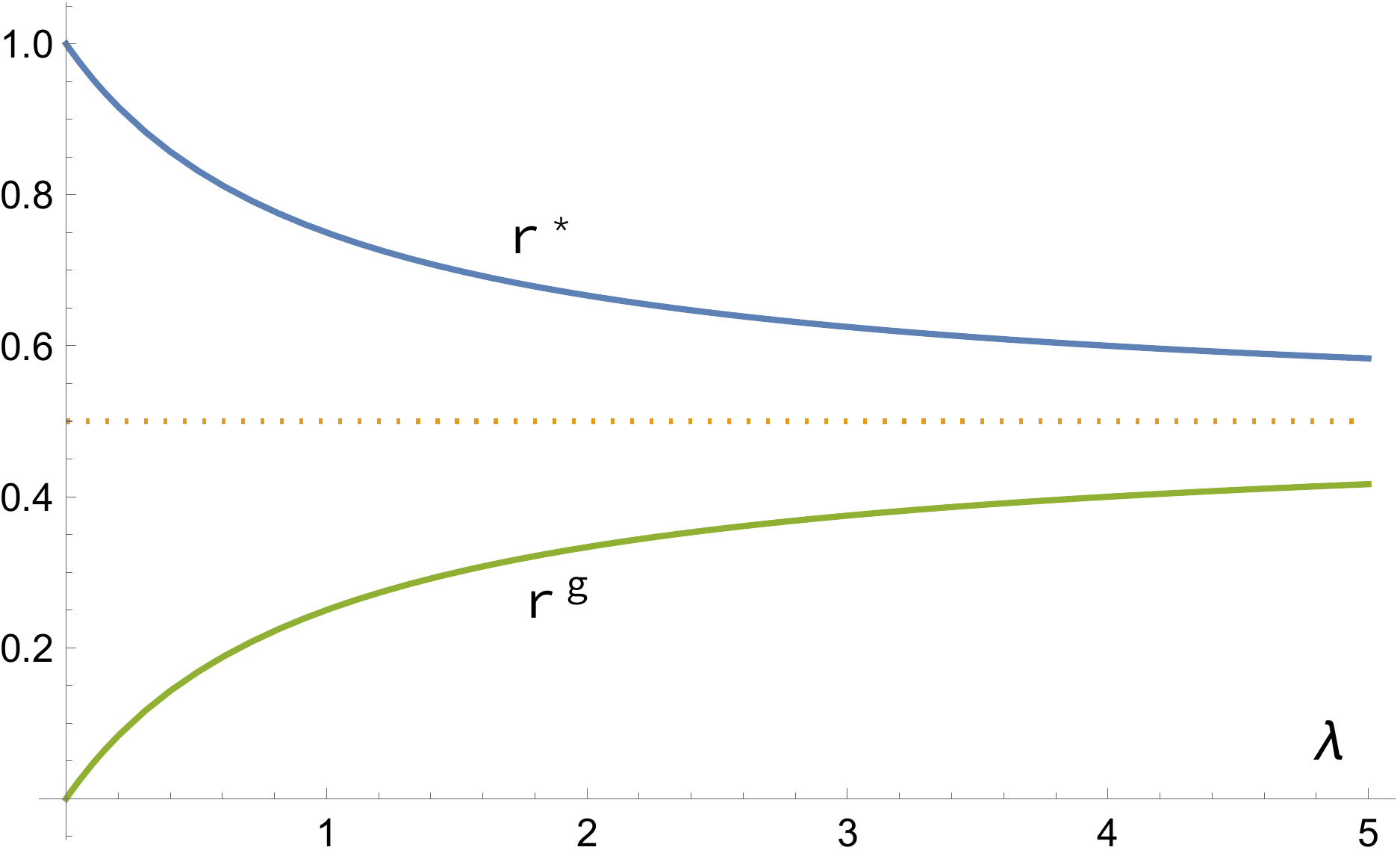}
\caption{The graphs of $r^*(\lambda)$ and $r^g(\lambda)$.}
\label{figure 6}
\end{figure}

Finally, we consider the worst-case welfare using Proposition~\ref{main result robustly optimal -1}. 
\begin{corollary}\label{beauty contest corollary robust}
The worst-case expected welfare necessarily increases with public information. 
\end{corollary}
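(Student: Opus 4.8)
The plan is to recognize this as a direct application of Proposition \ref{main result robustly optimal -1}(i) to the beauty-contest parameters, so the work is entirely a matter of verifying that those parameters land in the appropriate region of the $(\zeta,\eta)$ plane. First I would recall from Section \ref{Beauty contest games} that the beauty contest game has $\alpha=r$ with $r\in(0,1)$, and that by Lemma \ref{lemma 2} the net welfare $W(\tau_x,\tau_y)=-\E[(\sigma_i-\theta)^2]-C(\tau_x)$ admits the representation \eqref{welfare simple} with $(\zeta,\eta)=(1+r,1-r)$. Thus $1-\alpha=1-r$, and the entire claim reduces to checking the two hypotheses of part (i) of Proposition \ref{main result robustly optimal -1}.

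The first hypothesis, $\eta>0$, is immediate: since $r\in(0,1)$ we have $\eta=1-r>0$. The second hypothesis, $\eta\geq 2(1-\alpha)(\zeta-1)/3$, I would verify by computing the right-hand side explicitly. Substituting the parameters gives $2(1-\alpha)(\zeta-1)/3=2(1-r)\,((1+r)-1)/3=2(1-r)r/3$. The required inequality $1-r\geq 2(1-r)r/3$ then factors through the positive quantity $1-r$: dividing both sides by $1-r>0$ leaves $1\geq 2r/3$, that is $r\leq 3/2$, which holds trivially for every $r\in(0,1)$.

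Having confirmed both hypotheses, I would invoke Proposition \ref{main result robustly optimal -1}(i) directly: it asserts that $W^0(\tau_x,\tau_y)$ is strictly increasing in $\tau_y$ for each fixed $\tau_x$, and hence, via the worst-case representation in Lemma \ref{worst key lemma}, that $F_\kappa(\tau_y)=\inf_{\tau_x\leq\kappa}W^0(\tau_x,\tau_y)$ is strictly increasing for every $\kappa\in\mathbb{R}_+\cup\{\infty\}$. This is exactly the statement that the worst-case expected welfare necessarily increases with public information (and, as a byproduct, that full disclosure is robust with robust precision $\infty$).

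There is no genuine obstacle here; the only point requiring care is the bookkeeping of which parameter plays the role of $\alpha$ (namely $\alpha=r$, so that $1-\alpha=1-r$ cancels cleanly against the $\eta$ factor in the second inequality), and confirming that the relevant region-boundary $\eta=2(1-\alpha)(\zeta-1)/3$ is never crossed for admissible $r$. Since $r<1<3/2$ strictly, the inequality is in fact strict, so the conclusion is robust to the boundary case and no separate edge analysis is needed.
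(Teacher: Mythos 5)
Your proposal is correct and follows exactly the paper's route: the paper derives Corollary \ref{beauty contest corollary robust} by applying Proposition \ref{main result robustly optimal -1}(i) to $(\zeta,\eta)=(1+r,1-r)$ with $\alpha=r$, which is precisely your verification that $\eta=1-r>0$ and $1-r\geq 2(1-r)r/3$ (equivalently $r\leq 3/2$) for all $r\in(0,1)$. Your parameter bookkeeping and the strictness observation match the paper's (implicit) argument, so there is nothing to add.
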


The result gives an alternative perspective to the debate on central bank transparency prompted by \citet{morrisshin2002}.
Because their finding is presented as an anti-transparency result, it is challenged by many papers. 
\citet{svensson2006} demonstrates that more precise public information is harmful only in very special circumstances. 
Other papers extend the original model in different directions and show that more precise public information is beneficial in their extended models \citep{angeletospavan2004, hellwig2005, cornandheinemann2008, colombofemminis2008}.\footnote{\citet{jameslawler2011} show that welfare necessarily decreases with public information when a policymaker has a direct influence on payoffs as well as announces public information.}  
Corollary \ref{beauty contest corollary robust} provides a new rationale for central bank transparency: more precise public information is beneficial to the worst-case welfare. 
This is because the worst-case welfare is equal to the infimum of the expected welfare over all linear cost functions, and the expected welfare with a linear information cost necessarily increases with public information, as shown by \citet{colombofemminis2008}.

\section{Conclusion}\label{Conclusion}

This article studies the optimal and robust disclosure of public information to agents who also acquire costly private information. 
Depending on the elasticity of marginal cost, an optimal rule is qualitatively the same as the case of either the largest crowding-out effect (a linear information cost) or no crowding-out effect (exogenous private information). 
If the volatility has a positive coefficient in the welfare, either full or no disclosure is optimal, whereas only full disclosure is robust. 
Thus, a robust rule is more informative than an optimal rule, which provides an alternative rationale for central bank transparency in the context of beauty contest games. 
On the other hand, if the volatility has a negative coefficient, either no or partial disclosure is optimal and robust, where a robust rule is not necessarily more informative.

We assume the policymaker incurs no cost in providing public information throughout the analysis, but we can directly incorporate a cost of public information into our analysis. 
To study optimal costly disclosure, it is enough to compare the marginal cost and the social value of public information (i.e., the MWD times the absolute value of the marginal dispersion).  
An optimal rule under costly provision must be less informative than the optimal rule discussed in this paper. 
A detailed comparison is left for future research.

We focus on the disclosure of public information because a policymaker such as a central bank makes a public announcement in conducting policy. 
However, a policymaker may also be interested in providing private information using information technology. 
Building on our framework, we can study an optimal combination of public and private information in the following setting:  
the policymaker chooses the precision of public and private information, and each agent determines the additional precision of private information. 
A study of this model is also left for future research.

The crucial assumption in this article is the convexity of a cost function. 
Even in the case of a concave cost, the MWD is represented as the weighted average of the two special cases, and we can use it similarly to study the optimal disclosure of public information. 
Contrary to the case of a convex cost, however, 
the elasticity of marginal cost is negative, and the crowding-out effect is larger than the case of a linear cost. 
A typical concave cost is Shannon's mutual information used in the literature of rational inattention \citep{sims2003}. 
In \citet{ui2022}, we identify optimal disclosure of public information in the case of the mutual information based on \citet{rigos2020} and \citet{denti2020}.

\begin{appendices}
	
\bigskip
\appendix
\setcounter{section}{0}
\setcounter{theorem}{0}
\setcounter{lemma}{0}
\setcounter{claim}{0}
\setcounter{proposition}{0}
\setcounter{definition}{0}

\renewcommand{\theequation}{A.\arabic{equation} }
\setcounter{equation}{0}

\renewcommand{\thetheorem}{\Alph{theorem}}
\renewcommand{\thelemma}{\Alph{lemma}}
\renewcommand{\theclaim}{\Alph{claim}}
\renewcommand{\theproposition}{\Alph{proposition}}
\renewcommand{\thedefinition}{\Alph{definition}}

\section{Proofs for Section \ref{The welfare effect of public information}}

\begin{proof}[Proof of Lemma \ref{MWD special cases}]
We can obtain $MW_D(\tau_y)$ by calculating $d{V}/{d\tau_y}$, $d{D}/{d\tau_y}$, and $d{C}/{d\tau_y}$. 
Suppose that $C$ is linear. Then, 
by \eqref{linear cost}, \eqref{CV}, and \eqref{IV}, 
\begin{equation}
V(\psi_c(\tau_y),\tau_y)=\frac{\beta ^2/\tau_\theta+c  (\tau_y+\tau_\theta)-2 \beta  \sqrt{c} }{(1-\alpha)^2 },\ 
D(\psi_c(\tau_y),\tau_y)=\frac{\beta  \sqrt{c}-c (\tau_y+\tau_\theta)}{1-\alpha}.
\notag\label{linear V and D} 
\end{equation}
Thus, we have $dV(\psi_c(\tau_y),\tau_y)/d\tau_y=c/(1-\alpha)^2$ and $dD(\psi_c(\tau_y),\tau_y)/d\tau_y=dC(\psi_c(\tau_y))/d\tau_y=-c/(1-\alpha)$, by which we obtain ${MW_D}^0$.  

For arbitrary $C$, let $\tau_x=\phi(\tau_y)$. 
Then, by \eqref{CV} and \eqref{IV}, 
\begin{equation}
-\frac{\partial V(\tau_x,\tau_y)}{\partial \tau_y}/\frac{\partial D(\tau_x,\tau_y)}{\partial \tau_y}=\frac{3 (1-\alpha) \tau_x+\tau_y+\tau_\theta}{2 (1-\alpha)^2 \tau_x}, \label{MRTex1}
\end{equation}
by which we obtain ${MW_D}^*(\tau_y)$. 
\end{proof}

\begin{proof}[Proof of Proposition \ref{main proposition 1} and Corollary \ref{main corollary 1}]
By Lemma \ref{MWD special cases}, 
${MW_D}^{0}=\eta{MV_D}^{0}-\zeta+1$ with 
${MV_D}^{0}={1}/{(1-\alpha)}$, and 
${MW_D}^{*}=\eta{MV_D}^*-\zeta$ with 
\eqref{MRTex}. 
Thus, Proposition \ref{main proposition 1} and Corollary \ref{main corollary 1} are equivalent. 
We prove Corollary \ref{main corollary 1}.

We first establish \eqref{MRTgeneral}. 
Let $\tau_x=\phi(\tau_y)>0$.  
By plugging \eqref{phiinv} into \eqref{CV}, we have
\begin{align}
V(\tau_x,\phi^{-1}(\tau_x))&=
\frac{\beta^2/\tau_\theta-(1-\alpha)\tau_xC'(\tau_x) -\beta \sqrt{C'(\tau_x)} }{(1-\alpha)^2},\notag
\end{align}
which implies 
\begin{align}
\frac{d V(\phi(\tau_y),\tau_y)}{d \tau_y}
&=\frac{d V(\tau_x,\phi^{-1}(\tau_x))}{d \tau_x}\phi'(\tau_y)\notag\\
&=\left(-\frac{ C'(\tau_x)}{1-\alpha}-\tau_x C''(\tau_x)
\left(\frac{1}{1-\alpha}+
\frac{\beta }{{2(1-\alpha)^2\tau_x\sqrt{C'(\tau_x)}} }\right)\right)\phi'(\tau_y)>0.\notag\label{dV}
\end{align}
By dividing this by \eqref{dD} and using $\rho=\tau_xC''(\tau_x)/C'(\tau_x)$, we obtain 
\begin{align}
\frac{d V(\phi(\tau_y),\tau_y)}{d \tau_y}/\frac{d D(\phi(\tau_y),\tau_y)}{d \tau_y}
&=-\frac{1}{1+\rho}\left(\frac{1}{1-\alpha}+\rho
\left(\frac{1}{1-\alpha}+
\frac{\beta }{{2(1-\alpha)^2\tau_x\sqrt{C'(\tau_x)}} }\right)\right). \notag
\end{align}
This implies \eqref{MRTgeneral} 
because  
${MV_D}^0=1/(1-\alpha)$ by Lemma \ref{MWD special cases}  
and 
\begin{equation}
{MV_D}^*(\tau_y)=\frac{1}{1-\alpha}+\frac{\beta }{2 (1-\alpha)^2\tau_x\sqrt{C'(\tau_x))}},\label{MV calc 2}
\end{equation}
which we can obtain by plugging \eqref{phiinv} into \eqref{MRTex1}.

To complete the proof, it remains to show \eqref{MWD calc2}. 
By \eqref{dD}, we have 
\[
\frac{d{C(\phi(\tau_y))}}{d\tau_y}/
\frac{d D(\phi(\tau_y),\tau_y)}{d \tau_y}
=C'(\tau_x)\phi'(\tau_y)/
\frac{d D(\phi(\tau_y),\tau_y)}{d \tau_y}=\frac{1}{1+\rho},
\]
which implies \eqref{MWD calc2}.
\end{proof}


\begin{proof}[Proof of Corollary \ref{main corollary 2}]

Because ${MV_D}^*(\tau_y)> 3/(2(1-\alpha))$ by Corollary~\ref{main corollary 1}, if $\eta> 0$, then 
\begin{equation}
{MW_D}(\tau_y)> 
F(\zeta,\eta,\rho)\equiv 
\frac{{MV_D}^0+3\rho/(2(1-\alpha))}{1+\rho}\eta-\zeta+\frac{1}{1+\rho}=\frac{3\rho+2}{2(1-\alpha)(1+\rho)}\eta-\zeta+\frac{1}{1+\rho}.\notag\label{infMWD}	
\end{equation}
The unique value of $\eta$ solving $F(\zeta,\eta,\rho)=0$ is $\underline\eta(\zeta,\rho)$. 
Thus, if $\eta> \max\{\underline\eta(\zeta,\rho),0\}$, then $MW_D(\tau_y)> F(\zeta,\eta,\rho)>F(\zeta,\underline\eta(\zeta,\rho),\rho)=0$. 
Similarly, if $\eta< \min\{\underline\eta(\zeta,\rho),0\}$, then $MW_D(\tau_y)< F(\zeta,\eta,\rho)<F(\zeta,\underline\eta(\zeta,\rho),\rho)=0$. 
\end{proof}

\section{Proofs for Section \ref{The optimal disclosure of public information}}
 
\begin{proof}[Proof of Proposition \ref{optimal proposition}]
Corollary \ref{main corollary 2} implies (i) and (iii). 
We prove (ii) and (iv). 

Suppose that $\lambda=0$. 
Note that $\phi(\tau_y)>0$ if and only if $\tau_y<\beta/\sqrt{c}-\tau_\theta$ by 
\eqref{linear cost}. 
Thus, when $\tau_y<\beta/\sqrt{c}-\tau_\theta$, $dW/d\tau_y> 0$ if and only if ${MW_D}^0> 0$ by Lemma \ref{MWD special cases}, where ${MW_D}^0=(\eta-\underline\eta(\zeta,0))/(1-\alpha)$ because $\underline\eta(\zeta,0)=
(1-\alpha)(\zeta-1)$.
When $\tau_y\geq \beta/\sqrt{c}-\tau_\theta$, $dW/d\tau_y> 0$ if and only if $\eta> 0$ because $W(\phi(\tau_y),\tau_y)=W(0,\tau_y)=\eta V(0,\tau_y)$. 
This establishes the proposition in the case of $\lambda=0$.

Suppose that $\lambda>0$. 
Plugging \eqref{MV calc 2} into \eqref{MRTgeneral} and \eqref{MWD calc2}, 
we have
\begin{align}
MW_D(\tau_y)
&=\eta\left(\frac{1}{1-\alpha}+\frac{\lambda\beta }{2 (1+\lambda)(1-\alpha)^2\phi(\tau_y)\sqrt{C'(\phi(\tau_y))}}\right)-\zeta+\frac{1}{1+\lambda}\label{MWD calc 2}
\end{align}
because $\rho=\lambda$. If $MW_D(\tau_y)=0$ has a solution $\bar\tau_y$,  $\bar\tau_x=\phi(\bar\tau_y)$ satisfies 
\begin{align}
\bar\tau_x\sqrt{C'(\bar\tau_x)}
&=
c^{1/2}(\bar\tau_x)^{(\lambda+2)/2}=\frac{\beta   \lambda \eta}{2 \left(1-\alpha\right) \left(\left(1-\alpha\right) ((1+\lambda)\zeta -1)-(1+\lambda)\eta \right)}
\label{taux-star}	
\end{align}
by \eqref{MWD calc 2}, which implies \eqref{taux-star'}. 
In addition, 
\begin{align}
\bar\tau_z
=\phi^{-1}(\bar\tau_x)+\tau_\theta
=\bar\tau_x\left(-(1-\alpha)+{\beta}/{\left(\bar\tau_x\sqrt{C'(\bar\tau_x)}\right)}\right)
=\bar\tau_x\frac{(1-\alpha ) (3 \lambda +2)(\underline\eta(\zeta,\eta)-\eta  )}{ \lambda\eta  }\notag 
\end{align}
by \eqref{phiinv} and \eqref{taux-star}, 
which is \eqref{phiinv'}. 
Note that $\bar\tau_z>0$ if and only if either $0<\eta<\underline\eta(\zeta,\lambda)$ or $0>\eta>\underline\eta(\zeta,\lambda)$. 
In each case, $MW_D(\tau_y)=0$ has a solution $\bar\tau_y=\bar\tau_z-\tau_\theta$ if $\bar\tau_z>\tau_\theta$, which establishes (ii) and (iv). 
\end{proof}

\section{Proofs for Section \ref{regardless of costs}}

\begin{proof}[Proof of Proposition \ref{corollary main result 2}]
We prove the first half of the proposition by showing (i) $\Rightarrow$ (iii) $\Rightarrow$ (ii) because (ii) $\Rightarrow$ (i) is obvious. 
The proof for the second half is similar, so we omit it.

We show (i) $\Rightarrow$ (iii). 
Suppose (i) holds: for $\tau_y>0$,  
${MW_D}(\tau_y)>0$ for any cost function with $\phi(\tau_y)>0$. 
By Proposition \ref{main proposition 1}, ${MW_D}(\tau_y)$ equals ${MW_D}^0$ if $\rho=0$, and ${MW_D}(\tau_y)$ can be arbitrarily close to ${MW_D}^*(\tau_y)$ when $\rho$ is sufficiently large. 
Thus, we must have ${MW_D}^0>0$ and ${MW_D}^*(\tau_y)\geq 0$ for any cost function. 
Note that ${MW_D}^*(\tau_y)\geq 0$ for any cost function if and only if $\eta\geq \max\{2(1-\alpha)\zeta/3,0\}=\max\{\underline\eta(\zeta,\infty),0\}$ because the infimum of ${MV_D}^*(\tau_y)$ is $3/(2(1-\alpha))$ and the supremum is infinity by Corollary \ref{main corollary 1}.
Note also that ${MW_D}^0>0$ if and only if $\eta>(1-\alpha)(\zeta-1)=\underline\eta(\zeta,0)$ by Lemma~\ref{MWD special cases}. These conditions are summarized as (iii).

We show (iii) $\Rightarrow$ (ii). 
Suppose (iii) holds. 
Then, ${MW_D}^0>0$ and ${MW_D}^*(\tau_y)\geq 0$ by the above argument. Thus, ${MW_D}(\tau_y)>0$ if $\phi(\tau_y)>0$ by Proposition \ref{main proposition 1}, which implies (ii).  
\end{proof}

\section{Proofs for Section \ref{robust disclosure}}

We first prove Lemma \ref{worst key lemma}.
\begin{proof}[Proof of Lemma \ref{worst key lemma}]
Fix $\tau_y$. Let $C\in \Gamma_\kappa$ be such that $\phi_{C}(\tau_y)=\tau_x\in (0,\kappa]$. 
Note that $C'(\tau_x)= c\equiv \beta^2/((1-\alpha)\tau_x+\tau_y+\tau_\theta)^2$ by \eqref{FOC1}. 
Define $\tilde C\in \Gamma$ by $\tilde C(\tau_x')=c\tau_x'$ for $\tau_x'\leq \tau_x$ and 
$\tilde C(\tau_x')=C(\tau_x')-C(\tau_x)+c\tau_x$ for $\tau_x'\geq \tau_x$. 
Note that $\tilde C'(\tau_x')=C'(\tau_x')$ for all $\tau_x'\geq \tau_x$. 
This implies that $\tilde C\in \Gamma_\kappa$ and $\phi_{\tilde C}(\tau_y)=\phi_{C}(\tau_y)=\tau_x$. 
Note that $C'(\tau_x')\leq C'(\tau_x)=c$ for $\tau_x'\leq\tau_x$ since $C$ is convex. 
Thus, it holds that $C(\tau_x)=\int_0^{\tau_x}C'(t) dt\leq \int_0^{\tau_x}c dt=c\tau_x=\tilde C(\tau_x)$, and 
\begin{align*}
W(\phi_{C}(\tau_y),\tau_y)
\geq\eta V(\tau_x,\tau_y)+\zeta D(\tau_x,\tau_y)-\tilde C(\tau_x)
= \eta V(\tau_x,\tau_y)+\zeta D(\tau_x,\tau_y)-c\tau_x=W^0(\tau_x,\tau_y)
\end{align*}
by \eqref{cost and dispersion}. 
This holds with equality if $\phi_{C}(\tau_y)=0$; that is, $W(0,\tau_y)=W^0(0,\tau_y)=\eta V(0,\tau_y)$ since $D(0,\tau_y)=0$.
Consequently, we have $\inf_{C\in \Gamma_\kappa}W(\phi_{C}(\tau_y),\tau_y)=\inf_{\tau_x\leq\kappa}W^0(\tau_x,\tau_y)$ because, for each $\tau_x\leq \kappa$, there exists $C\in \Gamma_\kappa$ with $\phi_C(\tau_y)=\tau_x$.
\end{proof}

To prove Propositions \ref{main result robustly optimal -1} and \ref{main result robustly optimal}, we also use the following lemma, 
which is calculated from \eqref{CV} and \eqref{IV} (a similar result appears in \citet{uiyoshizawa2015}). 

\begin{lemma}\label{key partial derivative}
The following holds.
\begin{align}
{\partial W^0}/{\partial  \tau_y }\gtrless 0 & \ \Leftrightarrow \ 
{   (3 \eta-2 (1-\alpha ) (\zeta-1)    )\tau_x+\eta   \left(\tau_y+\tau_{\theta }\right)/(1-\alpha ) }\gtrless 0,\notag\\
{\partial W^0}/{\partial  \tau_y }= 0 & \ \Leftrightarrow \ \tau_y=g(\tau_x)\equiv -(1-\alpha)(3 \eta-2(1-\alpha)  (\zeta-1))\tau_x/\eta-\tau_\theta,\notag\\
{\partial W^0}/{\partial \tau_x}\gtrless 0 & \ \Leftrightarrow \  
{  (2 \eta-(1-\alpha)  (\zeta-1) )\tau_x+(\zeta -1) (\tau_y+\tau_\theta)}\gtrless 0,
\notag\\
{\partial W^0}/{\partial \tau_x}= 0 & \ \Leftrightarrow \  
\tau_x=f(\tau_y)\equiv-(\zeta-1)(\tau_y+\tau_\theta)/(2 \eta-(1-\alpha)  (\zeta-1)).
\notag
\end{align}
\end{lemma}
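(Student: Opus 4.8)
The plan is to establish all four claims by directly differentiating $W^0(\tau_x,\tau_y)=\eta V(\tau_x,\tau_y)+(\zeta-1)D(\tau_x,\tau_y)$ and showing that in each case the partial derivative equals a strictly positive prefactor times the asserted linear expression; the sign equivalences and the zero-sets then follow immediately. First I would introduce the shorthand $P\equiv(1-\alpha)\tau_x+\tau_y+\tau_\theta>0$, the common factor appearing (to a power) in the denominators of both $V$ and $D$ in \eqref{CV} and \eqref{IV}, and record $\partial P/\partial\tau_y=1$ and $\partial P/\partial\tau_x=1-\alpha$. From \eqref{IV} the dispersion derivatives are then immediate: $\partial D/\partial\tau_y=-2\beta^2\tau_x/P^3$ and $\partial D/\partial\tau_x=\beta^2(\tau_y+\tau_\theta-(1-\alpha)\tau_x)/P^3$.

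The substantive work is differentiating $V$. Writing $V=N_V/((1-\alpha)^2\tau_\theta P^2)$ with $N_V=\beta^2((1-\alpha)^2\tau_x^2+2(1-\alpha)\tau_x\tau_y+\tau_y(\tau_\theta+\tau_y))$ and applying the quotient rule, the sign-relevant numerators are $P\,\partial N_V/\partial\tau_y-2N_V$ and $P\,\partial N_V/\partial\tau_x-2(1-\alpha)N_V$. The key simplification — and the one place where care is needed — is to substitute $s\equiv(1-\alpha)\tau_x+\tau_y$, which rewrites $N_V=\beta^2(s^2+\tau_\theta\tau_y)$ and $P=s+\tau_\theta$; the leading $s^2$ terms then cancel, collapsing these numerators to $\beta^2\tau_\theta(3(1-\alpha)\tau_x+\tau_y+\tau_\theta)$ and $2(1-\alpha)^2\beta^2\tau_\theta\tau_x$ respectively. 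Dividing by the denominator yields the clean forms $\partial V/\partial\tau_y=\beta^2(3(1-\alpha)\tau_x+\tau_y+\tau_\theta)/((1-\alpha)^2P^3)$ and $\partial V/\partial\tau_x=2\beta^2\tau_x/P^3$.

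Finally I would assemble $\partial W^0/\partial\tau_y$ and $\partial W^0/\partial\tau_x$ from these pieces. In each the factor $\beta^2/P^3>0$ pulls out; for the $\tau_y$ derivative I would additionally multiply through by $(1-\alpha)>0$ (leaving the sign unchanged since $\alpha<1$) to clear the $(1-\alpha)^{-2}$, reducing the bracket exactly to $(3\eta-2(1-\alpha)(\zeta-1))\tau_x+\eta(\tau_y+\tau_\theta)/(1-\alpha)$, while for the $\tau_x$ derivative the bracket is already $(2\eta-(1-\alpha)(\zeta-1))\tau_x+(\zeta-1)(\tau_y+\tau_\theta)$. The sign equivalences $\partial W^0/\partial\tau_y\gtrless 0$ and $\partial W^0/\partial\tau_x\gtrless 0$ are read off directly, and setting each bracket to zero and solving the resulting linear equation for $\tau_y$ (respectively $\tau_x$) gives $g(\tau_x)$ (respectively $f(\tau_y)$).

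I expect no genuine obstacle beyond careful bookkeeping. The only subtlety is tracking the positive prefactors, in particular the factor $(1-\alpha)$ used to clear denominators in the $\tau_y$ case, so that the stated \emph{strict} sign equivalences remain valid; and recognizing the substitution $s=(1-\alpha)\tau_x+\tau_y$, without which the numerator cancellations are far less transparent.
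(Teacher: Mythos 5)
Your proposal is correct and takes essentially the same approach as the paper: the paper gives no written proof of Lemma~\ref{key partial derivative} beyond the remark that it ``is calculated from \eqref{CV} and \eqref{IV},'' i.e., by direct differentiation of $W^0=\eta V+(\zeta-1)D$ and factoring out positive prefactors, which is precisely what you do (and your intermediate derivatives are consistent with the paper's own computation \eqref{MRTex1}). The substitution $s=(1-\alpha)\tau_x+\tau_y$ is merely a tidy way of organizing the same bookkeeping, and your handling of the strict-sign equivalences and of solving the linear brackets for $g(\tau_x)$ and $f(\tau_y)$ matches the lemma as stated.
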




\begin{proof}[Proof of Proposition \ref{main result robustly optimal -1}]

Suppose that $\eta>0$ and $\eta\geq 2(1-\alpha)(\zeta-1)/3$. 
Consider $F_\kappa$ with $\kappa<\infty$. 
By Lemma~\ref{key partial derivative}, ${\partial W^0}/{\partial \tau_y }>0$.  
For $\tau_y^0<\tau_y^1$, there exist $\tau_x^0$ and $\tau_x^1$ such that 
$W^0(\tau_x^0,\tau_y^0)=\inf_{\tau_x\leq \kappa} W^0(\tau_x,\tau_y^0)=F_\kappa(\tau_y^0)$
and $W^0(\tau_x^1,\tau_y^1)=\inf_{\tau_x\leq \kappa} W^0(\tau_x,\tau_y^1)=F_\kappa(\tau_y^1)$ by Lemma \ref{worst key lemma}
since $W^0$ is continuous and $\kappa<\infty$. 
Then,  $F_\kappa(\tau_y^0)=W^0(\tau_x^0,\tau_y^0)\leq W^0(\tau_x^1,\tau_y^0)<W^0(\tau_x^1,\tau_y^1)=F_\kappa(\tau_y^1)$. 
Thus, $F_\kappa$ is strictly increasing. 
Next, we consider $F_\infty$. 
Note that $\partial W^0/\partial\tau_x>0$ if and only if $\tau_x>f(\tau_y)$ by Lemma~\ref{key partial derivative} since 
\[
2\eta-(1-\alpha)(\zeta-1)\geq 
\begin{cases}
2(\eta-2(1-\alpha)(\zeta-1)/3)>0 &\text{ if }\zeta-1> 0,\\
2\eta>0 & \text{ if }\zeta-1\leq 0.
\end{cases}
\]
Thus, $F_\infty(\tau_y)=\inf_{\tau_x\leq\infty}W^0(\tau_x,\tau_y)<W^0(\infty,\tau_y)$. 
Consequently, $F_\infty(\tau_y^0)<F_\infty(\tau_y^1)$ for $\tau_y^0<\tau_y^1$ 
because there exists $\kappa<\infty$ such that 
$F_\infty(\tau_y^0)=F_\kappa(\tau_y^0)$ and $F_\infty(\tau_y^1)=F_\kappa(\tau_y^1)$; that is, 
$F_\infty$ is strictly increasing.

Suppose that $\eta<0$ and $\eta\leq 2(1-\alpha)(\zeta-1)/3$. 
By a similar argument, $F_\kappa$ with $\kappa<\infty$ is strictly decreasing. 
Consider $F_\infty$. 
Note that $\partial W^0/\partial\tau_x<0$ if and only if $\tau_x>f(\tau_y)$ by Lemma~\ref{key partial derivative} since $2\eta-(1-\alpha)(\zeta-1)<0$. 
Thus, $F_\infty(\tau_y)=\min\{W^0(0,\tau_y),W^0(\infty,\tau_y)\}=W^0(\infty,\tau_y)$ because we can directly verify that $W^0(0,\tau_y)>W^0(\infty,\tau_y)$. 
That is, $F_\infty$ is constant.
\end{proof}


\begin{proof}[Proof of Proposition \ref{main result robustly optimal}]

Suppose that $0<\eta< 2(1-\alpha)(\zeta-1)/3$. 
Note that $\zeta-1>0$. 
Then, 
$F_\kappa(\tau_y)=\min\{W^0(0,\tau_y),W^0(\kappa,\tau_y)\}$ by Lemma \ref{key partial derivative}. 
Because
	\begin{align}W^0(\kappa,\tau_y)-W^0(0,\tau_y)=\frac{\beta^2 \kappa (\eta  \kappa+(\zeta-1)(\tau_y+\tau_\theta))}{(\tau_y+\tau_\theta) ((1-\alpha)\kappa+\tau_y+\tau_\theta)^2}>0,\notag\label{check corner 1}
	\end{align}
we have $F_\kappa(\tau_y)=W^0(0,\tau_y)=\eta V(0,\tau_y)$, which is strictly increasing in $\tau_y$.

Suppose that $0>\eta> 2(1-\alpha)(\zeta-1)/3$. 
Note that $\zeta-1<0$. 
If $2 \eta\leq (1-\alpha)  (\zeta-1) $, then ${\partial W^0}/{\partial \tau_x}<0$ by Lemma \ref{key partial derivative}. Thus, $F_\kappa(\tau_y)=W^0(\kappa,\tau_y)$, and $F_\infty(\tau_y)=W^0(\infty,\tau_y)$ is constant. Assume that $\kappa<\infty$. 
Then, $\partial F_\kappa(\tau_y)/\partial \tau_y=\partial W^0(\kappa,\tau_y)/\partial\tau_y\gtrless 0$ for $\tau_y\lessgtr g(\kappa)$ by Lemma \ref{key partial derivative}. Thus, $\tau_y=\max\{g(\kappa),0\}$ is $\kappa$-robust. 

If $2 \eta>(1-\alpha)  (\zeta-1)$, then ${\partial W^0}/{\partial \tau_x}\lessgtr 0$ for $\tau_x \lessgtr f(\tau_y)$ and $f(\tau_y)>0$ by Lemma \ref{key partial derivative}.  Thus, $F_\kappa(\tau_y)=W^0(f(\tau_y),\tau_y)$ if $f(\tau_y)\leq\kappa$ and $F_\kappa(\tau_y)=W^0(\kappa,\tau_y)$ if $f(\tau_y)>\kappa$. By direct calculation, we have 
\[
\frac{d W^0(f(\tau_y),\tau_y)}{d\tau_y}=
\frac{\beta ^2 (2 \eta-(1-\alpha)  (\zeta -1))^2}{4 (1-\alpha)^2 (\tau_y+\tau_\theta)^2 (\eta-(1-\alpha)  (\zeta -1))}>0
\]
because $\eta>(1-\alpha)  (\zeta -1)/2>(1-\alpha)  (\zeta -1)$. 
That is, $F_\kappa(\tau_y)$ is strictly increasing if $f(\tau_y)\leq\kappa$. 
On the other hand, $f^{-1}(\kappa)-g(\kappa)=-{2 \kappa (\eta-(1-\alpha) (\zeta -1))^2}/{((\zeta -1) \eta )}<0$ because $(\zeta -1) \eta >0$. This implies that $\partial W^0(\kappa,\tau_y)/\partial\tau_y>0$ if $f^{-1}(\kappa)\leq \tau_y\leq g(\kappa)$ 
because $\partial W^0/\partial\tau_y\gtrless 0$ for $\tau_y \lessgtr g(\tau_x)$ by Lemma \ref{key partial derivative}. 
Therefore, $\tau_y=\max\{g(\kappa),0\}$ is $\kappa$-robust. 
\end{proof}

\section{Proofs for Section \ref{Cournot}}

\begin{proof}[Proof of Corollary \ref{Cournot corollary}]
When $(\zeta,\eta)=(1,1)$, 
$\underline{\eta}(\zeta,\lambda)={(\delta +2) \lambda }/({3 \lambda +2})$ and $\eta-\underline{\eta}(\zeta,\lambda)=({\lambda(1-\delta) +2})/({3 \lambda +2})$. 
Thus,  $\eta> \max\{\underline{\eta}(\zeta,\lambda), 0\}$ if and only if $\lambda=0$ or $\delta<\delta^*(\lambda)$, 
and $0<\eta< \underline{\eta}(\zeta,\lambda)$ if and only if $\delta>\delta^*(\lambda)$. 
In addition, $W(\phi(\infty),\infty)\gtrless W(\phi(0),0)$ if and only if $\delta\lessgtr \delta^{**}(\lambda)$ since 
\[
W(\phi(\infty),\infty)-W(\phi(0),0)=
\frac{\phi(0) \left(-\rho \delta ^2 +4 \delta +4\rho +8\right)+4(\rho +1) \tau_\theta}{(\delta +2)^2 (\rho +1) ((\delta +2)\phi(0)+2\tau_\theta)^2}.
\]
Finally, full disclosure is robust because $(\zeta,\eta)=(1,1)$ satisfies (i) in Proposition~\ref{main result robustly optimal -1}. 
\end{proof}

\begin{proof}[Proof of Corollary \ref{beauty contest corollary}]
When $(\zeta,\eta)=(1+r,1-r)$,  $\underline{\eta}(\zeta,\eta)={2 (1-r) ((1+r)\lambda +r)}/({3 \lambda +2})>0$, so $\eta> \underline{\eta}(\zeta,\eta)$ if and only if $r< r^*(\lambda)$. 
\end{proof}

\begin{proof}[Proof of Corollary \ref{beauty contest corollary new}]
When $(\zeta,\eta)=(1+r+1/(\lambda+1),1-r)$,  $\underline{\eta}(\zeta,\eta)=2 (1+\lambda) (1-r^2)/{(3 \lambda +2)}$, so $\eta> \underline{\eta}(\zeta,\eta)$ if and only if $r< r^g(\lambda)$. 
\end{proof}

\end{appendices}

\end{document}